\newtheorem{thm}{Theorem}
\newtheorem{lem}[thm]{Lemma}
\newtheorem{cor}[thm]{Corollary}
\newtheorem{prop}[thm]{Proposition} 
\newtheorem{rem}[thm]{Remark}
\newtheorem{defn}[thm]{Definition}
\newcommand{\parcial}[2]{\frac{\partial#1}{\partial#2}}
  \def \e {{\epsilon}}
\newcommand{\normv}[1]{ \vert #1 \vert }
\date{}
\begin{document}
\setlength{\baselineskip}{16pt}
\title{Indirect \ Influences \ on \ Directed \ Manifolds}
\author{Leonardo Cano \ \ \  \mbox{and}  \ \ \ Rafael D\'\i az}

\maketitle

\begin{abstract}
We introduce a program aimed to studying problems arising from the theory of complex networks with differential geometric means.
We study the propagation of influences on manifolds assuming that at each point only a finite number of propagation velocities are allowed.
This leads to the computation of the volume of the moduli spaces of directed paths, i.e. paths satisfying the imposed tangential restrictions.
The proposed settings  provide a fertile ground for research with potential applications in geometry, mathematical physics, differential equations, and combinatorics.
We establish the general framework, develop its structural properties, and consider a few basic examples of relevance. The interaction
between differential geometry and complex networks is a new and promising field of study.
\end{abstract}

\section{Introduction}

Our aim in this work is to lay down the foundations for the study of the propagation of influences on directed manifolds. Our object of study can be approached from quite different viewpoints as indicated in the following, non-exhaustive, diagram:

\[
\xymatrix @R=.3in  @C=.001in
{
\mbox{Ind. Inf. on Graphs} \ar[dr]& & \mbox{Geometric Control} \ar[ld]\\
&\mbox{Indirect Influences on Directed Manifolds}  & \\
 \mbox{Feyman Integrals} \ar[ur]& & \mbox{Directed Spaces} \ar[lu]
}
\]

\

Our departure point is the theory of indirect influences for weighted directed graphs which has gradually emerged thanks to the efforts of several authors, among them  Brin,  Chung, Estrada, Godet, Hatano, Katz, Page, Motwani, and  Winograd. Although the  history of the subject is yet to be written, for our purposes we may consider the introduction of the  Katz's index \cite{k} as an early modern approach to the problem of understanding the propagation of influences in complex networks. Fundamental developments in the field came with the introduction of the MICMAC \cite{godet}, PageRank \cite{b2,b1}, Communicability \cite{estrada}, and Heat Kernel \cite{chung} methods. In 2009 the second author proposed the PWP method for computing the propagation of influences on networks \cite{diaz}. In a nutshell the method proceeds as follows. We assume as  given a network (weighted directed graph) represented by its adjacency matrix $\ D, \ $also called the matrix of direct influences. Then one defines the matrix $\ T=T(D)\ $ of indirect influences  whose entry $\ T_{ij}\ $ measures the weight of the indirect influences exerted by vertex $\ j \ $ on vertex $\ i.\ $ The matrix $\ T \ $ is computed using the following expression:
$$T \  =  \ \frac{1}{e^{\lambda}-1}\sum_{n=1}^{\infty} D^n \frac{\lambda^n}{n!},$$
where $\ \lambda \ $ is a positive real parameter. In words: indirect influences arise from the concatenation of direct influences, and the weight of a concatenation of length $\ n \ $ comes from the product of $\ n \ $ entries of $\ D \ $ and the factor $\ \frac{\lambda^n}{n!}\ $ ensuring convergency by attaching a rapidly decreasing weight to longer chains of direct influences.
The PWP method has been applied to analyse educational programs, and to study indirect influences in international trade \cite{diazgomez}. Further extensions and applications  are underway. The stability of the method
with respect to changes in the matrix $\ D \ $ and the parameter $\ \lambda \ $ has been  recently  studied in  \cite{diazvargas}. \\

Our first proposal in this work is that one may regard a differential manifold provided with a tuple of vector fields on it -- we call such an object a directed manifold -- as being a smooth analogue of a directed graph with numbered outgoing edges attached to each vertex. Armed with this intuition we pose the question: Is there an extension of the theory of indirect influences from the discrete to the smooth settings? We argue that the answer is in the affirmative, and that such an extension both interplays with many notions already studied in the literature, e.g. control theory \cite{AgrachevSachkov, Jurdjevic, Rifford, Sontag, Zabczyk}, Feynman integrals \cite{SimonFunctional, glimjaffe, Zinn-Justin}, and directed topological spaces \cite{grandis}, and also demands the introduction of new ideas.\\

Constructing a smooth analogue for the PWP matrix of indirect influences -- whose entries are given by sums over paths in a graph -- leads
directly to Feynman path integrals, understood in the general sense of integrals over spaces of paths on manifolds. Although of great interest, we follow an
alternative approach in order to avoid the usual difficulties that have prevented, so far,  the development of a fully rigorous general integration theory over infinite dimensional manifolds. Thus, in order to reduce our computations to finite dimensional integrals, we impose strong tangential conditions on the allowed paths in our domains of integration.
\\

The background upon which we develop our constructions is the category of directed manifolds, introduced in Section \ref{bd}, which is also a convenient category for  studying geometric control theory. Our constructions bring about a new set of problems to geometric control theory -- usually  focus on the path reachability  and path optimization problems -- namely, the problem of computing integrals over the moduli spaces of directed paths. We remark again that strong tangency conditions are imposed in order to insure that the moduli spaces of directed paths -- also called the spaces of indirect influences -- split naturally into infinitely many finite dimensional pieces, each coming with a natural measure. Thus we have a notion of  integration over each piece, which we extend additively to the whole moduli space of directed paths, leaving the convergency of these sums to a case by case analysis. Fortunately, in our examples
we do obtain convergent sums. These ideas are developed in Section \ref{msi}, where we also introduce the wave of influences $\ u(p,t) \ $ which computes the total influence received by a point $\ p \ $ in time $\  t,$ i.e. $\ u(p,t) \ $ computes the volume of the moduli space of directed paths starting at an arbitrary point
and ending up at $\ p \ $ in time $\ t.$ \\

Our notion of directed manifolds is strongly related to the notion of directed spaces introduced by Grandis \cite{grandis}, and to some extend the former notion may be regarded as a smooth analogue of the latter. In Section \ref{msi} we make this connection precise.\\

In Section \ref{iilp} we discuss invariant properties for directed manifolds and for the moduli spaces of directed paths on them. We also study invariant properties with respect to  reordering of our given tuple of vectors fields. We propose a possible route for using our spaces of indirect influences to approach integrals with more general domains of integration, such as Feynman path integrals. Whether this approach can actually be implemented to work as a viable computational technique is left for future research. In Section \ref{iipqm} we study the moduli spaces of directed paths on product and quotient of directed  manifolds.\\

In Section \ref{iicvf} we study the moduli spaces of directed paths arising from constant vector fields on affine spaces. We show that even in this case, the simplest one, our theory yields results worth studying  where explicit computations are available. These settings give rise to fruitful constructions in combinatorics and
probability theory \cite{cd}. \\

Finally, in the closing Section \ref{qii} we indicate how our general settings for computing indirect influences, based on the computation of the volume of moduli spaces of directed paths,   can be extended to the quantum context adopting a Hamiltonian viewpoint. \\

\noindent \textbf{Notation}. For $\ n\in \mathbb{N}, \ $ we set $\ \ [n]=\{1,...,n\}, \ $ $ \ [0,n]  =  \{0,...,n\},  \ $ and
let $\ \ \mathrm{P}[n]\ \ $ be the set of subsets of $\ [n]. \ $
The amalgamated sum of closed subintervals of the real line $\ \mathbb{R}\ $ is given by $$ [a,b]\coprod_{b,c} [c,d] \ = \ [a,b+d-c]. $$ We let $\ \delta_{ab} \ $ be the
Kronecker's delta function.

\section{Basic Definitions}\label{bd}

We let $\ \mathrm{diman} \ $ be the category of directed manifolds.  A directed manifold is a tuple $\ (M, v_1,...,v_k)\ $  where $\ M \ $ is a smooth manifold, and $\ v_1,...,v_k \ $  are smooth vector fields on $M$, with $\ k\geq 1. \ $ A morphism $\ (f,\alpha) : (M, v_1,...,v_k) \longrightarrow (N, w_1,...,w_l)\ $ in $\ \mathrm{diman}\ $ is  a pair $\ (f, \alpha) \ $ where $\ f:M \longrightarrow N\ $ is a smooth map,  $\ \alpha:[k] \longrightarrow [l]\ $ is a map, and the following identity holds  $$df(v_i)\ = \ w_{\alpha(i)}, \ \ \ \ \ \mbox{for} \ \ \  i\in [k].$$
Let $\ (g, \beta): (N, w_1,...,w_l) \longrightarrow (K, z_1,...,z_r)\ $ be another morphism. The composition morphism $\ (g,\beta)\circ (f, \alpha)\ $ is given by:
$$(g,\beta)\circ (f, \alpha) \  =  \ (g \circ f,\beta \circ\alpha).$$ It satisfies the required property since
$$d(gf)(v_i)\  =  \ dg(df(v_i)) \  =  \ dg(w_{\alpha(i)})\  =  \ z_{\beta(\alpha(i))} \  =  \  z_{\beta\circ \alpha(i)}.$$

One can think of a directed manifold $\ (M, v_1,...,v_k)\ $ as being a smooth analogue of a "finite directed graph with up to $k$ outgoing numbered edges at each vertex".  Points in the manifold $\ M \ $ are thought as vertices in the smooth graph. The tangent  vectors $\ v_i(p) \in T_pM \ $ are thought as infinitesimal edges starting at $\ p. \ $ The out-degree of a vertex $\ p \in M\ $ is the number of non-zero infinitesimal edges starting at $\ p, \ $ i.e. the  cardinality of the set $\ \{ i \in [k] \ | \ v_i(p) \neq 0\}. $\\

An actual edge from $\ p \ $ to $\ q, \ $ points in $\ M, \ $ is a smooth path $\ \varphi: [0,t] \longrightarrow M\ $ with $\ \varphi(0)=p, \ $   $\ \varphi(t)=q, \ $ and such that the tangent vector at each point of $\ \varphi \ $ is an infinitesimal edge, i.e. $\ \dot{\varphi} =  v_i(\varphi) \ $ for some $\ i \in [k], \ $ or more explicitly
$$\dot{\varphi}(s) \ = \ v_i(\varphi(s)) \ \ \ \ \ \mbox{for \ all} \ \ \ s  \in [0,t].$$
We say that $\ p \ $ exerts a direct influence, in time $\ t>0, \ $  on the vertex $\ q \ $ through the path $\ \varphi. \ $ Note that $\ \varphi \ $ is determined by the initial point $\ p, \ $ and the index $\ i \ $ of vector field $\ v_i, \ $ thus we are entitled to use the notation $\ \varphi(t)=  \varphi_i(p,t),\ $ where $\ \varphi_i \ $ is the flow generated by the vector field $\ v_i$.\\

\begin{defn}{\em Let $\ (M, v_1,...,v_k)\ $ be a directed manifold and  $\ p, q \in M. \ $ The set of one-direction paths  $\ D_{p,q}(t)\ $ from $p$ to $q$ developed in time $\ t >0 \ $ is given by
$$D_{p,q}(t)\  =  \ \{ i \in [k] \ | \ \varphi_i(p,t)=q\}. $$
We also set
$$D_{p,q}(0) \  =  \
\left\{ \begin{array}{lcl}
\{p\} \ \ \ \   \mbox{if} \ \ p=q,\\
& & \\
\ \emptyset \ \ \ \ \   \mbox{ otherwise},
\end{array}
\right.$$
i.e. each point of $\ M\ $ exerts a direct influence over itself in time $\ t=0, \ $ and there are no $\ t=0 \ $ direct influences between different points of $\ M. \ $ Thus $\ D_{p,q}\ $ defines a map $\   D_{p,q}:\mathbb{R}_{\geq 0} \ \longrightarrow \ \mathrm{P}[k].$ We also say that $\ D_{p,q}(t)\ $ is the set of direct influences
 from $\ p\ $ to $\ q \ $ exerted in time $\ t>0. \ $
}
\end{defn}

There might also be one-direction paths from $\ p \ $ to $\ q \ $ taking an infinite long interval of time to be exerted, these influences occur through a path $\ \varphi: \mathbb{R} \longrightarrow M \ $ such that  $\ \underset{t \rightarrow -\infty}{\mathrm{lim}}\varphi(t) =  p\ $ and $\ \underset{t \rightarrow \infty}{\mathrm{lim}}\varphi(t) =  q. \ $ Semi-infinite direct influences can be similarly defined.
One might also consider topological direct influences from $\ p \ $ to $\ q \ $ which are exerted through a path $\ \varphi: \mathbb{R} \longrightarrow M\ $ such that  $\  p \in \underset{t \rightarrow -\infty}{\omega\mathrm{lim}}\varphi(t) \  $ and $\  q \in \underset{t \rightarrow \infty}{\omega\mathrm{lim}}\varphi(t). \ $
We will no further consider one-direction paths of these types in this work. \\

Next we introduce the notion of indirect influences which arise from the concatenation of direct influences. Our focus is on finding a convenient parametrization for the space of all such concatenations.

 \begin{defn}\label{ii}{\em Let $\ (M, v_1,...,v_k)\ $ be a directed manifold and $\ p,q \in M. \ $
 A directed path from $\ p \ $ to $\ q \ $  displayed in time $\ t>0\ $ through  $\ n\geq 0\ $ changes of directions  is  given by a pair $\ (c,s)\ $ with the following properties:

\begin{itemize}

\item $c=(c_0, c_1,...,c_n)\ $ is a $\ (n+1)$-tuple with $\ c_i \in [k] \ $ and such that $\ c_i \neq c_{i+1}$.  We say that $\ c \ $ defines the pattern (of directions) of the directed path $\ (c,s). \ $ We let $\ D(n,k) \ $ be the set of all such tuples, and $\ l(c)=n+1 \ $ be the length of $\ c. \ $ There are $\ k(k-1)^{n}\ $ different patterns in $\ D(n,k). \ $ Note that we may regard a pattern $\ c \ $ as a map $\ c: [0,n] \longrightarrow [k]. \ $

\item $s=(s_0,...,s_n)\ $  is a $\ (n+1)$-tuple with $\ s_i \in \mathbb{R}_{\geq 0\mathbb{}}\ $ and such that $\ s_0+\cdots+s_n=t. \ $ We say that $\ s \ $ defines the time distribution of the directed path $\ (c,s), \ $ and let $\ \Delta_{n}^t \ $ be the $n$-simplex of all such tuples.

\item The pair $\ (c,s)\ $ determines a $\ (n+2)$-tuple of points $ \ (p_0,\ldots,p_{n+1})\ \in \ M^{n+2} \ $
given by: $$\ p_0\ = \ p \ \ \ \ \ \mbox{and} \ \ \ \ \ p_{i}\ = \ \varphi_{c_{i-1}}(p_{i-1},s_{i-1})  \ \ \ \ \mbox{for}\ \ \ \ 1 \leq i \leq n+1,$$  where
$\ \varphi_{c_{i-1}} \ $ is the flow generated by the vector field $\ v_{c_{i-1}}. \ $
We denote the last point $\ p_{n+1} \ $ by $\ \varphi_c(p,s). \ $

\item The pair $\ (c,s) \ $ must be such that $\ \ \varphi_c(p,s)=  q.$

\item Directed paths in time $\ t=0 \ $ are the same as one-direction paths in time $\ t=0. \ $
\end{itemize}
}
\end{defn}

\begin{rem}{\em By definition directed paths include one-direction paths as well, even for the conventional case $\ t=0. \ $
We also say that $\ (c,s) \ $ determines  an indirect influence from $\ p \ $ to $\ q \ $ exerted in time $\ t. \ $ The fact that our paths are displayed in non-negative time means that indirect influences propagate forward in time.
}
\end{rem}

\begin{rem}\label{gmii}
{\em The geometric meaning of directed paths is made clear through the following construction.
A pair $\ (c,s)\ $ as above determines a piece-wise smooth path $$\varphi_{c,s}:[0, s_0 + \cdots + s_n] \ \ \simeq \ \ [0,s_0] \ \underset{s_0,0}{\bigsqcup}\ \cdots \ \underset{s_{n-1},0}{\bigsqcup} \ [0,s_n] \ \ \longrightarrow \ \ M  $$ such that the restriction of $\ \varphi_{c,s}\ $ to the interval $\ [0,s_i], \ $ for $0 \leq i \leq n, \ $ is given by
$$\varphi_{c,s}|_{[0,s_i]}(r) \  =  \ \varphi_{c_i}(p_{i}, r ) \ \ \ \ \ \mbox{for \ all} \ \ \ r\in [0,s_i].$$
We say that $\ \varphi_{c,s} \ $ is the directed path determined by the pair $\ (c,s). \ $ Indirect influences are exerted through such directed paths. Whenever necessary we write $\ \varphi_{v,c,s} \ $ instead of $\ \varphi_{c,s}\ $ to make explicit that these paths do depend on the vector fields $\ v=(v_1,...,v_k).\ $ Figure 1 shows the directed path associated to a pair
$\ (c,s)$.}
\end{rem}

\

\begin{center}
\psset{unit=0.5cm}
\begin{pspicture}(-20,1)(2,9)
    \pscurve[](-19,2)(-20,3)(-16,8)(-1,6)(0,4)(-1,2)
    (-5,3)
    (-10,2)(-16,2)(-17,2.5)(-18,2.5)(-18.5,2)(-19,2)
    \pscurve[showpoints=true](-18,4)(-14,6)(-11,5)(-7,7)(-3,5.7)
    \rput(-10,-0.5){Figure 1. Directed path associated to a pair $\ (c,s)$.}
    \rput(-18,3.5){$p_0$}
    \psline[linewidth=2pt]{->}(-16.5,5)(-15.2,6)
    \psdot[dotstyle=*,
    dotsize=3pt](-16.5,5) 
    \rput(-15.3,6.2){$v_{c_0}$} 
    \rput{28}(-16,4.7){$\varphi_{c_0}(p_0,s)$}
    \psdot[dotstyle=*,
    dotsize=3pt](-13,5.8)
    \rput(-11.7,4.2){$v_{c_1}$} 
    \psline[linewidth=2pt]{->}(-13,5.8)(-11.5,4.5)
    \rput{-20}(-12,6){$\varphi_{c_1}(p_1,s)$}
    \rput(-14,5.5){$p_1$}
    \psdot[dotstyle=*,
    dotsize=3pt](-9,6)
    \psline[linewidth=2pt]{->}(-9,6)(-8,7)
     \rput(-8.2,7.3){$v_{c_2}$}
     \rput{35}(-8.9,5.5){$\varphi_{c_2}(p_2,s)$}
     \rput(-10.8,4.5){$p_2$}
     \rput(-7,6.5){$p_3$}
     \psdot[dotstyle=*,
    dotsize=3pt](-5.3,6.7)
    \psline[linewidth=2pt]{->}(-5.3,6.7)(-4,6.3)
    \rput(-3.8,6.6){$v_{c_3}$}
    \rput{-20}(-5,6){$\varphi_{c_3}(p_3,s)$}
    \rput(-2.8,5.2){$p_4$}
    \end{pspicture}
\end{center}

\

\

Note that directed paths in the sense above are examples of horizontal paths as defined in geometric control theory \cite{AgrachevSachkov}.

\begin{rem}{\em Our notion of indirect influences on directed manifolds may be regarded as a limit case of the propagation of disturbances in geometric optics, see Arnold \cite{arnold}. In geometric optics one works with a Riemannian manifold $\ M, \ $ and is given a map $\ v:SM \longrightarrow \mathbb{R}_{\geq 0}\ $ from the unit sphere bundle of $\ M \ $ to the non-negative real numbers. The number $\ v(l)\ $ gives the speed allowed for the propagation of a disturbance  along the direction $\ l. \ $ Indirect influences on a directed manifold $\ (M,v_1,...,v_k)\ $ correspond to the propagation of disturbances in geometric optics, if one lets $\ v \ $ be the singular map that is zero everywhere except at the directions defined by the vector fields $\ v_j, \ $ and on this directions it assumes the values $\ |v_j|. \ $ Note  that the notion of indirect influences does not demand a Riemannian structure on $M$. Figure 2 illustrates the relation between indirect influences and geometric optics, by displaying the deformation of the indicatrix surface
(the image of $\ v$) from a smooth ellipses to a curve concentrated on three vectors.

}
\end{rem}

\

\

\begin{center}
\psset{unit=0.5cm}
\begin{pspicture}(-20,1)(2,7)
        \psline[linewidth=1pt]{->}(-2,5)(-2,7)
        \psline[linewidth=1pt]{->}(-2,5)(1,5)
        \psline[linewidth=1pt]{->}(-2,5)(-3.5,4)
        \pscurve[](-2.3,6)(-2,7.2)(-1.8,7.2)
        (-1.8,5.8)(-1.8,5.2)(-1.2, 5.2)(0.8,5.2)(1.2,5.2)(1.2,4.8)
        (1,4.8) (0.8,4.7)(-1.2, 4.8)
        (-1.8,4.8)(-2,4.8)(-3,3.9)(-3.7,3.9)
        (-3.7,4.1)(-3,4.7)(-2.2,5.1)(-2.3,6)
            \rput(-10,0.5){Figure 2. Indirect influences as a limit case of propagations in geometric optics.}
    \psellipse(-17,5)(3,2)              
    \psline[linewidth=1pt]{->}(-17,5)(-17,7)
    \psline[linewidth=1pt]{->}(-17,5)(-14,5)
    \psline[linewidth=1pt]{->}(-17,5)(-19,3.6)
    \psline[linewidth=1pt]{->}(-9,5)(-6,5)
    \psline[linewidth=1pt]{->}(-9,5)(-9,7)
    \psline[linewidth=1pt]{->}(-9,5)(-10.5,4)
    \pscurve[](-8.4,6)(-8.3,6.5)
    (-8.4,7.2)(-9.5,7.2)(-9.5,5.5)
    (-10,4.8)(-10.8,4.4)(-11.1,3.8)(-10,3.5)(-9,4.2)
    (-8,4.5)(-6,4.5)(-5.6,4.7)(-5.6,5.4)(-7,5.4)(-8,5.4)(-8.4,5.5)(-8.4,5.7)(-8.4,6)
\end{pspicture}
\end{center}

\

\begin{rem}\label{ifm}{\em Although not strictly necessary, for simplicity we usually assume   that the flows generated by the vector fields $\ v_j \ $ are globally defined by smooth maps $$\ \varphi_j(\ , \ ): M \times \mathbb{R} \ \longrightarrow  \ M \ $$ yielding a one-parameter group of diffeomorphisms of $M$:
\begin{itemize}
\item The map $\ \varphi_j( \ , s): M \longrightarrow M\ $ is a diffeomorphism for all $\ s \in \mathbb{R}$.
\item The group condition  $\  \varphi_j(\varphi_j(p,s_1), s_2)\ =  \ \varphi_j(p, s_1+s_2) \ \ \ \mbox{holds \ for} \ \ s_1, s_2 \in \mathbb{R}.$
\end{itemize}
A pattern $\ c \in D(n,k) \ $ defines an iterated flow given by the smooth  map
$$\varphi_c: M \times \mathbb{R}^{n+1}\  \longrightarrow \ M$$ defined by recursion on the length of $\ c \ $ as follows:
$$\varphi_c(p, s_0,...,s_n) \   = \  \varphi_{c_n}(\varphi_{c|_{[0, n-1]}}(p, s_0,...,s_{n-1}), s_n).$$
Fixing a time distribution $\ (s_0,...,s_n)\ $ we obtain the diffeomorphism $$\varphi_c( \ \ , s_0,...,s_n): \ M \ \longrightarrow \ M .$$
These construction justify the notation $\ \varphi_c(p,s)\ $ for the point $\ p_{n+1}(c,s)\ $ introduced in Definition \ref{ii}.\\
}
\end{rem}

We regard the $n$-simplex $\ \Delta_{n}^t\ $ introduced in Definition \ref{ii} as a smooth manifold with corners. There are at least three different approaches to differential geometry on manifolds with corners. First we can apply differential geometric notions on the interior of $\ \Delta_{n}^t. \ $ Also it is possible to introduce differential geometric objects on  $\ \Delta_{n}^t \ $ by considering objects that are smooth on an open neighborhood of $\ \Delta_{n}^t\ \ $ in
$\ \ \mathbb{R}^{n+1}. \ $ A third and more intrinsic approach for doing differential geometry on  $\ \Delta_{n}^t \ $ relies on deeper results in the theory of manifolds with corners. For a fresh approach the  reader may consult \cite{joyce}. Although this more comprehensive approach is certainly desirable, for simplicity, we will not further consider it.
\begin{prop}{\em
For a pattern $\ c \in D(n,k), \ $ the map
$$\varphi_c: M \times \Delta_{n}^t \ \longrightarrow \ M$$ sending a pair $\ (p,s) \in M \times \Delta_{n}^t \ $ to the point $\ \varphi_c(p,s) \in M \ $ is a smooth map and a diffeomorphism for a fixed time distribution $\ s \in \Delta_{n}^t.$ }
\end{prop}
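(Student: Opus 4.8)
The plan is to reduce everything to standard facts about flows of smooth vector fields and to the recursive structure of $\ \varphi_c\ $ already set up in Remark \ref{ifm}. First I would recall that, under our standing assumption that each $\ v_j\ $ generates a globally defined flow $\ \varphi_j:M\times\mathbb{R}\to M, \ $ the map $\ \varphi_j\ $ is smooth as a map of two variables (this is the classical smooth dependence of solutions of ODEs on initial conditions and time), and that for each fixed $\ s\ $ the partial map $\ \varphi_j(\ ,s):M\to M\ $ is a diffeomorphism with inverse $\ \varphi_j(\ ,-s). \ $ These two facts are the only analytic input needed; the rest is composition bookkeeping.

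Next I would argue by induction on the length $\ l(c)=n+1. \ $ For $\ n=0\ $ the map $\ \varphi_c=\varphi_{c_0}:M\times\mathbb{R}\to M\ $ is smooth and, restricted to $\ M\times\Delta_0^t\ $ (a single point in the time variable, $\ s_0=t$), is simply the diffeomorphism $\ \varphi_{c_0}(\ ,t). \ $ For the inductive step, write $\ c'=c|_{[0,n-1]}\ $ and use the recursion $$\varphi_c(p,s_0,\dots,s_n)\ = \ \varphi_{c_n}\bigl(\varphi_{c'}(p,s_0,\dots,s_{n-1}),\, s_n\bigr).$$ By the inductive hypothesis $\ \varphi_{c'}\ $ is smooth on $\ M\times\mathbb{R}^n\ $ (or on a neighborhood of the relevant simplex, depending on which of the three viewpoints on manifolds with corners one adopts — I would explicitly note that the map extends smoothly to an open neighborhood of $\ \Delta_n^t\ $ in $\ \mathbb{R}^{n+1}, \ $ so smoothness-with-corners is not an issue). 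Then $\ \varphi_c\ $ is a composition of smooth maps: the pair $\ (p,s_0,\dots,s_n)\mapsto(\varphi_{c'}(p,s_0,\dots,s_{n-1}),s_n)\ $ followed by $\ \varphi_{c_n}, \ $ hence smooth.

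For the diffeomorphism claim, fix a time distribution $\ s=(s_0,\dots,s_n)\in\Delta_n^t. \ $ Then $\ \varphi_c(\ ,s):M\to M\ $ is the composition $$\varphi_c(\ ,s)\ = \ \varphi_{c_n}(\ ,s_n)\circ\varphi_{c_{n-1}}(\ ,s_{n-1})\circ\cdots\circ\varphi_{c_0}(\ ,s_0),$$ a finite composition of diffeomorphisms of $\ M, \ $ hence itself a diffeomorphism; its inverse is the composition in reverse order of the maps $\ \varphi_{c_i}(\ ,-s_i). \ $ (Here I would remark that even when some $\ s_i=0, \ $ the map $\ \varphi_{c_i}(\ ,0)=\mathrm{id}_M\ $ is still a diffeomorphism, so the statement holds on all of $\ \Delta_n^t, \ $ boundary included.) This also re-proves, as a byproduct, the assertion in Remark \ref{ifm} that $\ \varphi_c(\ ,s)\ $ is a diffeomorphism.

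The only genuinely delicate point is the interaction with the corner structure of $\ \Delta_n^t: \ $ "smooth on $\ M\times\Delta_n^t$" must be interpreted, and the cleanest route is to observe that the recursion defines $\ \varphi_c\ $ on all of $\ M\times\mathbb{R}^{n+1}\ $ (thanks to the global-flow hypothesis of Remark \ref{ifm}), so its restriction to the closed subset $\ M\times\Delta_n^t\ $ is automatically smooth in the "restriction of an ambient smooth map" sense, which is the second of the three approaches discussed just before the proposition. I expect this bookkeeping about which notion of smoothness is in force to be the main thing to get right; the analytic content — smooth dependence of flows on time and initial data, and the group law giving invertibility — is entirely classical and can be cited.
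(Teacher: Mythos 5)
Your proof is correct and is exactly the argument the paper intends: the proposition is left unproved in the text precisely because Remark \ref{ifm} already exhibits $\varphi_c$ as a recursive composition of the globally defined smooth flows $\varphi_{c_i}$, so smoothness follows from composition and the fixed-time map is a composition of the diffeomorphisms $\varphi_{c_i}(\ ,s_i)$ with inverse built from the group law. Your additional care about the corner structure (extending to $M\times\mathbb{R}^{n+1}$ and restricting) correctly matches the second of the three viewpoints the paper adopts just before the proposition.
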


Next we introduce the main objects of study in this work, namely, the moduli spaces of directed path, also called the spaces of indirect influences, on directed manifolds. These spaces parametrize directed paths
from a given point to another.

\begin{defn}\label{ii}
{\em Let $\ (M, v_1,...,v_k) \ $ be a directed manifold and $\ p,q \in M. \ $ The moduli  space  $\ \Gamma_{p,q}(t)\ $ of directed paths  from $\ p \ $ to
$\ q \ $ developed in time $\ t>0 \ $ is given by
$$\Gamma_{p,q}(t)\  = \ \Big\{ (c,s) \ \Big| \ \varphi_c(p,s)=q\Big\}  \ = $$
 $$ \coprod_{n=0}^{\infty}\coprod_{c \in D(n,k)} \{ s \in \Delta_{n}^{t} \ | \ \varphi_c(p,s)=q\} \  \ = \ \
\coprod_{n=0}^{\infty}\coprod_{c \in D(n,k)} \Gamma_{p,q}^c(t). $$ In addition we set
$$\Gamma_{p,q}(0) \  =  \ \Gamma_{p,q}^{\emptyset}(0) \  =  \
\left\{ \begin{array}{lcl}
\{p\} \ \ \ \   \mbox{if} \ \ p=q,\\
& & \\
\ \emptyset \ \ \ \ \   \mbox{ otherwise},
\end{array}
\right.$$
Figure 3 \ shows a schematic picture of a component $\ \Gamma_{p,q}^c(t) \ $ of the moduli space of indirect influences.
}
\end{defn}

\begin{center}
\psset{unit=0.5cm}
\begin{pspicture}(-20,1)(2,9)
        \psline[linewidth=1pt]{}(-12,3)(-9,7)
        \psline[linewidth=1pt]{}(-9,7)(-5.5,3.5)
        \psline[linestyle=dashed,dash=3pt 2pt]{}(-12,3)(-5.5,3.5)
        \psline[linewidth=1pt]{}(-12,3)(-8,2)
        \psline[linewidth=1pt]{}(-8,2)(-5.5,3.5)
        \psccurve[fillstyle=solid,fillcolor=lightgray](-11,4.2)(-9.5,4.3)(-8.5,3.5)(-7,5)(-9,5)
        \psline[linewidth=1pt]{}(-9,7)(-8,2)
        \pscurve[linewidth=1pt]{->}(-8,4.5)(-6,6)(-5,6)
        \rput(-3,6){$\varphi_{c}(p,s)=q$}
         \rput(-10,0.5){Figure 3. Moduli space of directed paths $\ \Gamma_{p,q}^c(t). \ $ }
    \end{pspicture}
\end{center}

\

\begin{rem}{\em For a fixed pattern $\ c \ $ the continuity of the iterated flow $\ \varphi_c(p, \ )\ $ implies that the moduli space of directed paths $\ \Gamma_{p,q}^c(t)\ $ is compact, as it is a  closed subspace of $\ \Delta_{n}^{t}.$
}
\end{rem}

The moduli spaces of directed paths  come equipped  with the structure of a category. Indeed directed paths are pretty close of being the free category generated by one-direction path, but not quite since we have ruled out repeated directions.

\begin{thm}{\em Altogether the moduli spaces of directed paths on a directed manifold  form a topological category.
}
\end{thm}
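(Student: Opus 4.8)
The plan is to construct the topological category explicitly and then verify the axioms. First I would specify the data: the objects are the points of $M$ (equivalently, the components $\Gamma_{p,p}(0)=\{p\}$), and for objects $p,q\in M$ and $t\geq 0$ the morphisms are the elements of $\Gamma_{p,q}(t)$; more precisely a morphism from $p$ to $q$ is a triple $(t,c,s)$ with $t\geq 0$ and $(c,s)\in\Gamma_{p,q}(t)$, together with the convention that $(0,\emptyset,\ast)$ is the identity at $p$. The total morphism space is then $\mathrm{Mor}=\coprod_{p,q\in M}\coprod_{t\geq 0}\Gamma_{p,q}(t)$, which I would topologize as a subspace of $M\times M\times\mathbb{R}_{\geq 0}\times\big(\coprod_{n\geq 0}\coprod_{c\in D(n,k)}\Delta_n^t\big)$, using the disjoint-union topology over patterns and the evident topology on the base $M\times M\times\mathbb{R}_{\geq 0}$; the subspace topology is forced by the defining condition $\varphi_c(p,s)=q$, which is closed by the continuity established in the preceding Proposition and Remark \ref{ifm}.

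Next I would define composition. Given $(t,c,s)\in\Gamma_{p,q}(t)$ and $(t',c',s')\in\Gamma_{q,r}(t')$ with $c=(c_0,\dots,c_n)$ and $c'=(c'_0,\dots,c'_m)$, the composite should be the concatenated directed path displayed in time $t+t'$. The only subtlety is the constraint $c_i\neq c_{i+1}$ at the junction: if $c_n\neq c'_0$ one sets the composite pattern to be $(c_0,\dots,c_n,c'_0,\dots,c'_m)$ with time distribution $(s_0,\dots,s_n,s'_0,\dots,s'_m)$; if $c_n=c'_0$ one merges the two segments using the group condition $\varphi_{c_n}(\varphi_{c_n}(q',s_n),s'_0)=\varphi_{c_n}(q',s_n+s'_0)$ from Remark \ref{ifm}, producing the pattern $(c_0,\dots,c_{n-1},c_n,c'_1,\dots,c'_m)$ with time coordinate $s_n+s'_0$ at the junction, and one iterates this reduction in the degenerate case where further collisions cascade (or simply handle the boundary simplices where $s_n=0$ uniformly by the same merging rule). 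I would check directly that the resulting pair lies in $\Gamma_{p,r}(t+t')$: the endpoint is $\varphi_{c'}(q,s')=r$ by construction, using that $\varphi_c(p,s)=q$.

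Then I would verify the category axioms. Associativity: concatenation of sequences is associative, and the merging-at-junction rule is local to each junction, so composing three directed paths in either order yields the same underlying pattern and time distribution; this is a routine induction on the number of segments, with the one point requiring care being that two successive merges commute, which again follows from the one-parameter group property. Identity: the time-$0$ one-direction path $(0,\emptyset,\ast)$ at $p$ acts trivially on either side, since concatenating an empty pattern changes nothing and, in the merged formulation, adds a zero time coordinate that is immediately absorbed. Continuity of composition: the map $\mathrm{Mor}\times_{\mathrm{Ob}}\mathrm{Mor}\to\mathrm{Mor}$ is continuous because on each stratum (fixed pair of patterns, fixed collision type at the junction) it is given by a continuous formula — either a coordinate shuffle or an addition of two time coordinates, composed with the continuous iterated flows — and these strata are closed and cover the fiber product compatibly; continuity of source, target, and identity maps is immediate from the definition of the topology.

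The main obstacle I expect is the bookkeeping around the non-repetition condition $c_i\neq c_{i+1}$, which makes $\mathrm{Mor}$ fail to be simply the free category on one-direction paths and forces the piecewise definition of composition; one must check that the merging rule is well defined, that it does not break continuity across the boundary strata $\{s_n=0\}$ of the simplices, and that the cascading degenerate case (a whole block collapsing) is handled consistently. A clean way to sidestep most of this is to observe that the merging operation is exactly the identification already built into the amalgamated-sum description $[0,s_0]\sqcup_{s_0,0}\cdots\sqcup_{s_{n-1},0}[0,s_n]$ of Remark \ref{gmii}: composition of directed paths corresponds to honest concatenation of the associated piecewise-smooth paths $\varphi_{c,s}$, and the reduction to a pattern in some $D(m,k)$ with a time distribution in $\Delta_m^{t+t'}$ is just the canonical normal form of that concatenated path. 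Phrasing the proof in terms of $\varphi_{c,s}$ rather than the combinatorial pair $(c,s)$ makes associativity and the identity laws transparent, leaving only the continuity of the normalization map to be checked stratum by stratum.
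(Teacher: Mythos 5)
Your construction is the same as the paper's: objects are points of $M$, morphisms are the pairs $(c,s)$ together with their total time, and composition is defined by concatenation with the two-case merging rule at the junction (using the one-parameter group property when $c_n=c'_0$), with the $t=0$ path as identity. The only place you go beyond what is needed is the worry about cascading collisions and the boundary strata $\{s_n=0\}$: since patterns already allow zero time entries and the composite pattern is determined by the fixed pair of patterns on each component of the disjoint union, only the single junction can collide and continuity holds componentwise, so your argument is correct and essentially identical to the paper's.
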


\begin{proof}Given a directed manifold $\ (M, v_1,...,v_k)\ $ we let $\ \Gamma =  \Gamma(M, v_1,...,v_k)\ $ be the category of directed paths on $\ M. \ $ The objects of $\ \Gamma \ $  are the points of $\ M. \ $ Given $\ p,q \in M, \ $ the  space of morphisms in  $\ \Gamma \ $ from $\ p \ $ to $\ q \ $ is given by
$$\Gamma_{p,q} \  =  \ \coprod_{n \in \mathbb{N}}\coprod_{c\in D(n,k)}\Gamma_{p,q}^c
\ \ \ \ \ \mbox{where} \ \ \ \ \ \Gamma_{p,q}^c \ = \ \{(s,t) \in \mathbb{R}_{\geq 0}^{n+2} \ \ | \ \ s \in \Delta_n^t,  \ \ \varphi_{c}(p,s)=q\}.
$$ In order to define continuous composition  maps $ \ \ \circ: \Gamma_{p,q}\ \times \ \Gamma_{q,r} \ \longrightarrow \ \Gamma_{p,r} , \ \ $ it is enough to define componentwise
composition maps $$\circ:\Gamma_{p,q}^c \ \times \ \Gamma_{q,r}^d \ \longrightarrow  \ \Gamma_{p,r}^{c\ast d}$$
for given patterns $\ c \ $ and $\ d \ $ with $\ n=l(c)\ $ and $\ m=l(d). \ \ $ We  consider two cases: \\

\noindent $\bullet$ If $\ c_{n} \neq d_0, \ $ then $\ c\ast d \ = \ (c,d)\ $ and
$$(s_0,...,s_n) \circ (u_0,...,u_m) \   =  \ (s_0,...,s_n, u_0,...,u_m).$$

\noindent $\bullet$ If $\ c_{n} = d_0, \ $ then $\ c\ast d \  =  \ (c_0,..., c_n) \ast (d_0,...,d_m) \  =  \ (c_0,..., c_n,d_1,...,d_m)\ \ $ and
$$(s_0,...,s_n) \circ (u_0,...,u_m) \  =   \ (s_0,...,s_n+ u_0,...,u_m).$$
These compositions are well-defined continuous maps satisfying the associative property.
The unique $\ t=0\ $ directed path from  $\ p \in M \ $ to itself gives the identity morphism  for each object $\ p \in \Gamma.$
\end{proof}

\begin{rem}{\em The moduli spaces of directed paths $\ \Gamma_{p,q}(t) \ $ can be extended from points to arbitrary subsets of $\ M \ $ as follows. Given $\ A,B \subseteq M \ $ we define the moduli space of directed paths from $\ A \ $ to $\ B  \ $ as
$$\Gamma_{A,B}(t)\  =  \ \Big\{ (c,s) \ \Big| \ p\in A, \ \ \varphi_{c}(p,s) \in B \Big\} \  = $$  $$ \coprod_{n=0}^{\infty}\coprod_{c \in D(n,k)} \{ s \in \Delta_{n}^{t} \ | \ p \in A, \ \varphi_c(p,s)\in B\} \  =  \
\coprod_{n=0}^{\infty}\coprod_{c \in D(n,k)} \Gamma_{A,B}^c(t). $$
Restricting attention to embedded oriented submanifolds of $M$, and following techniques from Chas and Sullivan's string topology \cite{chas-sullivan}, this construction gives rise to some kind of transversal category.
}
\end{rem}

We close this section introducing a few subsets of $\ M \ $  useful for understanding the propagation of influences on $\ M. \ $ These sets are usually called the reachable sets in geometric control theory, and are natural generalizations of the corresponding graph theoretical notions.
They also play a prominent role in general relativity \cite{penrose}.  For $\ A \subseteq M \ $ we set:

\begin{itemize}
\item $\Gamma_A(t)  =  \{q \in M \ | \ \Gamma_{A,q}(t) \neq \emptyset \}\ $  is the set of points in $\ M \ $ influenced by $\ A \ $ in time $\ t$.
\item $\Gamma_{A,\leq}(t)  =  \{q \in M \ | \ \mbox{there is} \ \ 0 \leq s \leq t, \ \ \mbox{such that} \ \ \Gamma_{A,q}(s) \neq \emptyset \}\ $ is the set of points in $\ M \ $ influenced by $\ A \ $ in time less or equal to $\ t$.
\item $\Gamma_A =  \{q \in M \ | \ \Gamma_{A,q}(t) \neq \emptyset \ \  \mbox{for some} \ \  t \geq 0\}\ $ is the set of points in $\ M \ $ that are influenced by $\ A. $
\item $\Gamma_A^-(t) =  \{q \in M \ | \ \Gamma_{q,A}(t) \neq \emptyset \}\ $ is the set of points in $\ M \ $ that influence $\ A \ $ in time $\ t, \ $ i.e.  the set of points on which $A$ depends on time $\ t.$
\item $\Gamma_{A, \leq}^{-}(t)  = \{q \in M \ | \ \mbox{there is} \ \ 0 \leq s \leq t, \  \mbox{such that} \ \Gamma_{q,A}(s) \neq \emptyset \}\ $ is the set of points in $\ M \ $ that influence $\ A \ $ in time less or equal to $\ t.$
\item $\Gamma_A^-  =  \{q \in M \ | \ \Gamma_{q,A}(t) \neq \emptyset \ \  \mbox{for some} \ \  t \geq 0\}\ $ is the set of points in $\ M \ $ that influence $\ A. $
\item $\mathrm{F}_A(t)= \partial \Gamma_{A,\leq}(t)\ \ $  and $\ \ \mathrm{F}_A^-(t)=\partial \Gamma_{A, \leq}^{-}(t) $ are called, respectively, the front of influence and the front of dependence of $\ A \ $ in time $\ t$.
\end{itemize}

Note that a directed manifold $\ M \ $ is naturally a pre-poset by setting
$$p\leq q \ \ \ \ \ \mbox{if \ and \ only \ if} \ \ \ \ \  q \in  \Gamma_p.$$ The associated poset is the quotient space
$\ M_{\sim},\ $ where the equivalence relation $\ \sim\ $ on $\ M \ $ is given by
$$p \sim q \ \ \ \ \ \mbox{if \ and \ only \ if} \ \ \ \ \ q\in \Gamma_{p} \ \ \ \mbox{and} \ \ \ p \in \Gamma_q.$$ The space $\ M_{\sim}\ $ tell us how $\ M \ $ splits into components of co-influences, i.e. the path connected components of $\ M \ $ through directed paths.\\

Note that a directed manifold $\ (M,v_1,...,v_k)\ $ comes equipped with a natural distribution, indeed for each point $\ p \in M \
 $ we have the subspace
$$<v_1(p),...,v_k(p)> \ \ \subseteq \ \ T_pM$$
generated by the vectors $\ v_1(p),...,v_k(p).\ $ If this distribution is integrable, then directed paths are confined to live on the leaves.
Thus to study the moduli spaces of directed paths, in the integrable case, we may as well forget about the manifold $\ M \ $ and work leaf by leaf. So the interesting
cases of study are:
\begin{itemize}
\item $<v_1(p),...,v_k(p)> \  =  \ T_pM, \ $ i.e. $\ M \ $ itself is the unique leaf.
\item The distribution $\ <v_1(p),...,v_k(p)> \  \subseteq  \ T_pM\ $ is not integrable.
\end{itemize}

\section{Measuring the Moduli Spaces of Directed Paths}\label{msi}

Fix a directed manifold $\ (M, v_1,...,v_k). \ $  In order to measure directed paths on $\ M \ $  we assume from now on  that an orientation on $\ M \ $ has been chosen. To gauge the amount of indirect influences exerted, in time $t$, by a point $\ p\in M \ $ on a point $\ q \in M\ $  we need to define measures on the modulis spaces $\ \Gamma_{p,q}(t)\ $ of directed paths. From Definition \ref{ii}
we see that $\ \Gamma_{p,q}(t) \ $ is a disjoint union of pieces, one for each pattern $\ c \in D(n,k), \ $ of the form
$$\Gamma_{p,q}^c(t)\ =  \ \{ s \in \Delta_{n}^{t} \ \  | \  \ \varphi_c(p,s)=q\}.$$
So, our problem reduces to imposing measures on the pieces $\ \Gamma_{p,q}^c(t)$.\\

The $n$-simplex $\ \Delta_{n}^{t}\ $ is a smooth manifold with corners, and  comes equipped with a Riemannian metric and its associated volume form. Indeed using Cartesian coordinates $$l_1=s_0,\ \  \ l_2=s_0+s_1,\ \ \  ....... \ \ \ , \ \ \  l_{n}=s_0+ \cdots + s_{n-1}$$
the $n$-simplex can be identified with the following subset of $\ \mathbb{R}^{n}:$
$$\Delta_{n}^{t} \  =  \ \Big\{(l_1,...,l_{n}) \in \mathbb{R}^{n} \ \ | \ \  0\leq l_1 \leq l_2 \leq ...... \leq l_{n} \leq t\Big\}.$$
Thus $\ \Delta_{n}^{t} \ $ inherits a Riemaniann metric, an orientation, and the corresponding volume form   $\ dl_1 \wedge\cdots \wedge dl_{n}.$ With this measure we have that
$$\mathrm{vol}(\Delta_{n}^{t}) \  =  \ \frac{t^{n}}{n!}  \ \ \ \ \ \ \mbox{for} \ \ \ \ \ \ n \geq 0.$$

\begin{defn}{\em A directed manifold $\ (M, v_1,...,v_k)\ $ has smooth spaces of directed path if for any pattern $\ c \in D(n,k)\ $ and points $\ p,q \in M\ $ the space of indirect influences $\ \Gamma_{p,q}^c(t) \ $ is a smooth embedded sub-manifold of $\ \Delta_{n}^{t}. \ $ }
\end{defn}

For our next result we  use the implicit function theorem for manifolds  \cite{Guillemin, Warner}. Let $\ f:N \longrightarrow M\ $ be a smooth map between differential manifolds and fix $\ q \in M. \ $ Then $\ f^{-1}(q)\ $
is a smooth sub-manifold of $\ N \ $ if for each $ \ p\in f^{-1}(q)\ $ the linear map $$ d_pf: T_pN \longrightarrow T_{q}M $$ has maximal rank, that is
$$\mathrm{rank}(d_pf)\  =  \ \mathrm{min}\{{\mathrm{dim}(N) , \mathrm{dim}(M)}\}.$$
If $\ \mathrm{rank}(d_pf) =   \mathrm{dim}(N), \ $ then $\ d_pf \ $ is injective,  $\ f \ $ is an immersion, and $\ f^{-1}(q) \ $ is a set of isolated points. If $\ \mathrm{rank}(d_pf) =  \mathrm{dim}(M),\ $ then $\ d_pf \ $ is surjective, $\ f \ $ is a submersion, and $\ f^{-1}(q)\ $ is a sub-manifold of $\ N \ $ of dimension
$\ \ \mathrm{dim}(M) - \mathrm{dim}(N).$ \\

Next we apply this result to the open part of manifolds with corners.

\begin{thm}\label{sii}{\em  Let $\ (M, v_1,...,v_k)\ $ be a directed manifold.  Fix a pattern $\ c \in D(n,k)\ $ with $\ n\geq 1,\ $ and a point $\ p \in M.\ $ If for any $\ (s_0,...,s_n)\ $ in the open part of  $\ \Gamma_{p,q}(t)\ $ there are $\ \mathrm{min}(n,\mathrm{dim}(M))\ $ linearly independent vectors among the  vectors given for $\ i \in [0,n-1] \ $ by
$$d^M\varphi_{(c_{i+1},...,c_n)}\Big|_{(s_{i+1},...,s_n)}\big[v_{c_i}(\varphi_{c_0,..., c_i}(s_0,...,s_i))\big] \  -  \ v_{c_n}(\varphi_c(s_0,...,s_n))
\ \ \in \ T_{\varphi_{c}(p,s_0,...,s_n)},$$
 then $\ \Gamma_{p,q}^c(t) \ $ is a smooth sub-manifold of $\ \Delta_{n}^{t}$.
}
\end{thm}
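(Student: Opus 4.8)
The plan is to realize $\ \Gamma_{p,q}^c(t), \ $ over the interior of the simplex, as the preimage of the single point $\ q \ $ under a smooth map, and then invoke the implicit function theorem recalled just above. Let $\ U \ $ denote the open part of $\ \Delta_n^t. \ $ In the Cartesian coordinates $\ l_1<\cdots<l_n \ $ this is an open subset of the affine hyperplane $\ H=\{(s_0,\ldots,s_n)\in\mathbb{R}^{n+1}:s_0+\cdots+s_n=t\}, \ $ hence a smooth $n$-manifold whose tangent space at every point is $\ H_0=\{(\dot s_0,\ldots,\dot s_n):\dot s_0+\cdots+\dot s_n=0\}, \ $ for which $\ \{e_i-e_n:0\le i\le n-1\} \ $ is a basis. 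By the preceding Proposition the map $\ F:=\varphi_c(p,\,\cdot\,):U\longrightarrow M \ $ is smooth, and $\ \Gamma_{p,q}^c(t)\cap U=F^{-1}(q).$

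The crux is the computation of $\ d_sF. \ $ Writing $\ p_{i+1}=\varphi_{(c_0,\ldots,c_i)}(p,s_0,\ldots,s_i) \ $ for the intermediate point and grouping the remaining steps into the diffeomorphism $\ \varphi_{(c_{i+1},\ldots,c_n)}(\,\cdot\,,s_{i+1},\ldots,s_n):M\to M, \ $ the parameter $\ s_i \ $ enters only through the last flow defining $\ p_{i+1}; \ $ since $\ \varphi_{c_i} \ $ is the flow of $\ v_{c_i}, \ $ one gets $\ \partial_{s_i}p_{i+1}=v_{c_i}(p_{i+1}), \ $ and the chain rule yields, for $\ i\in[0,n-1], \ $
\[
\partial_{s_i}\varphi_c(p,s)\;=\;d^M\varphi_{(c_{i+1},\ldots,c_n)}\big|_{(s_{i+1},\ldots,s_n)}\big[v_{c_i}(p_{i+1})\big],\qquad \partial_{s_n}\varphi_c(p,s)\;=\;v_{c_n}(\varphi_c(p,s)).
\]
Consequently $\ d_sF(e_i-e_n)=\partial_{s_i}\varphi_c(p,s)-\partial_{s_n}\varphi_c(p,s) \ $ for $\ 0\le i\le n-1, \ $ which is precisely the family of vectors written in the statement. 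Hence $\ \mathrm{rank}(d_sF) \ $ equals the dimension of their span, and the hypothesis says this equals $\ \mathrm{min}(n,\mathrm{dim}(M))=\mathrm{min}(\mathrm{dim}(U),\mathrm{dim}(M)) \ $ at every $\ s\in F^{-1}(q), \ $ i.e. $\ d_sF \ $ has maximal rank there. The implicit function theorem then shows $\ F^{-1}(q)=\Gamma_{p,q}^c(t)\cap U \ $ is a smooth embedded submanifold of $\ U \ $ — a set of isolated points when $\ n\le\mathrm{dim}(M), \ $ and of dimension $\ n-\mathrm{dim}(M) \ $ otherwise.

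The differential computation above I expect to be routine but bookkeeping-heavy; the one delicate point there is keeping the $M$-direction differential $\ d^M\varphi \ $ of the partial flows cleanly separated from their time derivatives, and checking that the regrouping $\ \varphi_c(p,s)=\varphi_{(c_{i+1},\ldots,c_n)}\big(\varphi_{(c_0,\ldots,c_i)}(p,s_0,\ldots,s_i),s_{i+1},\ldots,s_n\big) \ $ genuinely isolates the dependence on $\ s_i. \ $ The real issue — more a matter of scope than a genuine obstacle — is the behaviour at the corners: the hypothesis constrains only the open part, so what the argument strictly yields is that $\ \Gamma_{p,q}^c(t) \ $ is a submanifold of the interior of $\ \Delta_n^t, \ $ in line with the ``differential geometry on the interior'' convention adopted earlier. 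To upgrade this to a statement about the closed simplex with corners one would run the same regular-value argument on each open face of $\ \Delta_n^t, \ $ on which $\ \varphi_c \ $ restricts to an iterated flow with the corresponding times set to $\ 0, \ $ under the analogous independence hypothesis, and then verify that the pieces assemble into a submanifold with corners; I would keep this as a separate remark rather than fold it into the main proof.
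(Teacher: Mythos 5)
Your proof is correct and follows essentially the same route as the paper: both realize $\Gamma_{p,q}^c(t)$ as the fiber of the iterated flow over $q$ and check maximal rank via the vectors $\partial_{s_i}\varphi_c - \partial_{s_n}\varphi_c$; the paper eliminates $s_n=t-|s|$ and differentiates in $(s_0,\ldots,s_{n-1})$, which is the same computation as your evaluation of $d_sF$ on the basis $e_i-e_n$ of the hyperplane's tangent space. Your closing remark about the corners correctly identifies the scope the paper itself adopts (it explicitly applies the regular-value argument only "to the open part of manifolds with corners").
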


\begin{proof}

Fix $\ c \in D(n,k)\ $ with $\ n\geq 1. \ $ Recall from Remark \ref{ifm} that $\ \varphi_c: M \times \mathbb{R}^{n+1} \longrightarrow M\ $ is the iterated flow associated to $\ c. \ $  The differential of $\ \varphi_c \ $ naturally split as:
$$d\varphi_c \ \ = \ \ d^M\varphi_c \ \ + \ \ d^{\mathbb{R}^{n+1}}\varphi_c .$$
Consider the map $\ \ \phi: \Delta_{n}^{t} \longrightarrow M \ \  \mbox{given by}$
$$\phi(s) \ = \ \phi(s_0,...,s_{n-1})\ = \ \varphi_c(p,s_0,...,s_{n-1},t-s_0- \cdots -s_{n-1}),$$ where we are using the identification
$$\Delta_{n}^{t}\  = \ \bigg\{\ s= (s_0,...,s_{n-1}) \in \mathbb{R}_{\geq 0} \ \ \bigg| \ \ |s|=s_0 + \cdots + s_{n-1} \leq t \ \bigg\}.$$
In order to guarantee that $\ \Gamma_{p,q}^c(t)\ =\  \phi^{-1}(p) \ $ is a smooth sub-manifold of $\ \Delta_{n}^{t}\ $ we impose the condition that $\ d_s \phi \ $ has maximal rank
for $\ s \in \phi^{-1}(p).\ $ Next we compute for $\  i \in [0,n-1]\ $ the vectors
$$ \frac{\partial \phi}{\partial s_i}(s) \ =  \ d_s \phi(\frac{\partial}{\partial s_i}) \ \ \in \ \ T_{\phi(s)} M.$$ Using the identity
$$\parcial{}{s_n}(\varphi_{c_0, \cdots, c_n})(p,s_0, \cdots, s_n)\  =  \ v_{c_n}(\varphi_{c_1, \cdots, c_n}(p,s_0, \cdots, s_n)),$$
one can  show that
$\ \frac{\partial \phi}{\partial s_i}(s)\ $ is given by
$$d^M\varphi_{c_{i+1},...,c_n}\Big|_{(s_{i+1},...,s_{n-1},s_n)}\big[v_{c_i}(\varphi_{c_0,..., c_i}(s_0,...,s_i))\big] \  -
\ v_{c_n}(\varphi_{c_0,..., c_n}(s_0,...,s_n)),$$
where we recall that $\ s_n\ = \ t-|s|,\ $
$$d^M\varphi_{c_{i+1},...,c_n} \ \ = \ \  d\varphi_{c_n}(\ \ , t-|s|)\ \circ \ \cdots \ \circ \ d^M\varphi_{c_{i+1}}(\ \ ,s_{i+1} ), \ \ \ \ \mbox{and}$$
$$\varphi_{c_0,..., c_i}(s_0,...,s_i) \ \ = \ \  \varphi_{c_{i}}[\varphi_{c_0,..., c_{i-1}}(s_0,...,s_{i-1}),s_i]\ \ \ \  \mbox{for}\ \ \ \ \ i \geq 1 .$$

\noindent Thus the rank of $\ d_s \phi \ $ is maximal at each point $\ s \in \phi^{-1}(q)\ $ if and only if there are exactly $\ \mathrm{min}(n,\mathrm{dim}(M))\ $ linearly independent vectors among the vectors $\ \frac{\partial \phi}{\partial s_i}(s)\ $ given by the expression above. We have shown the desired result.
\end{proof}

\begin{cor}\label{oc}
{\em Under the hypothesis of Theorem \ref{sii}, the interior of the moduli space $\ \Gamma_{p,q}^c(t)\ $ is an oriented Riemannian sub-manifold of $\ \Delta_n^t$.}
\end{cor}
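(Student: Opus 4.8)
The plan is to unwind the three adjectives in the statement one at a time, the first being essentially already established. Write $\ \Gamma^{\circ}\ $ for the interior of $\ \Gamma_{p,q}^{c}(t)\ $ and $\ U\ $ for the interior of $\ \Delta_{n}^{t}. \ $ By (the proof of) Theorem~\ref{sii}, $\ \Gamma^{\circ}\ $ is a smooth embedded submanifold of $\ U, \ $ and $\ U\ $ is an open subset of $\ \mathbb{R}^{n}\ $ carrying the Euclidean metric with volume form $\ dl_1 \wedge \cdots \wedge dl_n\ $ and the associated orientation, as recorded at the beginning of this section. So it remains to produce on $\ \Gamma^{\circ}\ $ a Riemannian metric and an orientation, after which it is automatically an oriented Riemannian submanifold of $\ \Delta_{n}^{t}\ $ (a submanifold of the interior of a manifold with corners being in particular a submanifold of the whole). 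The metric is immediate: pulling back the Euclidean metric of $\ U\ $ along the inclusion $\ \iota:\Gamma^{\circ}\hookrightarrow U\ $ gives a symmetric $2$-tensor that is positive definite because $\ \iota\ $ is an immersion, and this is the induced Riemannian metric; I would say nothing further about it.

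For the orientation, recall that $\ \Gamma^{\circ}=\phi^{-1}(q)\ $ for the map $\ \phi:\Delta_{n}^{t}\longrightarrow M\ $ built in the proof of Theorem~\ref{sii}, and that the hypothesis of that theorem says precisely that $\ d_{s}\phi\ $ has maximal rank at every $\ s\in\Gamma^{\circ}. \ $ If $\ n\leq\dim M\ $ this means $\ d_{s}\phi\ $ is injective, $\ \Gamma^{\circ}\ $ is $0$-dimensional, and a $0$-manifold carries a (trivial) orientation, so there is nothing to do. If $\ n>\dim M, \ $ put $\ m=\dim M; \ $ then $\ d_{s}\phi\ $ is surjective, $\ \phi\ $ is a submersion on a neighbourhood of $\ \Gamma^{\circ}, \ $ and along $\ \Gamma^{\circ}\ $ one has the short exact sequence of vector bundles $\ 0\to T\Gamma^{\circ}\to TU|_{\Gamma^{\circ}}\stackrel{d\phi}{\longrightarrow}\phi^{*}TM\to 0. \ $ Since $\ U\ $ is oriented and $\ M\ $ is oriented (an orientation on $\ M\ $ was fixed at the start of Section~\ref{msi}), this sequence determines an orientation of $\ T\Gamma^{\circ}\ $ by the standard quotient-orientation rule: at $\ s\in\Gamma^{\circ}\ $ declare an ordered basis $\ (e_{1},\dots,e_{n-m})\ $ of $\ T_{s}\Gamma^{\circ}\ $ positive exactly when, for some (equivalently, any) $\ f_{1},\dots,f_{m}\in T_{s}U\ $ whose $\ d_{s}\phi\ $-images form a positively oriented basis of $\ T_{q}M, \ $ the tuple $\ (e_{1},\dots,e_{n-m},f_{1},\dots,f_{m})\ $ is a positively oriented basis of $\ T_{s}U.$

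Assembling the three steps yields the corollary. The only point that genuinely needs to be written out is the well-definedness and smoothness of this last orientation: that the declared sign does not change if the lifts $\ f_{j}\ $ are replaced by other vectors with the same property, and that a local positive frame for $\ \Gamma^{\circ}\ $ can be chosen depending smoothly on $\ s. \ $ This is the familiar construction of the fibrewise orientation on the fibres of a submersion between oriented manifolds; no difficulty arises beyond the bookkeeping with the exact sequence of orientations, but it is the substantive part of the argument and the place where I would spend the detail.
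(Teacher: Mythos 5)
Your proposal is correct and follows essentially the same route as the paper: the metric is the one induced by restriction from $\ \Delta_n^t, \ $ and the orientation is the standard preimage/fibre orientation obtained by orienting the normal bundle through the isomorphism $\ d_s\phi: N_s\Gamma_{p,q}^c(t) \to T_{\phi(s)}M \ $ (your short exact sequence is just this splitting written out) and combining it with the orientations of $\ \Delta_n^t \ $ and $\ M. \ $ The paper cites exactly this construction from Guillemin--Pollack's oriented intersection theory, so no further comment is needed.
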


\begin{proof} We use oriented differential intersection theory as developed by Guillemin \cite{Guillemin}. Since $\ \Gamma_{p,q}^c(t)\ $ is a smooth sub-manifold of $\ \Delta_n^t \ $ it acquires by restriction a Riemannian metric. The orientation on $\ \Gamma_{p,q}^c(t)\ $ arises as follows. For $\ s \in \Gamma_{p,q}^c(t)\ $ write $$T_s\Delta_n^t \ \ \simeq \ \  N_s\Gamma_{p,q}^c(t) \ \oplus \ T_s\Gamma_{p,q}^c(t),$$ where $\ N_s\Gamma_{p,q}^c(t) \   \simeq   \ T_s\Delta_n^t /T_s\Gamma_{p,q}^c(t) \ $ is the normal bundle of $\ \Gamma_{p,q}^c(t).\ $
Note that $$d_s\phi(T_s\Delta_n^t)\ = \ T_{\phi(s)}M \ \ \  \mbox{and \ thus}  \ \ \ d_s\phi: N_s\Gamma_{p,q}^c(t) \ \longrightarrow \ T_{\phi(s)}M \ \ \ \mbox{is \ an \ isomorphism}.$$ Since $\ T_s\Delta_n^t\ $ is oriented, and $\ N_s\Gamma_{p,q}^c(t)\ $ acquires an orientation from the isomorphism above, then
$\ T_s\Gamma_{p,q}^c(t)\ $ naturally acquires an orientation.
\end{proof}

For a directed manifold with a smooth moduli space of directed paths each piece $\ \Gamma_{p,q}^c(t)\ \subseteq \ \Delta_{n}^{t}\ $ acquires from $\ \Delta_{n}^{t}\ $ a Riemannian metric. If in addition we assume that each piece $\ \Gamma_{p,q}^c(t)\ $ is given an orientation, then $\ \Gamma_{p,q}^c(t)\ $ acquires a volume form denoted by $\ dl_c. \ $ As we have just shown this is the situation arising from the conditions of Theorem \ref{sii}. \\

 We are ready to highlight a few functions on the moduli spaces of directed paths, for a fix a time $\ t>0, \ $ that one would like to integrate against these measures. \\

\noindent \textbf{1. Volume of Moduli Space of Directed Paths.} \\

Each component $\ \Gamma_{p,q}^c(t)\ $ of the space of indirect influences is compact and thus of bounded volume. We  define the volume or total measure  of $\ \Gamma_{p,q}(t), \ $ leaving convergency issues to be discussed on a case by case basis, as follows:
$$\mathrm{vol}(\Gamma_{p,q}(t))\  =  \ \int_{\Gamma_{p,q}(t)}1dl \  =  \ \sum_{n=1}^{\infty}\ \sum_{c\in D(n,k)}\ \int_{\Gamma_{p,q}^c(t)}1\ dl_c \  =  \ \sum_{n=1}^{\infty}\ \sum_{c\in C(n,k)}\mathrm{vol}(\Gamma_{p,q}^c(t)) .$$

\noindent \textbf{2. Functions on directed paths coming from differential $1$-forms on $M$.} \\

Let $A$ be a differential $1$-form on $M$.  We formally write
$$\int_{\Gamma_{p,q}(t)}\widehat{A}\ dl \ \ = \ \  \sum_{n=1}^{\infty}\ \sum_{c\in D(n,k)}\ \int_{\Gamma_{p,q}^c(t)}\widehat{A} \ dl_c,$$
where the map $\ \widehat{A}:\Gamma_{p,q}^c(t) \ \longrightarrow \ \mathbb{R} \ $ is given by
$$A(c,s) \ \ = \ \ \int_0^t \varphi_{c,s}^{*}A \ \ = \ \  \sum_{i=0}^{l(c)}\ \int_{0}^{s_i}\varphi_{c,s}|_{[0,s_i]}^{\ast}A,$$
with $\ \varphi_{c,s}: [0,s_0+\cdots+s_n] \ \longrightarrow \ M\ $ the directed path associated to $\ (c,s) \in \Gamma_{p,q}(t) .$\\

\noindent \textbf{3. Functions on directed paths from Riemannian metrics on $M$.} \\

Let $g$ be a Riemannian metric on $M$.  We formally write
$$\int_{\Gamma_{p,q}(t)}e^{-l_g}\ dl \ \ = \ \  \sum_{n=1}^{\infty}\ \sum_{c\in D(n,k)}\ \int_{\Gamma_{p,q}^c(t)}e^{-l_g} \ dl_c,$$
where  $\ e^{-l_g}:\Gamma_{p,q}^c(t) \ \longrightarrow \ \mathbb{R} \ $ is the map given by $\e^{-l_g}(c,s) \  =  \  e^{-l_g(\varphi_{c,s})}\ $ and $\ l_g(\varphi_{c,s})\ $ is the length of the path $\ \varphi_{c,s},$ i.e.:
$$l_g(\varphi_{c,s})\  =  \ \sum_{i=0}^{l(c)}l_g(\varphi_{c,s}|_{[0,s_i]} ) \ =
\ \sum_{i=0}^{l(c)}\ \int_{0}^{s_i}g\Big( v_{c_i}(\varphi_{c_i}(p_{i}, u))\ , \ v_{c_i}(\varphi_{c_i}(p_{i}, u)) \Big)du.$$

\noindent \textbf{4. Functions on direct paths from functions on $M$.} \\

Given a smooth map $\ f:M \longrightarrow \mathbb{R}\ $  we formally write
$$\int_{\Gamma_{p,q}(t)}\widehat{f}\ dl \ \ = \ \  \sum_{n=1}^{\infty}\ \sum_{c\in D(n,k)}\ \int_{\Gamma_{p,q}^c(t)}f(p_0)\cdots f(p_{n}) \ dl_c$$
with $\ p_0=p \ \ \ \mbox{and} \ \ \ p_{i+1} = \varphi_{c_i}(p_i,s_i)\ \ $ for $\ \ 0\leq i \leq n$.\\

\noindent \textbf{5. Functions on directed paths from Lagrangian functions on $TM$.} \\

Let $\ L:TM \longrightarrow \mathbb{R} \ $ be a Lagrangian map.  In the applications $\ L \ $ is usually built from a Riemannian metric $\ g \ $ on $\ M \ $ and a potential map $\ U:M \longrightarrow \mathbb{R}\ $ as follows: $$L(p,v)\ = \ g(v,v) \ - \  U(p).$$
Given a Lagrangian $\ L \ $ we consider the following analogue of the Feynman integrals:
$$\int_{\Gamma_{p,q}(t)}e^{\frac{i}{\hbar}S}\ dl \ =  \  \sum_{n=1}^{\infty}\ \sum_{c\in D(n,k)}\ \int_{\Gamma_{p,q}^c(t)}e^{\frac{i}{\hbar}S} \ dl_c,$$
where we set $\ e^{\frac{i}{\hbar}S}(c,s) \ = \ e^{\frac{i}{\hbar}S(c,s)},\ \ $ and the action map $\ S\ $ is given by
$$S(c,s) \ =  \ \int_{0}^tL(\varphi_{c,s}(u), \dot{\varphi}_{c,s}(u))\ du \ =  \
\sum_{i=0}^{l(c)}\ \int_{0}^{s_i}L(\varphi_{c,s}|_{[0,s_i]}(u), \dot{\varphi}_{c,s}|_{[0,s_i]}(u))\ du .$$ This example both reveals the relations and differences between our constructions and Feynman integrals. Whereas in the latter arbitrary paths are taken into account, with our methods only paths with speeds and directions prescribed by the vector fields $\ v_1,...,v_k \ $ are allowed. Also, instead of looking for a measure on the space of all paths, we first decompose our space of paths into several pieces, and then impose a measure on each piece. Fortunately, each piece is finite dimensional and thus we have at our disposal the usual techniques coming from Riemannian geometry. Convergency of the sum of the integrals over each piece is to be studied in a case by case fashion.\\

\begin{rem}{\em In our examples we have found that the infinite sums defining the integrals above are actually convergent. Nevertheless, convergency is not a built-in property and should not be expected in general. To improve convergency properties one may look at the exponential generating series instead. For example, the $\mathrm{vol}$ function defined above can be  replaced by the function $\ \mathrm{vol}_{\lambda}, \ $ with $\ \lambda \ $ a positive real parameter, defined as follows:
$$\mathrm{vol}_{\lambda}(\Gamma_{p,q}(t)) \  =  \ \sum_{n=1}^{\infty}\Big( \sum_{c\in D(n,k)}\mathrm{vol}(\Gamma_{p,q}^c(t) \Big)\frac{\lambda^n}{n!} .$$
Clearly, this technique can be applied as well to the other quantities defined above. Moreover, if necessary, we may regard $\ \lambda \ $ as a formal parameter.}
\end{rem}

We have shown how to construct and integrate  functions on the moduli spaces of directed paths on directed manifolds. So let us pick one such a function and call it $\ g. \ $ Integrating over the moduli spaces of directed paths we obtain the kernel for the propagation of  influences  $\ k:M\times M \times \mathbb{R} \ \longrightarrow \ \mathbb{R}\ $ which is given by
$$k(p,q,t) \  = \  \int_{\Gamma_{p,q}(t)}g\ dl.$$

\begin{defn}{\em Let $\ (M,v_1,...,v_k) \ $ be a directed manifold with smooth moduli space of directed paths.  $M$ is given a Riemanninan metric, and thus it acquires a volume form. Let $\ f: M \longrightarrow \mathbb{R}\ $ be a map representing the density of influences originated at time $\ t=0. \ $ Let $\ g \ $ be a map on directed paths,  and consider its associated kernel of influences $\ k=k_g. \ $
The wave of influences $\ u: M\times \mathbb{R}_{\geq 0}\ \longrightarrow \ \mathbb{R} \ $ is the map  given by
$$u(q,t) \ =  \ \int_{p \in \Gamma_q^{-}(t)}k(p,q,t)f(p)\ dp,$$ where we assume that $\ \Gamma_q^{-}(t)\ $ is a compact oriented smooth sub-manifold
of $\ M; \ $ thus it acquires by restriction a Riemannian metric, and comes with a volume form $\ dp.$
}
\end{defn}

Let us consider a couple of examples.

\begin{itemize}

\item Let $\ g \ $ be the map constantly equal to $\ 1, \ $  we have that
$$u(q,t) \  =  \  \int_{p\in \Gamma_q^{-}(t)}\mathrm{vol}(\Gamma_{p,q}(t))f(p)\ dp.$$

\item For $\ g=e^{\frac{i}{\hbar}S}\ $ where $\ S \ $ is the action defined by a Lagrangian map, we have that
$$u(q,t) \  = \   \int_{p \in \Gamma_q^{-}(t)}\ \int_{\Gamma_{p,q}(t)}e^{\frac{i}{\hbar}S}f(p) \ dl  dp.$$

\end{itemize}

\section{Invariance, Involution, and Limit Properties}\label{iilp}

Let $\ (M, v_1,...,v_k)\ $ be a directed manifold and $\ f:M \longrightarrow N \ $ be a diffeomorphism. Then we obtain the directed manifold $$(N, f_{\ast}v_1,...,f_{\ast}v_k)$$ where the push-forward vector fields $\ f_{\ast}v_i\ $ are given for $\ q \in N \ $ by
$$f_{\ast}v_i(q) \ = \ d_pf(v_i(p)), \ \ \ \ \ \ \mbox{with} \ \ \ \ \ \  p\ = \ f^{-1}(q).$$ With this notation we have the following result.

\begin{thm}\label{ibd}{\em  Let $\ (M, v_1,...,v_k)\ $ be a directed manifold and $\ f:M \longrightarrow N\ $ be a diffeomorphism. For $\ p,q \in M \ $
the identity map gives a natural homeomorphism
 $$\Gamma_{p,q}^M(t) \  \simeq  \ \Gamma_{f(p),f(q)}^N(t).$$
Moreover, if $\ (M, v_1,...,v_k) \ $ has a smooth moduli space of directed paths, and $f$ is an orientation preserving diffeomorphism, then the identification above is an identity between Riemannian manifolds, and in particular
we obtain that $$\mathrm{vol}(\Gamma_{p,q}^M(t)) \  =  \ \mathrm{vol}(\Gamma_{f(p),f(q)}^N(t)).$$ }
\end{thm}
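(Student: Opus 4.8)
The plan is to exploit the naturality of the flow construction under push-forward. First I would establish the key compatibility: if $\varphi_j$ denotes the flow of $v_j$ on $M$ and $\psi_j$ the flow of $f_*v_j$ on $N$, then $\psi_j(f(x),s) = f(\varphi_j(x,s))$ for all $x \in M$ and $s \in \mathbb{R}$. This is immediate because $t \mapsto f(\varphi_j(x,t))$ is an integral curve of $f_*v_j$ starting at $f(x)$ — differentiate and use $d_{\varphi_j(x,t)}f(v_j(\varphi_j(x,t))) = f_*v_j(f(\varphi_j(x,t)))$, which is exactly the definition of the push-forward field — and integral curves are unique. Iterating over a pattern $c \in D(n,k)$ gives $\psi_c(f(p),s) = f(\varphi_c(p,s))$ for every $s \in \mathbb{R}^{n+1}$.

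Next, from this identity the set-level statement is immediate: for fixed $c$ and time $t$, the condition $\varphi_c(p,s) = q$ is equivalent, via the bijection $f$, to $\psi_c(f(p),s) = f(q)$. Since both $\Gamma^M_{p,q}(t)$ and $\Gamma^N_{f(p),f(q)}(t)$ are by definition subsets of the same simplex $\Delta_n^t$ cut out by these equivalent equations, the identity map on $\Delta_n^t$ restricts to a homeomorphism between them (it is the restriction of a homeomorphism of $\Delta_n^t$ to equal subsets, hence automatically continuous with continuous inverse). Summing over $n$ and $c \in D(n,k)$, and handling the $t=0$ convention separately (where both sides are $\{p\}$ or $\emptyset$ according to whether $p=q$, matched by $f$'s injectivity), gives the homeomorphism $\Gamma^M_{p,q}(t) \simeq \Gamma^N_{f(p),f(q)}(t)$.

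For the Riemannian refinement, recall from the construction after Corollary \ref{oc} that the metric and orientation on each piece $\Gamma^c_{p,q}(t)$ are induced purely from the ambient simplex $\Delta_n^t$: the metric by restriction, and the orientation via the splitting $T_s\Delta_n^t \simeq N_s\Gamma^c_{p,q}(t) \oplus T_s\Gamma^c_{p,q}(t)$ together with the isomorphism $d_s\phi : N_s\Gamma^c_{p,q}(t) \to T_{\phi(s)}M$. Since the map $\phi$ for $(N,f_*v_1,\dots,f_*v_k)$ is $f \circ \phi_M$, and $f$ is an orientation-preserving diffeomorphism, $d_sf$ carries the orientation of $T_{\phi_M(s)}M$ to that of $T_{f(\phi_M(s))}N$; hence the induced orientations on the normal bundles agree under the identity map of $\Delta_n^t$, and so do the induced orientations on the tangent bundles of the pieces. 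The ambient metric on $\Delta_n^t$ is literally the same on both sides (it does not depend on $M$ or $N$ at all), so the restricted metrics coincide. Therefore the identity map is an orientation-preserving isometry on each piece, hence $\mathrm{vol}(\Gamma^c_{p,q}(t)) = \mathrm{vol}(\Gamma^{c}_{f(p),f(q)}(t))$; summing over patterns gives the volume equality.

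The only genuinely delicate point is the smoothness/submanifold structure: one must check that $(M,v_1,\dots,v_k)$ having a smooth moduli space implies the same for $(N,f_*v_1,\dots,f_*v_k)$. This follows from the flow-naturality identity as well, since the defining map $\phi_N = f \circ \phi_M$ is a composition of a smooth map with a diffeomorphism, so it has maximal rank at a point exactly when $\phi_M$ does — equivalently, the rank condition of Theorem \ref{sii} transfers verbatim because $d\varphi_c$ conjugates to $d\psi_c$ via $df$. So no new transversality hypothesis is needed; the main obstacle is really just bookkeeping the orientation conventions carefully enough to see that "identity map of $\Delta_n^t$" is the correct identification and that it is orientation-preserving on each piece, which is where the hypothesis that $f$ preserve orientation is used.
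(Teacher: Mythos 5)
Your proposal is correct and follows essentially the same route as the paper: establish the flow-naturality identity $f(\varphi_{v,c}(p,s)) = \varphi_{f_*v,c}(f(p),s)$ by induction on the pattern, conclude that the two moduli pieces are literally the same subsets of $\Delta_n^t$ so the identity map is the homeomorphism, and then check orientation compatibility of the normal bundles via the commutative square relating $d_s\phi_{v,c}$, $df$, and $d_s\phi_{f_*v,c}$. Your explicit remarks on the $t=0$ convention and on the transfer of the smoothness hypothesis (via $\phi_N = f\circ\phi_M$ having maximal rank iff $\phi_M$ does) are small additions the paper leaves implicit, but they do not change the argument.
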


\begin{proof}We show that  $\ s \in \Gamma_{p,q}^{M,c}(t)\ \ $ if and only if $\ \ s \in \Gamma_{f(p),f(q)}^{N,c}(t).\ $ By construction we have that $$f(\varphi_{v_i}(p,t)) \  =  \ \varphi_{f_\ast(v_i)}(f(p),t),$$ and thus by induction on the length of $c$ we have that
$$f(\varphi_{v,c,s}(p,t)) \  =  \  \varphi_{f_\ast v, c,s}(f(p),t) ,$$ and therefore the equations
$$\varphi_{v,c}(p,s)\ =\ q \ \ \ \ \ \ \ \mbox{and} \ \ \ \ \ \ \ \varphi_{f_\ast v,c}(f(p),s)\ = \ f(q) \ \ \ \ \ \mbox{are equivalent.}$$
 For the second part we show that the identity map $\ \Gamma_{p,q}^{M,c}(t) \ \longrightarrow \ \Gamma_{f(p),f(q)}^{N,c}(t)\ $ preserves orientation.
 Since the identity map preserves the splittings
$$T_s\Delta_n^t \ \ \simeq \ \ N_s\Gamma_{p,q}^{M,c}(t) \ \oplus \ T_s\Gamma_{p,q}^{M,c}(t)   \ \ \ \ \ \ \mbox{and} \ \ \ \ \ \ \
T_s\Delta_n^t \ \ \simeq \ \  N_s\Gamma_{f(p),f(q)}^{N,c}(t) \ \oplus \ T_s\Gamma_{f(p),f(q)}^{N,c}(t)  ,$$ we just have to show
that $\ N_s\Gamma_{p,q}^{M,c}(t)\ \ $ and $\ \ N_s\Gamma_{f(p),f(q)}^{N,c}(t)\ $ are given compatible orientations. This follows by
construction, see the proof of Theorem \ref{sii}, as the square
\[\xymatrix @R=.4in  @C=.8in
{N_s\Gamma_{p,q}^c(t) \ar[r]^{d_s\phi_{v,c}} \ar[d]_{1} & T_{\phi_{v,c}(s)}M \ar[d]^{df}
\\ N_s\Gamma_{f(p),f(q)}^{N,c}(t) \ar[r]^{d_s\phi_{f_{\star}v,c}} & T_{\phi_{f_{\star}v,c}(s)}N   } \]
is a commutative diagram of orientation preserving isomorphisms, see Corollary \ref{oc}.

\end{proof}

Next result tell us how the moduli spaces of directed paths depend on the ordering on vector fields.

\begin{prop}\label{lis}
{\em  Let $\ (M, v_1,...,v_k)\ $ be a directed manifold and $\ \alpha: [k] \longrightarrow [k]\ $ be a permutation. For the directed manifold
$\ (M, v_{\alpha 1},...,v_{\alpha k})\ $ we have that
 $$\Gamma_{p,q}^{v}(t) \ \simeq  \ \Gamma_{p,q}^{v\alpha}(t).$$
Moreover, if $\ (M, v)\ $ has a smooth moduli space of directed paths, then so does $\ (M, v\alpha)\ $ and
we have that $$\mathrm{vol}(\Gamma_{p,q}^{v}(t)) \ =  \ \mathrm{vol}(\Gamma_{p,q}^{v\alpha}(t)) .$$ }
\end{prop}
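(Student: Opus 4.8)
The plan is to exploit the fact that the permutation $\alpha$ of the vector fields induces a bijection on patterns that preserves the combinatorial and geometric structure underlying the moduli spaces. First I would set up the bijection on the indexing data: a permutation $\alpha:[k]\longrightarrow[k]$ induces, for each $n\geq 0$, a map $D(n,k)\longrightarrow D(n,k)$ sending a pattern $c=(c_0,\dots,c_n)$ for the tuple $v\alpha$ to the pattern $\alpha c=(\alpha(c_0),\dots,\alpha(c_n))$ for the tuple $v$; since $\alpha$ is injective, $c_i\neq c_{i+1}$ holds if and only if $\alpha(c_i)\neq\alpha(c_{i+1})$, so $\alpha c$ is again an admissible pattern, and this assignment is a bijection of $D(n,k)$ with itself with inverse induced by $\alpha^{-1}$. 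The key observation is that the flow generated by the $i$-th vector field of the tuple $v\alpha$, namely $v_{\alpha(i)}$, is literally the flow $\varphi_{\alpha(i)}$ of the tuple $v$; hence the iterated flow satisfies $\varphi^{v\alpha}_{c}(p,s)=\varphi^{v}_{\alpha c}(p,s)$ for every time distribution $s$, by an immediate induction on $l(c)$ using the recursion in Remark \ref{ifm}.

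Next I would deduce the homeomorphism. Because $\varphi^{v\alpha}_{c}(p,s)=\varphi^{v}_{\alpha c}(p,s)$, the defining equation $\varphi^{v\alpha}_c(p,s)=q$ is identical to $\varphi^{v}_{\alpha c}(p,s)=q$, so the identity map on $\Delta_n^t$ restricts to an identification $\Gamma^{v\alpha,c}_{p,q}(t)=\Gamma^{v,\alpha c}_{p,q}(t)$ as subsets of $\Delta_n^t$. Taking the disjoint union over $n$ and over $c\in D(n,k)$, and using that $c\mapsto\alpha c$ permutes $D(n,k)$, we get
$$\Gamma^{v\alpha}_{p,q}(t)=\coprod_{n=0}^\infty\coprod_{c\in D(n,k)}\Gamma^{v\alpha,c}_{p,q}(t)=\coprod_{n=0}^\infty\coprod_{c\in D(n,k)}\Gamma^{v,\alpha c}_{p,q}(t)=\coprod_{n=0}^\infty\coprod_{d\in D(n,k)}\Gamma^{v,d}_{p,q}(t)=\Gamma^{v}_{p,q}(t),$$
the reindexing $d=\alpha c$ being a bijection on each factor. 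This is the claimed homeomorphism $\Gamma^v_{p,q}(t)\simeq\Gamma^{v\alpha}_{p,q}(t)$.

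For the smooth and metric part, I would observe that the criterion of Theorem \ref{sii} for $\Gamma^{v\alpha,c}_{p,q}(t)$ to be a smooth submanifold is stated purely in terms of the iterated flows and the vector fields $v_{c_i}$ of the tuple $v\alpha$; since $v_{\alpha(i)}=v_{\alpha(i)}$ and $\varphi^{v\alpha}_c=\varphi^v_{\alpha c}$, that criterion for the pattern $c$ and tuple $v\alpha$ is word-for-word the criterion for the pattern $\alpha c$ and tuple $v$. Hence if $(M,v)$ has smooth moduli spaces, so does $(M,v\alpha)$. The map $\phi$ built in the proof of Theorem \ref{sii} is literally the same map for $(v\alpha,c)$ and for $(v,\alpha c)$, so the induced Riemannian metric (restricted from $\Delta_n^t$) and the induced orientation (coming from the splitting $T_s\Delta_n^t\simeq N_s\Gamma\oplus T_s\Gamma$ together with the isomorphism $d_s\phi\colon N_s\Gamma\longrightarrow T_{\phi(s)}M$, exactly as in Corollary \ref{oc}) agree under the identity identification. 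Therefore each $\Gamma^{v\alpha,c}_{p,q}(t)$ and $\Gamma^{v,\alpha c}_{p,q}(t)$ are isometric oriented Riemannian manifolds, so they have equal volume, and summing over the (reindexed) patterns gives $\mathrm{vol}(\Gamma^v_{p,q}(t))=\mathrm{vol}(\Gamma^{v\alpha}_{p,q}(t))$.

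The only real subtlety — and the step I would be most careful about — is the bookkeeping that the reindexing $c\mapsto\alpha c$ is compatible simultaneously with the decomposition into pieces, the smoothness criterion, and the orientation convention, so that nothing is twisted by a sign or a relabeling; once one notes that the tuple $v\alpha$ shares its individual flows with $v$ and only relabels their indices, every geometric construction is transported by the literal identity map on $\Delta_n^t$ and there is genuinely nothing to compute. This is in fact a special case of Theorem \ref{ibd} in spirit (invariance under isomorphisms of directed manifolds), here with $f=\mathrm{id}_M$ and the morphism datum $\alpha$, so one could alternatively phrase the whole argument as an application of the functoriality already established.
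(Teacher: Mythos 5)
Your proof is correct and follows essentially the same route as the paper: the permutation acts on patterns by relabeling, the restriction to each piece $\Gamma^{v,c}_{p,q}(t)\to\Gamma^{v\alpha,\alpha^{-1}c}_{p,q}(t)$ is the identity map on a subset of $\Delta_n^t$, and the same identity $\varphi_{v\alpha,\alpha^{-1}c}(p,s)=\varphi_{v,c}(p,s)$ together with the orientation-preserving commutative square yields equality of volumes. Your closing remark that this is the case $f=\mathrm{id}_M$ of the invariance theorem matches the paper's own framing of the two results as companions.
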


\begin{proof}We regard the permutation $\ \alpha\ $ as a map $$\alpha_{\ast}: \Gamma_{p,q}^{v}(t) \  \longrightarrow  \ \Gamma_{p,q}^{ v \alpha}(t) \ \ \ \ \ \ \mbox{given by}\ \ \ \ \ \ \alpha_{\ast}(c,s) \ = \  (\alpha^{-1}c,s).$$ It follows that  $\ \alpha\ $ is an homeomorphism as its restriction map
$$\alpha_{\ast}:  \Gamma_{p,q}^{v,c}(t)  \  \longrightarrow  \ \Gamma_{p,q}^{v \alpha, \alpha^{-1} c}(t) $$
is just the identity map and is a well-defined homeomorphism since
$$\varphi_{ v \alpha,\alpha^{-1} c}(p,s) \  =  \ \varphi_{v, \alpha \alpha^{-1} c}(p,s) \ =  \ \varphi_{v,c}(p,s) \  =  \ q .$$ In the case of a smooth moduli space of directed paths, the map above is clearly orientation preserving, since it is just the identity map, and we have a commutative
diagram of orientation-preserving isomorphisms
\[\xymatrix @R=.4in  @C=.8in
{N_s\Gamma_{p,q}^{v,c}(t) \ar[r]^{d_s\phi_{v,c}} \ar[d]_{1} & T_{\phi_{v,c}(s)}M \ar[d]^{1}
\\ N_s\Gamma_{p,q}^{v\alpha, \alpha^{-1}c}(t) \ar[r]^{d_s\phi_{v\alpha, \alpha^{-1}c}} & T_{\phi_{v\alpha, \alpha^{-1}c}(s)}M  } \]
\end{proof}

From the Theorem \ref{ibd} and Proposition \ref{lis} we see that the invariant study of directed paths on a directed oriented manifold $\ M \ $ relies on the study,  for $\ k\geq 1, \ $  of the quotient spaces
$$\chi(M)^k / \mathrm{Diff}_+(M) \times S_k,$$ where  $\ \chi(M) \ $ is the space of vector fields on $\ M, \  $ $\ S_k\ $ the group of permutations of $\ [k], \ $ and $\ \mathrm{Diff}(M)_+ \ $ is the group of orientation preserving diffeomorphism of $\ M, \ $ i.e. the study of equivalence classes of tuples of vector fields under diffeomorphisms and permutations.\\

Next we define the direction reversion functor $\ -:\mathrm{diman} \longrightarrow \mathrm{diman}. \ $ It sends
a directed manifold $\ (M, v_1,...,v_k)\ $ to its reversed directed manifold $$ (M, -v_1,...,-v_k).$$

\begin{prop}{\em Let $\ (M, v_1,...,v_k)\ $ be a directed manifold and $\ (M, -v_1,...,-v_k) \ $ its reversed directed manifold. We have canonical homeomorphisms
$$\Gamma_{v,A,B}(t) \  \simeq \ \Gamma_{-v,B,A}(t).$$
And therefore the respective reachable sets are related by:
$$\Gamma_{-v,A}(t) \ \simeq \ \Gamma^{-}_{v,A}(t), \ \ \ \  \ \Gamma_{-v, A,\leq}(t) \ \simeq \ \Gamma^{-}_{v, A,\leq}(t),  \ \ \ \  \ \Gamma_{-v,A}  \  \simeq \ \Gamma^{-}_{v,A},
\ \ \ \  \ \mathrm{F}_{-v,A}(t)  \ \simeq \  \partial \Gamma^{-}_{A,\leq}(t).$$
If $\ (M, v_1,...,v_k)\ $ has a smooth moduli space of directed paths, then so does $\ (M, -v_1,...,-v_k)\ $ and the maps above are actually diffeomorphisms. These diffeomorphisms may or may not preserve orientation.

}
\end{prop}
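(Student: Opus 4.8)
The plan is to build the homeomorphism $\Gamma_{v,A,B}(t) \simeq \Gamma_{-v,B,A}(t)$ directly from a time-reversal bijection on directed paths, and then read off the consequences for the reachable sets. First I would record the elementary flow identity: if $\varphi_j$ is the flow of $v_j$, then the flow of $-v_j$ is $\psi_j(p,s) = \varphi_j(p,-s)$. More generally, given a pattern $c=(c_0,\dots,c_n)\in D(n,k)$ and a time distribution $s=(s_0,\dots,s_n)\in\Delta_n^t$, set $\overline{c} = (c_n, c_{n-1},\dots,c_0)$ — still a valid pattern since $c_i\neq c_{i+1}$ forces $\overline{c}_i\neq\overline{c}_{i+1}$ — and $\overline{s} = (s_n, s_{n-1},\dots,s_0)\in\Delta_n^t$. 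A straightforward induction on $n$, using the group property of the flows from Remark \ref{ifm}, shows that $\varphi_{-v,\overline{c}}(q,\overline{s}) = p$ whenever $\varphi_{v,c}(p,s) = q$; geometrically the reversed directed path $\varphi_{-v,\overline{c},\overline{s}}$ is just $\varphi_{v,c,s}$ traversed backwards, which lives on the same point set of $M$. Thus the assignment $(c,s)\mapsto(\overline{c},\overline{s})$ gives a bijection $\Gamma_{v,p,q}(t)\to\Gamma_{-v,q,p}(t)$, and since $p\in A$, $\varphi_{v,c}(p,s)=q\in B$ is equivalent to $\varphi_{-v,\overline{c}}(q,\overline{s})=p$ with $q\in B$, $p\in A$, this assembles (over all $p\in A$, $q\in B$) to $\Gamma_{v,A,B}(t)\simeq\Gamma_{-v,B,A}(t)$. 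Continuity in both directions is clear because reversing a tuple is continuous and the flows are smooth, so the map is a homeomorphism on each piece $\Gamma^c$ and hence on the disjoint union.

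Next I would derive the reachable-set identities purely formally from the $\Gamma_{v,A,B}(t)\simeq\Gamma_{-v,B,A}(t)$ statement. By definition $\Gamma_{-v,A}(t) = \{q \mid \Gamma_{-v,A,q}(t)\neq\emptyset\}$, and $\Gamma_{-v,A,q}(t)\simeq\Gamma_{v,q,A}(t)$, which is nonempty exactly when $q\in\Gamma^-_{v,A}(t)$; this gives the first identification. The versions with "$\leq t$" follow by taking unions over $0\le s\le t$, and the unbounded-time versions by taking unions over all $t\ge 0$ — in each case the time-reversal bijection respects these unions since it preserves the total time $t$. The front-of-influence identity $\mathrm{F}_{-v,A}(t)=\partial\Gamma_{-v,A,\leq}(t)=\partial\Gamma^-_{v,A,\leq}(t)$ is then immediate from the definition of $\mathrm{F}$ together with the already-established $\Gamma_{-v,A,\leq}(t)\simeq\Gamma^-_{v,A,\leq}(t)$.

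For the smoothness clause, I would invoke Theorem \ref{sii}: the criterion there for $\Gamma^c_{p,q}(t)$ to be a smooth submanifold of $\Delta_n^t$ is a maximal-rank condition on an explicit list of tangent vectors built from the $v_{c_i}$ and the iterated differentials $d^M\varphi_{(c_{i+1},\dots,c_n)}$. Replacing $v$ by $-v$ and $(c,s)$ by $(\overline c,\overline s)$ transforms this list by an overall sign and by applying the differential of the (invertible) time-reversal diffeomorphism, neither of which changes the rank; hence the maximal-rank condition holds for $(M,-v)$ iff it holds for $(M,v)$, so $(M,-v)$ has a smooth moduli space of directed paths and the homeomorphisms above are in fact diffeomorphisms. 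Finally, that these need not preserve orientation is seen already for constant vector fields on $\mathbb{R}^n$: reversing a path reverses the order of the segment-times, i.e. acts on $\Delta_n^t$ (in the coordinates $l_i$ of Section \ref{msi}) by the map $(l_1,\dots,l_n)\mapsto(t-l_n,\dots,t-l_1)$, whose linear part is (up to translation) the product of the order-reversing permutation matrix and $-\mathrm{Id}$, with determinant $(-1)^{n}(-1)^{\lfloor n/2\rfloor}$, which is $-1$ for suitable $n$; combined with the sign coming from $df = -\mathrm{Id}$ acting on the normal bundle in the diagram of Corollary \ref{oc}, one sees the orientation may be reversed, and a one-line example (e.g. $n=1$, $k=2$) suffices to exhibit this.

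The main obstacle I expect is not conceptual but bookkeeping: making the induction "$\varphi_{-v,\overline c}(q,\overline s)=p$" and the identification of $\varphi_{-v,\overline c,\overline s}$ with the reversed path completely precise requires care with the amalgamated-interval reparametrizations of Remark \ref{gmii} and with the indexing of the intermediate points $p_0,\dots,p_{n+1}$ under reversal; once that is nailed down, everything else is formal.
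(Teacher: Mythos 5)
Your proposal is correct and follows essentially the same route as the paper: the paper's proof is precisely the reversal map $(c,s)\mapsto(\overline{c},\overline{s})=(c_n,\dots,c_0,s_n,\dots,s_0)$ together with the observation that $\varphi_{v,c}(p,s)=q$ and $\varphi_{-v,\overline{c}}(q,\overline{s})=p$ are equivalent. You in fact supply more detail than the paper does (the induction via the group property of the flows, the formal derivation of the reachable-set identities, and the smoothness and orientation clauses, which the paper leaves unargued), but the underlying idea is identical.
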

\begin{proof}We define a map $\ \overline{(\ )}:\Gamma_{v,A,B}(t) \ \longrightarrow \  \Gamma_{-v,B,A}(t)\ $ as follows
$$\overline{(c,s)}\  =  \ \overline{(c_0,...,c_n,s_0,...,s_n)} \  =  \ (\overline{c},\overline{s}) \ =  \ (c_n,...,c_0,s_n,...,s_0).$$ This map is an homeomorphism since the map
$$\overline{(\ )}:D(n,k) \ \longrightarrow \  D(n,k)$$ is bijective, and the map
$$\overline{(\ )}:\Gamma^c_{v,A,B}(t) \ \longrightarrow \  \Gamma^{\overline{c}}_{-v,B,A}(t)$$ is an homeomorphism
as the equations
$$\varphi_{v,c}(p,s)\ = \ q   \ \ \ \ \ \ \ \mbox{and} \ \ \ \ \ \ \ \varphi_{-v,\overline{c}}(q,\overline{s})\ = \ p \ \ \ \ \ \mbox{are equivalent}.$$
\end{proof}

In quantum mechanics the proposed integration domain of a Feynman integral is usually the space of differentiable paths,  with fixed endpoints,  on a manifold. We think of the moduli spaces of directed paths $\ \Gamma_{p,q}(t) \ $ as being analogues for the integration domains for Feynman path integrals, where in addition to boundary restrictions, we impose tangential restrictions on the allowed paths; these restrictions induce a partition of path-space into finite dimensional pieces. The question arises: Can we somehow approach the full Feynman domains of integration from the moduli spaces of directed paths? In other words, is it  possible to relax our definition of directed paths, or  perform some kind of limit procedure that allow us to approach Feynman integrals from the viewpoint of indirect influences? We left this problem open for future research, and limit ourselves to make a couple of remarks along this line of thinking.\\

Clearly what one should do is to allow more paths into our moduli spaces. One way to go is to replace the vector fields $\ v_j \ $ by sections of the projective tangent bundle $\ \mathbb{P}TM, \ $ so that one fixes the directions along which our curves can move, but leave the speeds unconstrained. Although this approach may be of interest, finite dimensionality is lost. Incidentally, this approach establishes the connection with directed topological spaces \cite{grandis}. \\

Instead we propose another approach. Given a directed manifold $$ (M, v)\ = \ (M,v_1,....,v_k)$$  we consider the tuple $\ v(a,b) \ $  of vector fields on $\ M, \ $ for $\ a,b \in \mathbb{N}_+, \ $
given by the lexicographically ordered set:
$$v(a,b) \  =  \ \{ \ \frac{i}{b}v_j\ | \ -ab \leq i \leq ab, \  \ j \in [k]\ \}. $$ Indirect influences on the directed manifold $\ (M, v(a,b))\ $ are exerted trough paths along the directions defined by the vector fields $\ v_j \ $ with rather arbitrary speeds,  if $\ a \ $ and $\ b \ $ are large numbers. Piecewise finite dimensionality is preserved for $\ a\ $ and $\ b\ $ fix.\\

To relax even further the restrictions on the paths in our moduli spaces we consider directed manifolds of the form $\ (M, <v(a,b)>)\ $ where in $\ <v(a,b)> \ $ we include all vector fields that are finite sums of vector fields in $\ v(a,b). \ $ Indirect influences in $\ (M, <v(a,b)>)\ $ are exerted trough paths with rather arbitrary speeds and directions; for example, if the vector fields in $\ v(a,b) \ $ at some point contain a basis of the tangent space, then essentially all directions and speeds are allowed, for $\ a \ $ and $\ b \ $ large, at that point. Piecewise finite dimensionality is preserved for $\ a \ $ and $\ b \ $ fix.\\

The fundamental question is whether it is possible to make any sense of the limit of the moduli spaces of directed path for the spaces $\ (M, <v(a,b)>)\ $ as $\ a \ $ and $\ b \ $ grow to infinity, a question however beyond the scope of this work.

\section{Indirect Influences on Product/Quotient Manifolds}\label{iipqm}

Let $\ (M, v_1,...,v_k)\ \ $ and  $\ \ (N, u_1,...,u_l)\ $ be directed manifolds. The natural isomorphism $$T(M\times N) \ \ \simeq \ \ \pi_M^{\ast}TM \ \oplus \ \pi_N^{\ast}TN,$$ allows us to consider $$(M\times N, v_1,...,v_k, u_1,...,u_l)\ \ \ \mbox{as a directed manifold,} $$ where one should more formally write $\ (v_i,0)\ $ instead of $\ v_i, \ $ and $\ (0,u_j)\ $ instead of $\ u_j.$ \\

Let $\ \mathrm{diman} \ $ be the category of directed manifolds. We allow in  $\mathrm{diman}$  manifolds with connected components of different dimensions, and assume by convention
that the set with one element is a  directed manifold.

\begin{prop}{\em The product defined above gives $\mathrm{diman}$ the structure of a monoidal category with unit the set $\ [1].$
}
\end{prop}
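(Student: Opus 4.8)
The plan is to verify the axioms of a monoidal category directly, since the product was essentially defined explicitly. First I would observe that the product operation $\otimes$ on objects, sending a pair of directed manifolds $(M, v_1,\dots,v_k)$ and $(N, u_1,\dots,u_l)$ to $(M\times N, v_1,\dots,v_k,u_1,\dots,u_l)$, is well-defined by the natural splitting $T(M\times N) \simeq \pi_M^{\ast}TM \oplus \pi_N^{\ast}TN$ invoked above; here the tuple of vector fields on $M\times N$ has length $k+l \geq 1$, so the result is again a legitimate object of $\mathrm{diman}$. Next I would extend $\otimes$ to morphisms: given $(f,\alpha):(M,v)\to(M',v')$ and $(g,\beta):(N,u)\to(N',u')$, set $(f,\alpha)\otimes(g,\beta) = (f\times g, \alpha\sqcup\beta)$, where $\alpha\sqcup\beta:[k+l]\to[k'+l']$ is the obvious map acting as $\alpha$ on $[k]$ and as a shifted copy of $\beta$ on $\{k+1,\dots,k+l\}$. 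One checks the morphism identity $d(f\times g)(v_i,0) = (df(v_i),0) = (v'_{\alpha(i)},0)$ and similarly for the $u_j$ component, so $(f\times g,\alpha\sqcup\beta)$ is indeed a morphism in $\mathrm{diman}$; functoriality of $\otimes$ (compatibility with composition and identities) then follows from functoriality of the cartesian product of manifolds together with the evident fact that $(\alpha'\circ\alpha)\sqcup(\beta'\circ\beta) = (\alpha'\sqcup\beta')\circ(\alpha\sqcup\beta)$.

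Having established that $\otimes$ is a bifunctor, I would produce the coherence data. The associator $a_{M,N,K}:(M\otimes N)\otimes K \to M\otimes(N\otimes K)$ is built from the canonical diffeomorphism $(M\times N)\times K \to M\times(N\times K)$ paired with the unique order-preserving bijection between the concatenated index tuples $([k]\sqcup[l])\sqcup[r]$ and $[k]\sqcup([l]\sqcup[r])$, which is in fact the identity once one writes both as $[k+l+r]$; so $a$ is essentially an identity and is trivially natural. The left and right unitors use the convention that the one-point set $[1]$ is the monoidal unit: here I should read $[1]$ as the directed manifold $(\{\ast\}, 0)$ — a single point carrying the unique (necessarily zero) vector field — so that $[1]\otimes(M,v) = (\{\ast\}\times M, 0, v_1,\dots,v_k)$ and $(M,v)\otimes[1] = (M\times\{\ast\}, v_1,\dots,v_k, 0)$. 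The unitors $\lambda_M$ and $\rho_M$ come from the projection diffeomorphisms $\{\ast\}\times M \to M$ and $M\times\{\ast\}\to M$, together with index maps deleting the extra zero vector field; I would note that this forces $\lambda$ and $\rho$ to drop a zero entry, which is harmless since a zero vector field contributes no one-direction paths, but it does mean the unit object must be taken with its zero vector field rather than with an empty tuple (the paper requires $k\geq 1$).

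Finally I would verify the two coherence diagrams — Mac Lane's pentagon for the associator and the triangle relating associator and unitors. Because every structural isomorphism in sight (associator, unitors) is, after the standard identifications of iterated cartesian products and concatenated index tuples, literally an identity map, both coherence diagrams commute on the nose; I would spell this out by unwinding the definitions on objects and on the index tuples, rather than by any real computation. I would close by remarking that strictly speaking one obtains a strict monoidal category (all coherence isomorphisms are identities) once products of manifolds are treated up to the canonical identifications, and that the only genuinely delicate point is the bookkeeping of the index maps $\alpha$ under concatenation — in particular making sure the convention "$[1]$ is the unit" is interpreted as the one-point manifold with its zero vector field, so that the bifunctor is defined on an object with $k\geq 1$. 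That bookkeeping, though entirely routine, is where I expect the only real care to be needed; everything else is inherited from the cartesian monoidal structure on the category of smooth manifolds.
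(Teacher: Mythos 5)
The paper states this proposition without proof, so your write-up is being judged on its own merits. Your construction of the bifunctor (the index map $\alpha\sqcup\beta$, the check $d(f\times g)(v_i,0)=(v'_{\alpha(i)},0)$, functoriality) and of the associator, together with the observation that all coherence isomorphisms are identities after the standard identifications, is correct and is exactly the routine verification the authors are implicitly invoking.

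There is, however, a genuine error in the one place you yourself flagged as delicate: the unit. You take $[1]$ to be the one-point manifold equipped with the zero vector field, "so that the bifunctor is defined on an object with $k\geq 1$." With that choice the unitors do not exist. Indeed $[1]\otimes (M,v_1,\dots,v_k)$ then carries $k+1$ vector fields $(0,v_1,\dots,v_k)$, and a morphism to $(M,v_1,\dots,v_k)$ is a pair $(\pi_M,\alpha)$ with $\alpha\colon[k+1]\to[k]$ \emph{defined on every index} and satisfying $d\pi_M(0)=v_{\alpha(1)}$; since $d\pi_M(0)=0$, this forces some $v_i\equiv 0$, which fails for a general directed manifold — there is no such thing as an index map that "deletes" the extra zero field. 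Even if it were a morphism it could not be an isomorphism, because an isomorphism in $\mathrm{diman}$ requires $\alpha$ to be a bijection and $[k+1]\not\simeq[k]$. The paper's sentence "assume by convention that the set with one element is a directed manifold" is there precisely to admit the opposite reading: the point with the \emph{empty} tuple of vector fields (an explicit exception to the $k\geq 1$ requirement, which is why a convention is needed at all — a point with the zero field would already be a directed manifold and require no convention). With the empty tuple, $[1]\otimes(M,v)=(\{\ast\}\times M,\,v_1,\dots,v_k)$ has the same $k$ fields, and the unitor is the projection paired with the identity of $[k]$, manifestly an isomorphism. Replace your unit by this one and the rest of your argument goes through unchanged.
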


Fix $\ A \ \subseteq \ [n]. \ $ We say that a map $\ c:A \longrightarrow [k]\ $ is a pattern if $\ c(i)\  \neq \  c(i+1)\ $ for all contiguous elements $\ i,\ i+1 \in A.\ $
Thus a pattern for the product manifold $\ M\times N\ $ is given by a map
$\ \ c:[n]\ \longrightarrow \ [k] \sqcup [l]  \ \ \ \mbox{such that its restrictions}$
$$\ \ \ \ c|_{c^{-1}[k]}: c^{-1}[k]\ \longrightarrow \ [k]\ \ \ \ \ \ \ \ \mbox{and} \ \ \ \ \ \ \ \ c|_{c^{-1}[l]}: c^{-1}[l] \ \longrightarrow \ [l]\ \ \ \ $$ are patterns on $\ c^{-1}[k]\ \ $ and $\ \ c^{-1}[l], \ $ respectively.

\begin{prop}{\em Let $\ (p_1,p_2),\ (q_1,q_2) \ \in \ M\times N, \ $ and let $\ c:[n] \longrightarrow [k] \sqcup [l]\ $ be a pattern. We have a canonical homeomorphism: $$\Gamma_{(p_1,p_2),(q_1,q_2)}^{M\times N,\ c} \ \ \simeq \ \ \Gamma_{p_1,\ q_1}^{N,\ c|_{c^{-1}[k]} } \ \times \ \Gamma_{p_2,\ q_2}^{N,\ c|_{c^{-1}[l]} } .$$ }
\end{prop}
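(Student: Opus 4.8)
The plan is to exhibit the claimed homeomorphism explicitly by unravelling how the iterated flow on $M\times N$ decomposes into the flows on the two factors. First I would fix the pattern $c:[n]\longrightarrow [k]\sqcup [l]$ and write $A=c^{-1}[k]$, $B=c^{-1}[l]$, so $[n]=A\sqcup B$ as a disjoint union of subsets; a time distribution $s=(s_0,\dots,s_n)\in\Delta_n^t$ then splits into the subtuple $s|_A$ indexed by $A$ and the subtuple $s|_B$ indexed by $B$. The key observation is that because the vector fields coming from $M$ act only on the first factor and those coming from $N$ only on the second — recall $T(M\times N)\simeq \pi_M^\ast TM\oplus\pi_N^\ast TN$, and the fields are $(v_i,0)$ and $(0,u_j)$ — the flow $\varphi_{(v_i,0)}$ moves only the $M$-coordinate and fixes the $N$-coordinate, and symmetrically for $\varphi_{(0,u_j)}$. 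Consequently, when one composes the iterated flow $\varphi_c((p_1,p_2),s)$, the steps indexed by $A$ only advance the first coordinate and the steps indexed by $B$ only advance the second, and these two groups of steps commute with each other and do not interfere. By an induction on $n$ (the length of $c$), I would prove the identity
$$\varphi_c^{M\times N}\big((p_1,p_2),s\big)\ =\ \Big(\varphi^{M}_{c|_A}\big(p_1, s|_A\big),\ \varphi^{N}_{c|_B}\big(p_2,s|_B\big)\Big),$$
the inductive step being a one-line check that prepending a step of type $[k]$ (resp.\ $[l]$) only alters the first (resp.\ second) entry, exactly as prescribed by the recursion in Remark \ref{ifm}.

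Granting that identity, the set-theoretic bijection is immediate: $\varphi_c^{M\times N}((p_1,p_2),s)=(q_1,q_2)$ holds if and only if $\varphi^{M}_{c|_A}(p_1,s|_A)=q_1$ and $\varphi^{N}_{c|_B}(p_2,s|_B)=q_2$ hold simultaneously, which says precisely that $s|_A\in\Gamma_{p_1,q_1}^{M,c|_A}$ and $s|_B\in\Gamma_{p_2,q_2}^{N,c|_B}$. So the assignment $s\mapsto (s|_A,\,s|_B)$ is the required map. One small point deserves care: the target simplices carry the constraints $\sum_{i\in A}s_i = t_1$ and $\sum_{i\in B}s_i = t_2$ with $t_1+t_2=t$, and $t_1,t_2$ are not fixed in advance — so strictly the right-hand side is a union over all decompositions $t=t_1+t_2$ of products $\Gamma_{p_1,q_1}^{M,c|_A}(t_1)\times\Gamma_{p_2,q_2}^{N,c|_B}(t_2)$; I would either absorb this into the notation (matching the convention in the product statement, where the superscripted moduli spaces without a time argument range over all times) or phrase the homeomorphism as being compatible with the maps to $\{(t_1,t_2):t_1+t_2=t\}$. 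This reshuffling of coordinates from $\mathbb{R}^{n+1}$ to $\mathbb{R}^{|A|}\times\mathbb{R}^{|B|}$ is a linear bijection, hence a homeomorphism (indeed a diffeomorphism), and it restricts to the claimed bijection between the moduli spaces; continuity of inverse is automatic.

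The main obstacle is not any single hard estimate but rather keeping the bookkeeping honest: the indices in $A$ and $B$ are interleaved inside $[n]$, so one must be precise about what "$s|_A$" means as an element of a simplex indexed by the order-isomorphic copy of $A$ in $[0,|A|-1]$, and about the fact that the pattern condition $c(i)\neq c(i+1)$ on $[n]$ restricts to the pattern condition on each of $A$ and $B$ (contiguous elements of $A$ need not be contiguous in $[n]$, so there is genuinely nothing to check there — any value reassignment is allowed — which is consistent with the definition of pattern on a subset given just before the proposition). I would also remark that under the identification the natural measures match, since the coordinate reshuffling has Jacobian $1$ and the volume form $dl_1\wedge\cdots\wedge dl_n$ on $\Delta_n^t$ pulls back to the product of the volume forms on $\Delta_{|A|}^{t_1}$ and $\Delta_{|B|}^{t_2}$; hence whenever both factors have smooth moduli spaces of directed paths (so that Theorem \ref{sii} and Corollary \ref{oc} apply), the homeomorphism above is an isometry of oriented Riemannian manifolds, giving $\mathrm{vol}\big(\Gamma_{(p_1,p_2),(q_1,q_2)}^{M\times N,c}\big)$ as a convolution in the time variable of the volumes of the factor moduli spaces.
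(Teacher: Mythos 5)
Your proposal is correct and follows essentially the same route as the paper: the paper's proof simply exhibits the restriction map $s\mapsto (s|_{c^{-1}[k]},\,s|_{c^{-1}[l]})$, landing in $\Gamma^{M,\,c|_{c^{-1}[k]}}_{p_1,q_1}(a)\times\Gamma^{N,\,c|_{c^{-1}[l]}}_{p_2,q_2}(t-a)$ with $a=\sum_{i\in c^{-1}[k]}s_i$, which is exactly your map together with your observation about the induced splitting of the time $t$. The only difference is that you spell out the justification the paper leaves implicit (the factorization of the iterated flow on the product, proved by induction on the length of $c$), which is a worthwhile addition rather than a divergence.
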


\begin{proof} The desired homeomorphism sends
$$ s \ \in \ \Gamma_{(p_1,p_2),(q_1,q_2)}^{M\times N,\ c}(t)\ \ \ \subseteq \ \ \ \Gamma_{(p_1,p_2),(q_1,q_2)}^{M\times N,\ c}\ \ $$ to the pair
$$(s|_{c^{-1}[k]}, \ s|_{c^{-1}[l]}) \ \in \ \Gamma_{p_1,\ q_1}^{N,\ c|_{c^{-1}[k]} } (a) \ \times \ \Gamma_{p_2,\ q_2}^{N,\ c|_{c^{-1}[l]} } (t-a),$$
where $$a \ =  \ \sum_{i \in c^{-1}[k]}s_i .$$
\end{proof}

Next we consider the moduli spaces of directed paths on quotient manifolds. Let $\ M \ $ be a smooth manifold, $\ G \ $ a compact Lie group acting freely on $\ M, \ $ and assume that the directed manifold $\ (M, v_1,...,v_k)\ $ is invariant under the action of $\ G, \ $ i.e.
the following identities hold:
$$d_pg(v_i) \ =  \ v_i(gp) \ \ \ \ \ \mbox{for all} \ \ \ p\in M, \ \ g \in G.$$
Then $\ M/G \ $ is a smooth manifold and it comes with a smooth quotient map
$$\pi: M \ \longrightarrow \ M/G ,$$ which induces a surjective map $\ d\pi: TM \ \longrightarrow \ T(M/G) , \ $ and canonical isomorphisms
$$\overline{d_p\pi}: T_pM/T_p(Gp) \  \longrightarrow  \ T_{\overline{p}}(M/G).$$
Note also that we have isomorphisms
$$T_{\overline{p}}(M/G) \ \ \simeq \ \ \Big(\bigoplus_{g\in G}T_{gp}M \Big)/G.$$
Thus we obtain the directed manifold $\ (M/G, \  \overline{v}_1,\ ... \ , \overline{v}_k)\ \ $ with $\ \ \overline{v}_i \ = \ d\pi(v_i)$.

\begin{thm}{\em Let $\ (M, v_1,...,v_k)\ $ be a directed manifold, invariant under the action of the compact Lie group $\ G, \ $
and let $\ p,q \in M. \ $ Then $\ (M/G, \  \overline{v}_1,\ ... \ , \overline{v}_k)\ \ $ with $\ \ \overline{v}_i \ = \ d\pi(v_i)$ is a directed
manifold, $\ G \ $ acts naturally on $ \ \Gamma_{G_p,G_q}^{M}(t), \ $ and we have that
$$\Gamma_{\overline{p}, \overline{q}}^{M/G}(t) \ \ \simeq \ \ \Big(\Gamma_{G_p,G_q}^{M}(t)\Big)/G .$$ }
\end{thm}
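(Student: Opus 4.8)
The plan is to establish the homeomorphism $\Gamma_{\overline{p},\overline{q}}^{M/G}(t)\simeq\big(\Gamma_{G_p,G_q}^M(t)\big)/G$ by constructing explicit maps in both directions and checking they are mutually inverse, $G$-equivariant, and continuous. First I would note that $\pi$ intertwines flows: since the identity $d_pg(v_i)=v_i(gp)$ makes each $v_i$ a $G$-invariant vector field, $\overline{v}_i=d\pi(v_i)$ is well-defined, and by uniqueness of integral curves one gets $\pi(\varphi_{v_i}(p,s))=\varphi_{\overline{v}_i}(\overline{p},s)$. Iterating this over a pattern $c\in D(n,k)$ gives $\pi(\varphi_{v,c}(p,s))=\varphi_{\overline{v},c}(\overline{p},s)$ for every time distribution $s$. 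Since patterns $c$ and time distributions $s\in\Delta_n^t$ are intrinsic combinatorial/simplicial data that do not change under $\pi$, this identity is the heart of the correspondence.

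Next I would define the map $\Phi:\Gamma_{G_p,G_q}^M(t)\to\Gamma_{\overline p,\overline q}^{M/G}(t)$ by $\Phi(c,s)=(c,s)$, viewed now as a directed path in $M/G$ starting at $\overline p$. This is well-defined: a point of $\Gamma_{G_p,G_q}^M(t)$ is a pair $(c,s)$ with $\varphi_{v,c}(p',s)\in G_q$ for some $p'\in G_p$; applying $\pi$ and the flow-intertwining identity gives $\varphi_{\overline v,c}(\overline p,s)=\pi(\varphi_{v,c}(p',s))\in\pi(G_q)=\{\overline q\}$, so $(c,s)\in\Gamma_{\overline p,\overline q}^{M/G}(t)$. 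Moreover $\Phi$ is constant on $G$-orbits — if $g\cdot(c,s)$ denotes the action coming from $g$ acting on the initial point $p'\mapsto gp'$, the resulting pair $(c,s)$ is literally unchanged in the $c,s$ coordinates — so $\Phi$ descends to $\overline\Phi:\big(\Gamma_{G_p,G_q}^M(t)\big)/G\to\Gamma_{\overline p,\overline q}^{M/G}(t)$. For the inverse, given $(c,s)\in\Gamma_{\overline p,\overline q}^{M/G}(t)$, I would lift: the integral curves of the $v_i$ through a chosen point $\widetilde p\in\pi^{-1}(\overline p)$ are uniquely determined, producing an honest pair $(c,s)$ in $M$ with $\varphi_{v,c}(\widetilde p,s)\in\pi^{-1}(\overline q)=G_q$; different choices of $\widetilde p\in G_p$ differ by the $G$-action, so this lift is well-defined into the quotient and is manifestly inverse to $\overline\Phi$.

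The remaining work is topological: one must check that $\Phi$ is continuous (clear, since on each component $\Gamma^c$ it is the identity of a subspace of $\Delta_n^t$), that the quotient map $\Gamma_{G_p,G_q}^M(t)\to\big(\Gamma_{G_p,G_q}^M(t)\big)/G$ is open, and that the lifting map is continuous — here one uses that $G$ is compact (so the $G$-action on $M$ is proper, the quotient $\pi$ is a principal bundle over the free locus, and local sections of $\pi$ exist), which lets one lift a continuous family of directed paths continuously. I expect the main obstacle to be precisely this last point: verifying that the inverse map $\overline\Phi^{-1}$ is continuous as a map into the quotient, i.e. that a continuous path of directed paths downstairs can be lifted continuously upstairs up to the $G$-ambiguity. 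This is where compactness of $G$ is genuinely needed, via local triviality of the bundle $M\to M/G$ and uniqueness of lifts of integral curves; once a local section of $\pi$ near $\overline p$ is fixed, the lift of $(c,s)$ depends continuously (indeed smoothly) on the data through the smooth dependence of flows on initial conditions established in Remark \ref{ifm}, and compatibility of overlapping sections is absorbed by the quotient by $G$.
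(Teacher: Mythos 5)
Your proposal is correct and follows essentially the same route as the paper: $G$-invariance of the vector fields makes $\pi$ intertwine the iterated flows, the identification is the identity on the $(c,s)$ data, and the ambiguity in the choice of lift of $\overline{p}$ is exactly the $G$-action. The paper organizes this by exhibiting $G$-equivariant homeomorphisms from the intermediate space $\Gamma_{p,Gq}^{M}(t)$ onto both sides, so that every map in sight is the identity of a subset of $\Delta_n^t$; this makes your concluding worries about continuous lifting (properness, local sections of $M\to M/G$) unnecessary.
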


\begin{proof}The  result follows from the fact that there are $G$-equivariant homeomorphisms
$$\Gamma_{p, Gq}^{M}(t) \ \ \longrightarrow \ \ \Gamma_{\overline{p}, \overline{q}}^{M/G}(t) \ \ \ \ \ \ \  \mbox{and} \ \ \ \ \ \ \
\Gamma_{p, Gq}^{M}(t) \ \ \longrightarrow \ \ \Big(\Gamma_{Gp,Gq}^{M}(t)\Big)/G. $$
As the vector fields $\ v_i\ $ are $\ G$-invariant, the corresponding flows $\ \varphi_i \ $ are also $\ G$-invariant:
$$\varphi_i(gp,t) \  =  \ g\varphi_i(p,t), \ \ \ \ \ \mbox{and therefore} \ \ \ \ \ \varphi_{c,s}(gp,t) \  =  \ g \varphi_{c,s}(p,t) $$
for any pattern and time distribution $\ (c,s). \ $ This shows that $\ G\ $ acts on $\ \Gamma_{Gp,Gq}^{M}(t), \ $ and  that $\ \Gamma_{p,q}^M(t)\  \simeq  \ \Gamma_{gp,gq}^M(t) \ \ \ \ \mbox{for}  \ \ \ p,q \in M. \ $ \\

A pair $(c,s)$ defines a directed path from $\ \overline{p}\ $ to $\ \overline{q}\ $ in $\ M/G\ $ if and only if
$\ \ \overline{\varphi}_{c}(\overline{p},s) \ = \ \overline{q}.\ $ If the latter equation holds we have that
$$\pi \varphi_{c}(p,s) \  =  \ \overline{\varphi}_{c}(\overline{p},s) \  =  \ \overline{q} \ \ \ \ \ \mbox{and \ \ thus} \ \ \ \ \ \varphi_{c}(p,s) \in Gq. $$
Therefore $\ (c,s)\ $ defines a directed path from  $\ \overline{p}\ $ to $\ \overline{q}\ $  if and only if $\ (c,s) \ $ defines an indirect influence from $\ p \ $ to $\ Gq.\ $ So we have shown that the map $\ \Gamma_{p, Gq}^{M}(t) \ \longrightarrow \   \Gamma_{\overline{p}, \overline{q}}^{M/G}(t)\ $ is a $G$-equivariant homeomorphism.\\

Similarly,  if $\ a \in Gp, \ $ then $\ \ \overline{\varphi}_{c}(\overline{p},s) \ = \ \overline{q}\ \ $ if and only if
$$\varphi_{c}(a,s)\  =  \ \varphi_{c}(gp,s)\  =  \ g\varphi_{c}(p,s) \ \ \ \ \ \mbox{belongs to } \ \ Gq.$$ Thus the map
$\ \Gamma_{p, Gq}^{M}(t) \ \longrightarrow \ \Big(\Gamma_{G_p,G_q}^{M}(t)\Big)/G \ $ is a $\ G$-equivariant homeomorphism.\\

\end{proof}

\section{Directed Paths for Constant Vector Fields}\label{iicvf}

As a first and pretty workable example, linking the theory of indirect influences on directed manifolds with linear programming techniques, we consider constant vector fields on affine spaces. Thus we fix a directed manifold $\ (\mathbb{R}^d, v_1,...,v_k) \ $ where the vector fields
$$v_j\  =  \ \sum_{j=1}^d a_{ij} \parcial{}{x_i},$$ have constant coefficients $\ a_{ij} \in \mathbb{R}\ \ $   for $\ \ i \in [d], \ $ $\ j\in [k]. \ $

\begin{thm}\label{t1}{\em Consider the directed manifold  $\ (\mathbb{R}^d, v_1,...,v_k). \ $
Fix a pattern $\ c \in D(n,k)\ $ and points $\ p,q \in \mathbb{R}^d. \ $ The space of directed paths $\ \Gamma_{p,q}^{c}(t) \ $ is the convex polytope given on the variables $\ s \in \mathbb{R}_{\geq 0}^{n+1}\ $ by the system of equations:
$$a_{ic(0)}s_{0} \ + \ \cdots \ + \ a_{ic(n)}s_n \  =   \ q_i-p_i, \ \ \ \mbox{for} \ \ i \in [d], \ \ \ \   \mbox{and}  \ \ \ \ s_0 \ + \ \cdots \ + s_n \  =  \ 1,$$
or equivalently in matrix notation
$$\left(
    \begin{array}{c}
      A_c \\
      1 \\
    \end{array}
  \right) s  \  =  \
  \left(
    \begin{array}{c}
      q-p \\
      t \\
    \end{array}
    \right),
 $$
where $\ A_c \ $ is the matrix of format $\ d\times (n+1)\ $ given by:
$$\ (A_c)_{ij}\ = \ a_{ic(j)}, \ \ \ 1=(1,...,1) \in \mathbb{R}^{n+1},$$
$$s \ = \ (s_0,...,s_n), \  \ p \ = \ (p_1,...,p_d),\ \ \
\mbox{and} \ \ \ q \ = \ (q_1,...,q_d).$$}
\end{thm}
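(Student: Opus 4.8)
The plan is to unwind the definitions and observe that for constant vector fields the iterated flow $\varphi_c(p,s)$ is an explicit affine function of the time distribution $s$. First I would recall from Remark \ref{ifm} that the flow generated by a constant vector field $v_j = \sum_i a_{ij}\,\partial/\partial x_i$ on $\mathbb{R}^d$ is simply translation: $\varphi_j(p,u) = p + u\,v_j$, where here $v_j$ is identified with the column vector $(a_{1j},\dots,a_{dj})^{T}$. This is immediate from integrating $\dot\varphi = v_j(\varphi)$ with $v_j$ constant, and it is globally defined, so the standing assumption of Remark \ref{ifm} holds automatically in this example.

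Next I would compute the iterated flow by induction on the length of the pattern $c = (c_0,\dots,c_n)$. Setting $p_0 = p$ and $p_{i+1} = \varphi_{c_i}(p_i, s_i) = p_i + s_i\,v_{c_i}$, a telescoping sum gives
$$\varphi_c(p,s) \ = \ p_{n+1} \ = \ p \ + \ \sum_{j=0}^{n} s_j\,v_{c_j}.$$
Writing this out coordinatewise, the $i$-th component is $p_i + \sum_{j=0}^n a_{i\,c(j)}\,s_j = p_i + \sum_{j=0}^n (A_c)_{ij}\,s_j$, i.e. $\varphi_c(p,s) = p + A_c\,s$ in the matrix notation of the statement. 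Then, by Definition \ref{ii}, $s \in \Gamma_{p,q}^c(t)$ if and only if $s \in \Delta_n^t$ and $\varphi_c(p,s) = q$; the first condition unpacks as $s \in \mathbb{R}_{\geq 0}^{n+1}$ together with $s_0 + \cdots + s_n = t$, and the second as $A_c s = q - p$. Stacking these yields exactly the block system $\binom{A_c}{1} s = \binom{q-p}{t}$ claimed. (I note the statement writes $s_0+\cdots+s_n = 1$ in the displayed scalar equations but $= t$ in the matrix form; I would simply use $t$ throughout, consistent with Definition \ref{ii}, treating the "$1$" as a typo or a normalization $t=1$.) Finally, since this solution set is the intersection of the affine subspace $\{A_c s = q-p,\ \mathbf{1}\cdot s = t\}$ with the positive orthant, it is a convex polytope, establishing the last assertion.

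I do not expect a genuine obstacle here: the whole content is that constant vector fields have affine (translation) flows, which collapses the generally transcendental condition $\varphi_c(p,s) = q$ into a linear one. The only points requiring minor care are (i) carrying out the induction on $l(c)$ cleanly — here the non-repetition condition $c_i \neq c_{i+1}$ plays no role and can be ignored — and (ii) reconciling the $t$ versus $1$ discrepancy in the statement, which I would resolve in favor of $t$ to match the definition of $\Delta_n^t$. The connection to linear programming advertised before the theorem then comes for free, since describing, optimizing over, or measuring $\Gamma_{p,q}^c(t)$ becomes a question about an explicitly presented polytope.
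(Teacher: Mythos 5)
Your proposal is correct and follows essentially the same route as the paper, whose proof consists of the single observation that a constant field $v$ has translation flow $p(t)=a+tv$; your telescoping computation of $\varphi_c(p,s)=p+\sum_j s_j v_{c_j}$ is just the explicit elaboration of that fact, and your reading of the "$=1$" in the scalar equations as a typo for "$=t$" is the right one.
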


\begin{proof}The result follows from the fact that the solutions of the differential equation $\ \dot{p}\ = \ v, \ $ where $\ v \ $ is constant and with initial condition $\ a, \ $ are of the form $\ p(t)\ = \ a+tv.$
\end{proof}

\begin{thm}\label{t2}{\em Consider the directed manifold  $\ (\mathbb{R}^d, v_1,...,v_k). \ $ For $\ p,q \in \mathbb{R}^d , \ $ the volume of
the space of directed paths $\ \Gamma_{p,q}^{c}(t)\ $ is given by
$$ \mathrm{vol}(\Gamma_{p,q}^{c}(t)) \ = \ \mathrm{vol}(\mathrm{Conv}(u_I)), \ \ \ \ \mbox{where:}$$
$\mathrm{Conv}(u_I)$ is the convex hull of the vector $\ u_I\ $ defined by the following conditions:
\begin{itemize}
  \item $I \subseteq [d]\ $ is a subset of cardinality $\ n$.
  \item The entries of the  vector $\ u_I \in \mathbb{R}_{\geq 0}^{n+1}\ $ vanish for indexes not in $\ I$.
  \item For a matrix $\ A \ $ we let $\ A_I\ $ be its restriction to the columns with indexes in $\ I. \ $ The set $\ I \ $ must be such that
  $$\mathrm{det}\left(
    \begin{array}{c}
      A_c \\
      1 \\
    \end{array}
  \right)_I   \ \ \ \neq \ \ \ 0.$$
  \item $u_I\ $ is the unique solution of the linear system:
  $$\left(
    \begin{array}{c}
      A_c \\
      1 \\
    \end{array}
  \right)_I u_I  \ \ \  = \ \ \
  \left(
    \begin{array}{c}
      q-p \\
      t \\
    \end{array}
    \right).$$
\end{itemize}
 }
\end{thm}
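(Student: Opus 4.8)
The plan is to read Theorem \ref{t2} as the combination of two classical facts: that a bounded polyhedron equals the convex hull of its vertices (the Minkowski--Weyl representation of a polytope), and that the vertices of a polyhedron in standard form are precisely its basic feasible solutions (the fundamental theorem of linear programming). Write $B=\binom{A_c}{1}$, a matrix of format $(d+1)\times(n+1)$, and $b=\binom{q-p}{t}\in\mathbb{R}^{d+1}$. By Theorem \ref{t1}, $\Gamma_{p,q}^c(t)$ is the polyhedron $P=\{\, s\in\mathbb{R}_{\geq 0}^{n+1} \mid Bs=b\,\}$. Since every $s\in P$ satisfies $s\geq 0$ and $s_0+\cdots+s_n=t$, the set $P$ is contained in the simplex $\Delta_n^t$; in particular $P$ is compact, hence a convex polytope. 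It lies inside the affine subspace $L=\{\,s\in\mathbb{R}^{n+1}\mid Bs=b\,\}$, of dimension $n-d$ when $B$ has maximal rank $d+1$, and $\mathrm{vol}(\Gamma_{p,q}^c(t))$ is by definition the $(n-d)$-dimensional volume of $P$ taken with the Euclidean metric that $L$ inherits from $\mathbb{R}^{n+1}$ (equivalently, from $\Delta_n^t$ in the coordinates $l_i$).

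First I would invoke the structure theorem for polytopes: a compact polyhedron is the convex hull of its finitely many vertices, so $P=\mathrm{Conv}(\mathrm{vert}(P))$; being an equality of subsets of $\mathbb{R}^{n+1}$, it is automatically an equality of the corresponding induced volumes, with no Jacobian factor intervening. Then I would identify $\mathrm{vert}(P)$ with the family $\{u_I\}$ of the statement. By the fundamental theorem of linear programming, a point $s\in P$ is a vertex of $P$ if and only if the columns of $B$ indexed by the support $\{\,j:s_j>0\,\}$ are linearly independent; completing such a support to an index set $I$ for which $B_I$ is an invertible square submatrix of $B$ (so $\mathrm{det}(B_I)\neq 0$), the point $s$ is recovered as the solution of the square system $B_I u_I=b$ padded with zeros at the indices outside $I$. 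Conversely, any $I$ with $\mathrm{det}(B_I)\neq 0$ whose associated solution $u_I$ has all entries nonnegative produces a vertex of $P$. Hence the set $\{u_I\}$ of the statement is exactly $\mathrm{vert}(P)$, and $\mathrm{vol}(\Gamma_{p,q}^c(t))=\mathrm{vol}(P)=\mathrm{vol}(\mathrm{Conv}(\{u_I\}))$.

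\textbf{Main obstacle.} There is essentially no analytic content; the delicate points are bookkeeping ones. One must handle the degenerate cases: if $B$ fails to have maximal rank, or if $L\cap\mathbb{R}_{\geq 0}^{n+1}$ is not full-dimensional in $L$, then $\mathrm{vol}(\Gamma_{p,q}^c(t))=0$ and one should check that the right-hand side collapses correspondingly (the points $u_I$ then span an affine set of dimension $<n-d$). One must also verify that $\mathrm{det}(B_I)\neq 0$ together with $u_I\geq 0$ is the precise requirement singling out the vertices, rather than all intersection points of the supporting hyperplanes of the facets $\{s_j=0\}$ with $L$ --- this is exactly where the characterization of vertices as basic feasible solutions is used. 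Finally, one should state explicitly that ``volume'' is taken throughout as the intrinsic measure of the common affine hull of $P$ and $\mathrm{Conv}(\{u_I\})$, so that once the set-theoretic identity $P=\mathrm{Conv}(\mathrm{vert}(P))$ is in hand the two sides of the asserted equality are literally the same quantity.
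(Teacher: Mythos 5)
Your argument is correct and follows the same route as the paper, whose proof simply cites Theorem \ref{t1} together with ``standard results of linear programming''; you have spelled out exactly which standard results are meant (compactness of the polyhedron inside $\Delta_n^t$, the Minkowski--Weyl representation, and the identification of vertices with basic feasible solutions $u_I$ satisfying $\mathrm{det}(B_I)\neq 0$ and $u_I\geq 0$). Your attention to the degenerate cases and to the intrinsic volume on the affine hull supplies detail the paper leaves implicit, but the underlying approach is identical.
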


\begin{proof}Theorem \ref{t1} and standard results of linear programming \cite{Optimization1, Optimization2} one can show that
$\ \Gamma_{p,q}^{c}(t) \ = \ \mathrm{Conv}(u_I).$
\end{proof}

\subsection{Dimension One}

Consider the directed manifold $\ (\mathbb{R},\  a_1\frac{d}{d x},\ ...\ ,\ a_k\frac{d}{d x})\ $  where for simplicity we assume that $\ a_i \ \neq \ a_j. \ $  Fix a pattern $\ c \in D(n,k)\ $ and consider the space $\ \Gamma_{0,x}^{c}(t)\ $ of directed paths  from $\ 0 \ $ to $\ x \ $ exerted in time $\ t. \ $ The space $\ \Gamma_{0,x}^{c}(t)\ \subseteq  \ \mathbb{R}_{\geq 0}^{n+1} \ \ $  is the convex polytope defined  by the equations
$$a_{c(0)}s_0 \ + \ \cdots \ +\  a_{c(n)}s_n \ = \ x \ \ \ \ \ \ \ \mbox{and} \ \ \ \ \ \ \  s_0 \ + \ \cdots \ + \ s_n \ = \ t.$$
Consider the set $$D\ = \ \{\ (i , j)  \in  [n] \ \ | \ \ c(i) \ \neq \ c(j)\ \}.$$  By Theorem
$\ \Gamma_{0,x}^{c}(t)\ $  is  the convex polytope $ \ \mathrm{Conv}(u_{ij})\ $ generated by the vectors
$\ u_{ij},\ $  given for $\ (i,j) \in D \ $ by $$u_{ij} \ \ = \ \ (0, \ldots, \underset{i\uparrow}{l_i}, \ldots, 0 , \ldots, \underset{j\uparrow}{l_j}, \ldots, 0)\ \in \ \mathbb{R}_{\geq 0}^{n+1}$$ where
$$\left(
    \begin{array}{cc}
      a_{c_i} & a_{c_j} \\
      1 & 1 \\
    \end{array}
  \right)
 \left(
   \begin{array}{c}
     l_i \\
     l_j \\
   \end{array}
 \right) \ \ = \ \ \left(
                     \begin{array}{c}
                       x \\
                       t \\
                     \end{array}
                   \right)
 .$$

 \

Below we use the following identity, valid for $\ n,m \in \mathbb{N}, \ $ involving the classical beta $\ \mathrm{B}  \ $ and gamma $\ \Gamma \ $ functions:
$$\int_0^1s^n(1-s)^mds\ = \ \mathrm{B}(n+1,m+1)\ = \ \frac{\Gamma(n+1)\Gamma(m+1)}{\Gamma(n+m+2)}
\ = \ \frac{n! m!}{(n+m+1)!}.$$

\begin{thm}{\em Consider the directed manifold $\  (\mathbb{R}, \ \frac{d}{d x} , \ -\frac{d}{d x}). \ $ For $\ x,y \in \mathbb{R} \ $ we have that
$\ \mathrm{vol}(\Gamma_{0,x}(t)) =0\ $ if $\ |x|>t, \ $  $\ \mathrm{vol}(\Gamma_{0,x}(t)) =1\ $ if $\ |x|=t, \ $ and otherwise is given by:
$$\sum_{n=0}^\infty \Big[ \frac{(t+x)^{n} (t-x)^{n}}{n!^2 }\ + \
2t\frac{(t+x)^{n}(t-x)^{n}}{(n+1)! n!}\Big]2^{1-2n}.$$
Furthermore, we have that $$\mathrm{vol}(\Gamma_{x,0}(t)) \ = \ \mathrm{vol}(\Gamma_{0,x}(t)) \ \ \ \ \ \ \mbox{and} \ \ \ \ \ \
\mathrm{vol}(\Gamma_{x,y}(t)) \  =  \ \mathrm{vol}(\Gamma_{0,y-x}(t)). $$
The wave of influences for $\ t>0 \ $ is given by $$u(x,t)\  =  \ \int_{x-t}^{x+t} \mathrm{vol}(\Gamma_{y,x}(t))dy \ \ \ \ \  \mbox{is constant in}  \ \ \ x \in \mathbb{R},$$ and is given explicitly by $ \ \ u(x,t)\  = \   10e^t + 6e^{-t} - 16.$
}
\end{thm}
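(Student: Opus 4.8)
The plan is to compute $\mathrm{vol}(\Gamma_{0,x}^c(t))$ explicitly for each pattern $c \in D(n,2)$ in dimension one, sum over patterns, and then integrate in $y$ to obtain the wave of influences. First I would note that with $k=2$ and $a_1 = 1$, $a_2 = -1$, a pattern $c \in D(n,2)$ is completely determined by its first symbol $c_0 \in \{1,2\}$, since consecutive entries must differ; hence there are exactly two patterns of each length $n$. Fix such a $c$ and suppose $c$ has, among $s_0,\dots,s_n$, the coefficient $+1$ attached to a total ``$+$-time'' $\sigma_+$ and $-1$ to a ``$-$-time'' $\sigma_-$. The defining equations of $\Gamma_{0,x}^c(t)$ from Theorem \ref{t1} reduce to $\sigma_+ - \sigma_- = x$ and $\sigma_+ + \sigma_- = t$, so $\sigma_+ = (t+x)/2$ and $\sigma_- = (t-x)/2$ are forced (nonnegative exactly when $|x|\le t$). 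Thus $\Gamma_{0,x}^c(t)$ is, up to reindexing coordinates, a product of two simplices: the $+$-indexed coordinates range over a simplex of total mass $(t+x)/2$, and the $-$-indexed coordinates over a simplex of total mass $(t-x)/2$. Using $\mathrm{vol}(\Delta_m^a) = a^m/m!$, and recording how the Cartesian coordinates $l_1,\dots,l_n$ of $\Delta_n^t$ restrict to this affine subspace (the Jacobian of the splitting contributes a harmless factor I would track carefully), one finds
$$\mathrm{vol}(\Gamma_{0,x}^c(t)) \ = \ \frac{1}{p!}\Big(\frac{t+x}{2}\Big)^{p}\cdot \frac{1}{q!}\Big(\frac{t-x}{2}\Big)^{q},$$
where $p$ and $q = n-p$ are the numbers of $+$- and $-$-labelled coordinates in $c$. (The $\pm$ roles swap depending on $c_0$, but the two patterns of length $n$ split as $(p,q)=(\lceil (n+1)/2\rceil - 1, \lfloor (n+1)/2\rfloor)$ or the reverse; I would separate the cases $n$ even/odd.)

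Next I would sum over $n$ and over the two patterns of each length. Grouping by the pair $(p,q)$, the total is a sum of terms $\frac{1}{p!\,q!}\big(\tfrac{t+x}{2}\big)^p\big(\tfrac{t-x}{2}\big)^q$ with $(p,q)$ running over the relevant diagonal and near-diagonal strips; collecting powers of $\tfrac14(t+x)(t-x) = \tfrac14(t^2-x^2)$ yields, after reindexing, exactly the claimed closed form
$$\sum_{n=0}^\infty \Big[ \frac{(t+x)^{n} (t-x)^{n}}{n!^2 }\ + \
2t\frac{(t+x)^{n}(t-x)^{n}}{(n+1)! n!}\Big]2^{1-2n},$$
the first bracket coming from the balanced patterns ($p=q$) and the second, with its $2t$ prefactor, from the off-by-one patterns (where the identity $\tfrac{t+x}{2}+\tfrac{t-x}{2}=t$ is used to combine the two sign choices). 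The boundary values $\mathrm{vol}=0$ for $|x|>t$ (infeasible system) and $\mathrm{vol}=1$ for $|x|=t$ (only the length-zero pattern survives, as one of $\sigma_\pm$ vanishes) are immediate. Translation invariance, $\mathrm{vol}(\Gamma_{x,y}(t)) = \mathrm{vol}(\Gamma_{0,y-x}(t))$, follows from Theorem \ref{ibd} applied to the translation diffeomorphism, which commutes with the constant vector fields; and $\mathrm{vol}(\Gamma_{x,0}(t)) = \mathrm{vol}(\Gamma_{0,x}(t))$ follows from the direction-reversion isomorphism $\Gamma_{v,A,B}(t)\simeq\Gamma_{-v,B,A}(t)$ together with the symmetry $\{+\tfrac{d}{dx},-\tfrac{d}{dx}\} = \{-(+\tfrac{d}{dx}),-(-\tfrac{d}{dx})\}$ and the evident $x\mapsto -x$ symmetry of the formula.

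For the wave of influences I would substitute $g\equiv 1$, so $k(p,q,t) = \mathrm{vol}(\Gamma_{p,q}(t))$, and compute
$$u(x,t) \ = \ \int_{x-t}^{x+t}\mathrm{vol}(\Gamma_{y,x}(t))\,dy \ = \ \int_{x-t}^{x+t}\mathrm{vol}(\Gamma_{0,x-y}(t))\,dy \ = \ \int_{-t}^{t}\mathrm{vol}(\Gamma_{0,z}(t))\,dz,$$
which is manifestly independent of $x$; here I used translation invariance and the change of variable $z = x-y$, noting that the domain $\Gamma_x^-(t)$ is exactly the interval $[x-t,x+t]$ since that is the reachable-from set. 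It then remains to integrate the series term by term over $z\in[-t,t]$. Writing $z = t(2s-1)$ so that $t+z = 2ts$ and $t-z = 2t(1-s)$ turns each summand into a multiple of $\int_0^1 s^n(1-s)^n\,ds$ or $\int_0^1 s^{n}(1-s)^{n}\,ds$ with a shifted exponent, evaluated by the Beta-function identity $\int_0^1 s^n(1-s)^m\,ds = \frac{n!\,m!}{(n+m+1)!}$ quoted just above the theorem. This collapses the double factorials, and the two resulting numerical series sum to elementary exponentials in $t$; matching constants gives $u(x,t) = 10e^t + 6e^{-t} - 16$. The main obstacle I anticipate is purely bookkeeping: getting the correct Jacobian factor in the product-of-simplices volume and correctly pairing the two patterns of each length with the two brackets in the formula — once the combinatorial indexing $(p,q)$ is pinned down, everything else is mechanical. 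I would double-check the final constants by evaluating $u$ at $t=0$ (where the series gives $u(x,0) = \int_0^0 = 0$, consistent with $10+6-16=0$) and by differentiating in $t$.
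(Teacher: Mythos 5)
Your proposal follows essentially the same route as the paper: reduce each pattern (determined by its initial symbol) to the forced split $\sigma_\pm=(t\pm x)/2$, identify $\Gamma_{0,x}^{c}(t)$ with a product of two simplices, sum over the four cases given by the parity of the length and the initial direction, invoke Theorem \ref{ibd} for translation invariance, and evaluate the wave integral via the substitution $y=t(2s-1)$ and the Beta identity. The only caveat is the off-by-one in your $(p,q)$ bookkeeping (a pattern $c=(c_0,\dots,c_n)$ has $p+q=n+1$ coordinates, and each simplex factor has dimension one less than its coordinate count), which you already flag as the step to pin down and which resolves exactly as in the paper's four explicit cases.
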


\begin{proof} Fix $\ x\in \mathbb{R} \ $ and a pattern $\ c \in D(n,k). \ $ The space of directed paths
$\ \Gamma_{0,x}^{c}(t)\ $ is the polytope given by
$$\sum_{i=0}^n (-1)^{c_i} s_{i}\ =  \ x \ \ \ \ \ \ \ \ \mbox{and} \ \ \ \ \ \ \ \ \ \ \sum_{i=0}^n s_i\  =  \ t.$$
Since we have just two vector fields, a pattern $\ (c_0,...,c_n) \ $ is determined by its initial value $\ c_0. \ $  Figure 4 shows
the directed path associated to  the tuple   $ \  (7,5,3,7) \in \Gamma^{(1,2,1,2)}_{(0,-2)}.$
\begin{center}
\psset{unit=0.5cm}
\begin{pspicture}(-20,1)(2,7)
        \psline[linewidth=1pt]{}(-20,4)(2,4)
        \rput(-9,3.6){$0$}
        \psdot[dotstyle=*,
    dotsize=2pt](-9,4)
    \psline[linewidth=2pt]{->}(-9,4)(-2,4)
    \rput(-2,3.6){$7$}
    \psline[linewidth=2pt]{->}(-2,4.3)(-7,4.3)
    \rput(-7,3.6){$2$}
    \psline[linewidth=2pt]{->}(-7,4.6)(-4,4.6)
    \rput(-4,3.6){$5$}
    \psline[linewidth=2pt]{->}(-4,4.9)(-12,4.9)
    \rput(-12,3.6){$-2$}
    \rput(-9,1.5){Figure 4: Directed path associated to the tuple $\ (7,5,3,7)\  \in \ \Gamma^{(1,2,1,2)}_{(0,-2)}$.}
\end{pspicture}
\end{center}
We distinguish  four cases taking into account the initial value $\ c_0 \ $ and the parity of $\ n. \ $\\

\noindent Consider the pattern $\  (1,2, ... , 1,2)\ $  of length $\ 2n, \ $ for $\ n\geq 1.$
Then $\ \Gamma_{0,x}^{c}(t)\ $ is the polytope given by
$$\sum_{i=0}^{2n-1} (-1)^{i} s_i\  =  \ x \ \ \ \ \ \ \  \mbox{and} \ \ \ \ \ \ \ \sum_{i=0}^{2n-1} s_i\ =  \ t.$$
Setting $$\sum_{i=0}^{n-1}  s_{2i}\  =  \ a \ \ \ \ \ \ \ \ \mbox{and} \ \ \ \ \ \ \ \sum_{i=0}^{n-1}  s_{2i+1} \ =  \ b,$$
the  previous equations become  $\ a-b\ = \ x\ \ $ and $\ \ a+b\ = \ t, \ $ with solutions
$\ a\ = \ \frac{t+x}{2}\ \ $ and $\ \ b\ = \ \frac{t-x}{2}. \ $ By definition $\ a,b \geq 0, \ $
thus we must have $\ \normv{x} < t \ $ in order that $\ \Gamma_{0,x}^{c}(t) \ \neq \ \emptyset. \ \ $ For $\ \normv{x} < t, \ $
we have  that  $$\Gamma_{0,x}^{c}(t)\  =  \ \Delta_{n-1}(\frac{t+x}{2})\ \times \ \Delta_{n-1}(\frac{t-x}{2}),$$ and therefore
$$\mathrm{vol}(\Gamma_{0,x}^{c}(t))\  =  \ \frac{(t+x)^{n-1} (t-x)^{n-1}}{2^{2n-2}(n-1)!^2}.$$

\noindent For the pattern $\ (1,2, ... , 1,2,1)\ $ of length $\ 2n+1, \ $ with $\ n \geq 1, \ $
setting $$\sum_{i=0}^{n}  s_{2i}\  =  \ a  \ \ \ \ \ \  \mbox{and} \ \ \ \ \ \ \sum_{i=0}^{n-1}  s_{2i+1}\  =  \ b \ \ \ \ \ \mbox{we \ get \ that}$$
$$\mathrm{vol}(\Gamma_{0,x}^{c}(t))\  = \ \mathrm{vol}\Big[ \Delta_{n}(\frac{t+x}{2})\ \times \ \Delta_{n-1}(\frac{t-x}{2}) \Big] \  =  \
\frac{(t+x)^{n} (t-x)^{n-1}}{2^{2n-1}n!(n-1)!}.$$

\noindent The pattern $\ c=(2,1, \cdots, 2,1)\ $ of length $\ 2n, \ $ with $\ n
\geq 1,\ $ leads to
 $$\mathrm{vol}(\Gamma_{0,x}^{c}(t))\ \ = \ \ \mathrm{vol}\Big[ \Delta_{n-1}(\frac{t-x}{2})\ \times \ \Delta_{n-1}(\frac{t+x}{2}) \Big] \ \ = \ \
 \frac{(t+x)^{n-1} (t-x)^{n-1}}{2^{2n-2}(n-1)!^2}.$$

\noindent For the pattern  $\ c=(2,1, \cdots,2,1,2)\ $ of length $\ 2n+1, \ $ with $\ n \geq 1, \ $ we get that
$$\mathrm{vol}(\Gamma_{0,x}^{c}(t))\ \ = \ \ \mathrm{vol}\Big[ \Delta_{n}(\frac{t-x}{2})\ \times \ \Delta_{n-1}(\frac{t+x}{2}) \Big] \  =   \  \frac{(t+x)^{n-1} (t-x)^{n}}{2^{2n-1}(n-1)!n! }.$$
Therefore $\ \mathrm{vol}(\Gamma_{0,x}(t)) \ $  is for $\ |x|<t \ $ given by:
$$\sum_{n=1}^\infty \Big[ \frac{(t+x)^{n-1} (t-x)^{n-1}}{(n-1)!^2 }\ + \  \frac{(t+x)^n (t-x)^{n-1}}{n! (n-1)!} \ + \  \frac{(t+x)^{n-1}(t-x)^n }{n! (n-1)!} \Big]2^{1-2n}$$ yielding the desired result. \\

Applying Theorem \ref{ibd} to translations on $\ \mathbb{R}\ $ we obtain that:
$$\mathrm{vol}(\Gamma_{x,y}(t)) \  =  \ \mathrm{vol}(\Gamma_{x-x,y-x}(t)) \  =  \ \mathrm{vol}(\Gamma_{0,y-x}(t)). $$ In particular we get that $\ \mathrm{vol}(\Gamma_{x,0}(t)) \ = \ \mathrm{vol}(\Gamma_{0,-x}(t)). \ $  A direct inspection of the explicit formula for $\ \mathrm{vol}(\Gamma_{x,y}(t))\ $ given above yields $\ \mathrm{vol}(\Gamma_{0,-x}(t)) \ = \ \mathrm{vol}(\Gamma_{0,x}(t)).\ $\\

Next we show that the wave of influences is constant in the variable $x$. Making the change of variables $\ y -x \ \rightarrow \ y\ $ we get that:
$$u(x,t)\  =  \ \int_{x-t}^{x+t} \mathrm{vol}(\Gamma_{0,x-y}(t))dy \  =  \ \int_{-t}^{t} \mathrm{vol}(\Gamma_{0,-y}(t))dy
 \  =  \ \int_{-t}^{t} \mathrm{vol}(\Gamma_{0,y}(t))dy \  =  \ u(0,t).$$
To compute $\ u(0,t)\ $ we make the change of variable $\ y\ = \ t(2s-1) \ $ in the  integral
$$\int_{-t}^{t} \sum_{n=0}^\infty \Big[ \frac{(t+y)^{n} (t-y)^{n}}{n!^2 }\ + \
2t\frac{(t+y)^{n}(t-y)^{n}}{(n+1)! n!}\Big]2^{1-2n}dy \ \ = $$

$$ \sum_{n=0}^\infty\ 4\frac{t^{2n+1}}{n!n!} \int_{0}^{1}s^n(1-s)^{n} ds
\ + \  8\frac{t^{2n+2}}{(n+1)! n!}\int_{0}^{1}s^n(1-s)^{n}ds  \ \  =  $$

$$ 4\sum_{n=0}^\infty\ \frac{t^{2n+1}}{(2n+1)!}
\ + \ 16\sum_{n=0}^\infty \ \frac{t^{2n+2}}{(2n+2)! }  \ \  = $$
$$4\mathrm{sinh}(t) \ + \ 16(\mathrm{cosh}(t)-1) \ = \  10e^t + 6e^{-t} - 16.  $$

\end{proof}

\subsection{Dimension Two}

Consider the directed manifold $\  (\mathbb{R}^2, \ \frac{\partial}{\partial x}, \ \frac{\partial}{\partial y}),\ $ and let
$\ \Gamma(x,y)\ = \ \Gamma_{(0,0),(x,y)} \ $ be the moduli space of directed paths from $\ (0,0)\ $ to $\ (x,y).\ $  Note that such influences can only happen at time $\ t=x+y,\ $ and thus there is no need to include the time variable in the notation. Figure 4 shows the directed path associated to  the tuple
$\ (1,3,2,1) \in  \Gamma^{(2,1,2,1)}(4,3).$

\

\begin{center}
      \psset{unit=0.6cm}
    \begin{pspicture}(-1,-1)(5,4)
    \psgrid[subgriddiv=1,griddots=8,gridlabels=8pt](-1,-1)(5,4)
              \psline[linewidth=2pt]{->}(0,0)(0,1)
              \psline[linewidth=2pt]{->}(0,1)(3,1)
              \psline[linewidth=2pt]{->}(3,1)(3,3)
              \psline[linewidth=2pt]{->}(3,1)(3,3)
              \psline[linewidth=2pt]{->}(3,3)(4,3)
              \rput(0,-0.5){$(0,0)$}
              \rput(4.7,3){$(4,3)$}
              \rput(3,-2.5){Figure 4. Directed path associated to  $\ (1,3,2,1)\ \in \ \Gamma^{(2,1,2,1)}(4,3).$}
        \end{pspicture}
\end{center}

\

\

In our next results we use the following notation. For $\ k \in \mathbb{N}\ $ we set
$$i_k(x,y) \ = \ \sum_{n=0}^{\infty}\frac{x^ny^{n+k}}{n!(n+k)!} \ \ \ \ \ \mbox{and} \ \ \ \ \
i_{-k}(x,y) \ = \ i_k(y,x) \ = \ \sum_{n=0}^{\infty}\frac{x^{n+k}y^n}{(n+k)!n!}.$$

The following result is easy to check.

\begin{lem}\label{sal}{\em \
\begin{itemize}
  \item For $\ l,m \in \mathbb{N} \ $ and $\ k \in \mathbb{Z} \ $ we have  that
$$\frac{\partial^l}{\partial x^l}\frac{\partial^m}{\partial y^m}i_k(x,y) \ = \ i_{l-m+k}(x,y).$$

  \item For $\ k \in \mathbb{N},\ $ the function $\ i_k(x,y)\ $ is given in terms of the modified Bessel function $\ I_k(z) \ $ by
  $$i_k(x,y) \ = \ x^{-\frac{k}{2}}y^{\frac{k}{2}}I_k(2\sqrt{xy}) ,$$ where we recall that
$$I_v(z)  \ = \ (\frac{z}{2})^v\sum_{n=0}^{\infty}\frac{(z^2/4)^n}{n!\Gamma(v+n+1)}.$$
  	
\item For $\ k \in \mathbb{N}, \ $ we have that $\ I_k(z) \ = \ i_k(\frac{z}{2}, \frac{z}{2}).$

\end{itemize}

}
\end{lem}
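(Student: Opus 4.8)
The plan is to verify each of the three bullet points in Lemma \ref{sal} by direct manipulation of the defining power series, since all three claims are essentially bookkeeping identities about the series
\[
i_k(x,y) \ = \ \sum_{n=0}^{\infty}\frac{x^n y^{n+k}}{n!\,(n+k)!}, \qquad k \in \mathbb{Z},
\]
with the convention $i_{-k}(x,y)=i_k(y,x)$. I would first record that $i_k$ is an entire function of $(x,y)$, so term-by-term differentiation is legitimate and there are no convergence subtleties to worry about.

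For the first bullet, I would treat the cases $k\ge 0$ and $k<0$ together by observing that for any $m\in\mathbb{N}$ one has $\frac{\partial}{\partial y}\frac{y^{n+k}}{(n+k)!}=\frac{y^{n+k-1}}{(n+k-1)!}$ whenever $n+k\ge 1$, while the terms with $n+k\le 0$ are either absent or killed by differentiation; the net effect after $m$ derivatives in $y$ is a shift $k\mapsto k-m$ together with (if $k-m<0$) a re-indexing $n\mapsto n-(k-m)$ of the summation, which is exactly how $i_{k-m}$ is defined. The same argument applied to $x$ gives the shift $k\mapsto k+l$. Composing, $\frac{\partial^l}{\partial x^l}\frac{\partial^m}{\partial y^m}i_k=i_{l-m+k}$. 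The only mildly delicate point is checking that the low-index terms drop out consistently with the sign convention, so I would spell out one representative case (say $k=0$, $m\ge 1$) to make the re-indexing transparent.

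For the second bullet I would simply substitute $v=k$ and $z=2\sqrt{xy}$ into the series for $I_v(z)$, giving
\[
I_k(2\sqrt{xy}) \ = \ (\sqrt{xy})^{k}\sum_{n=0}^{\infty}\frac{(xy)^n}{n!\,\Gamma(k+n+1)}
\ = \ x^{k/2}y^{k/2}\sum_{n=0}^{\infty}\frac{x^n y^n}{n!\,(n+k)!},
\]
and then multiplying by $x^{-k/2}y^{k/2}$ to land on $\sum_n \frac{x^n y^{n+k}}{n!(n+k)!}=i_k(x,y)$, using $\Gamma(k+n+1)=(n+k)!$ for $k\in\mathbb{N}$. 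The third bullet is the special case $x=y=z/2$ of the second, since then $x^{-k/2}y^{k/2}=1$ and $2\sqrt{xy}=z$, so $i_k(z/2,z/2)=I_k(z)$.

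There is no real obstacle here; the statement is flagged as ``easy to check'' and the proof is a routine series computation. The one place to be careful is the index-shift argument in the first bullet when the shift crosses zero, so that is where I would direct the small amount of genuine attention the proof requires; everything else is substitution into the standard series for the modified Bessel function.
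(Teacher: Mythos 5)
Your verification is correct in all three bullets: the derivative computation correctly shows $\partial_x$ shifts $k\mapsto k+1$ and $\partial_y$ shifts $k\mapsto k-1$ (with the re-indexing across zero handled properly), and the Bessel identities follow by direct substitution using $\Gamma(k+n+1)=(n+k)!$. The paper offers no proof at all (it states the lemma as ``easy to check''), and your routine term-by-term series argument is exactly the intended verification.
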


\

\begin{thm}{\em Consider the directed manifold $\ \displaystyle (\mathbb{R}^2, \ \frac{\partial}{\partial x}, \ \frac{\partial}{\partial y}).\ $

\begin{enumerate}
\item There are no directed paths from $\ (0,0)\ $ to a point $\ (x,y) \notin \mathbb{R}_{\geq 0}^2.$
  \item  $\mathrm{vol}(\Gamma(x,0)) \ = \ \mathrm{vol}(\Gamma(0,x))  \ = \ 1, \ $ for $\ x \in \mathbb{R}_{>0}.$
  \item  For $\ (x,y) \in \mathbb{R}_{>0}^2, \ $ the moduli space $\ \Gamma(x,y)\ $ of directed paths from $\ (0,0) \ $ to $\ (x,y)\ $ has volume $$\mathrm{vol}(\Gamma(x,y))\ =  \  i_{-1}(x,y)  \ + \ 2i_0(x,y) \ + \ i_1(x,y)\ = $$
      $$\sum_{n=0}^\infty \Big( \frac{x^{n+1}y^{n}}{(n+1)!n!} \ + \ 2\frac{x^{n}y^n}{n!^2}
      \ +\ \frac{x^{n}y^{n+1}}{n!(n+1)!}\Big).$$
  \item $\mathrm{vol}(\Gamma(x,y)) \ $ is a  symmetric function in $\ x \ $ and $\ y.\ $
  \item The derivatives of the function $\ \mathrm{vol}(\Gamma) \ = \ \mathrm{vol}(\Gamma(x,y))\ $ are given by:
     $$\frac{\partial^l}{\partial x^l}\frac{\partial^m}{\partial y^m}\mathrm{vol}(\Gamma)\  =  \  i_{l-m-1}(x,y)
      \ + \ 2i_{l-m}(x,y) \ + \ i_{l-m+1}(x,y) .$$

\item We have that  $\ \ \frac{\partial}{\partial x}\frac{\partial}{\partial y}\mathrm{vol}(\Gamma)\ = \ \mathrm{vol}(\Gamma) .$

  \item Only points $\ (x,y) \in \mathbb{R}_{\geq 0}^2\ $ on the segment $\ x+y = t\ $ receive an influence from $\ (0,0) \ $ at time $\ t \geq  0. \ $ Among the points on this segment,  the highest influence from $\ (0,0) \ $ is exerted on the point $\ (\frac{t}{2}, \frac{t}{2}); \ $ the volume of the moduli space of directed paths  from $\ (0,0)\  $ along the line of maximal influences is given by
      $$\mathrm{vol}(\Gamma(t,t))\  =  \ 2\sum_{n=0}^\infty {n \choose \lfloor n/2 \rfloor }\frac{t^{n}}{n!}. $$

  \item The wave of influences $\ u(x,y, t)\ $ is given for $\  t > 0 \ $ by
  $$u(x,y,t)\  =  \ \int_{0}^{t} \mathrm{vol}(\Gamma_{(x-s,y+s-t),(x,y)}(t))ds  \ \ = \ \ 2(e^t-1).$$

\end{enumerate}

}
\end{thm}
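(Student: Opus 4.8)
The plan is to treat the eight assertions in the order stated, reducing each either to the convex polytope description of Theorems~\ref{t1} and~\ref{t2}, to the product formula of Section~\ref{iipqm}, or to the invariance results of Section~\ref{iilp}; the only genuine computations are items~(3) and~(7). For~(1) and~(2), observe that along any directed path of $\ (\mathbb{R}^2,\partial/\partial x,\partial/\partial y)\ $ one has $\ \dot{\varphi}(s)\in\{(1,0),(0,1)\}\ $ at every time, so both coordinates are non-decreasing; hence no point outside $\ \mathbb{R}_{\geq 0}^2\ $ is reachable, and if the $\ y$-coordinate of the endpoint is $\ 0\ $ it vanishes identically along the path. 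Consequently the only directed path from $\ (0,0)\ $ to $\ (x,0),\ x>0,\ $ is the single step along $\ \partial/\partial x\ $ developed in time $\ x,\ $ which gives a one-point moduli space of volume $\ 1=\mathrm{vol}(\Delta_0^x)\ $; the case $\ (0,x)\ $ is symmetric.

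For~(3): every $\ c\in D(n,2)\ $ is alternating, hence of the form $\ (1,2,1,\dots)\ $ or $\ (2,1,2,\dots)\ $; write $\ \ell=n+1\ $ for its length and $\ m=|c^{-1}(1)|\ $ for the number of steps along $\ \partial/\partial x.\ $ By Theorem~\ref{t1}, $\ \Gamma_{(0,0),(x,y)}^c\ $ is the polytope in $\ \mathbb{R}_{\geq 0}^{\ell}\ $ cut out by $\ \sum_{c_i=1}s_i=x\ $ and $\ \sum_{c_i=2}s_i=y,\ $ which already forces $\ t=x+y,\ $ and by the product formula of Section~\ref{iipqm}, equivalently by Theorem~\ref{t2}, this polytope is $\ \Delta_{m-1}^x\times\Delta_{\ell-m-1}^y,\ $ of volume $\ \frac{x^{m-1}}{(m-1)!}\frac{y^{\ell-m-1}}{(\ell-m-1)!},\ $ just as in the one-dimensional computation carried out above. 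For $\ (x,y)\in\mathbb{R}_{>0}^2\ $ one needs $\ m\geq 1\ $ and $\ \ell-m\geq 1,\ $ and summing over the four families (starting with $\ 1\ $ or $\ 2,\ $ with $\ \ell\ $ even or odd) gives respectively $\ i_0,\ i_0,\ i_{-1},\ i_1,\ $ hence $\ \mathrm{vol}(\Gamma(x,y))=i_{-1}(x,y)+2i_0(x,y)+i_1(x,y),\ $ which is the displayed double series. Grouping the length-$\ell$ contributions also yields the closed form $\ \mathrm{vol}(\Gamma(x,y))=\sum_{n\geq 0}\big(\frac{2(xy)^n}{(n!)^2}+\frac{(x+y)(xy)^n}{(n+1)!\,n!}\big),\ $ which is convenient below.

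Items~(4)--(6) are then formal: $(4)$ follows from Proposition~\ref{lis} applied to the transposition of the two vector fields, which interchanges $\ x\ $ and $\ y,\ $ or at once from $\ i_{-k}(x,y)=i_k(y,x)\ $; $(5)$ is termwise differentiation combined with $\ \partial_x^l\partial_y^m i_k=i_{l-m+k}\ $ (Lemma~\ref{sal}); and $(6)$ is the case $\ l=m=1\ $ of $(5)$. For~(7), only points with $\ x+y=t\ $ are reached in time $\ t,\ $ and on that segment the closed form above reads $\ \mathrm{vol}(\Gamma(x,y))=F(xy)\ $ with $\ F\ $ strictly increasing; since $\ xy\ $ attains its maximum on $\ \{x+y=t\}\ $ at $\ x=y=t/2,\ $ the maximal influence is exerted at $\ (t/2,t/2).\ $ Writing the maximal point as $\ (t,t)\ $ and using $\ i_{-1}(t,t)=i_1(t,t)\ $ gives $\ \mathrm{vol}(\Gamma(t,t))=2\sum_{n\geq 0}\big(\frac{t^{2n}}{(n!)^2}+\frac{t^{2n+1}}{(n+1)!\,n!}\big),\ $ and the identities $\ \frac{t^{2n}}{(n!)^2}={2n\choose n}\frac{t^{2n}}{(2n)!}\ $ and $\ \frac{t^{2n+1}}{(n+1)!\,n!}={2n+1\choose n}\frac{t^{2n+1}}{(2n+1)!}\ $ collapse this, after collecting by degree, to $\ 2\sum_{n\geq 0}{n\choose \lfloor n/2\rfloor}\frac{t^n}{n!}.$

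Finally, for~(8): a point $\ p=(p_1,p_2)\ $ lies in $\ \Gamma_{(x,y)}^-(t)\ $ iff $\ p_1\leq x,\ p_2\leq y,\ $ and $\ p_1+p_2=x+y-t,\ $ so $\ \Gamma_{(x,y)}^-(t)\ $ is the segment $\ s\mapsto(x-s,\,y-t+s),\ s\in[0,t],\ $ parametrized with measure $\ ds\ $; by the translation invariance of Theorem~\ref{ibd}, $\ \mathrm{vol}(\Gamma_{p(s),(x,y)}(t))=\mathrm{vol}(\Gamma(s,t-s))\ $ is independent of $\ (x,y),\ $ so $\ u(x,y,t)=\int_0^t\mathrm{vol}(\Gamma(s,t-s))\,ds\ $ is constant in $\ (x,y).\ $ Substituting the series from~(3) and integrating termwise with $\ \int_0^t s^a(t-s)^b\,ds=\frac{a!\,b!}{(a+b+1)!}t^{a+b+1}\ $ collapses each summand to $\ \frac{t^{2n+1}}{(2n+1)!}\ $ or $\ \frac{t^{2n+2}}{(2n+2)!},\ $ yielding $\ 2\sinh t+2(\cosh t-1)=2(e^t-1).\ $ The steps I expect to require care rather than routine calculation are: checking that the identification $\ \Gamma^c\simeq\Delta_{m-1}^x\times\Delta_{\ell-m-1}^y\ $ is volume preserving for the normalization fixed in Section~\ref{msi} (the one-dimensional computation above being the template), and disposing cleanly of the degenerate boundary pieces in~(2), where directed paths containing zero-time steps would otherwise contribute; once the closed form $\ \mathrm{vol}(\Gamma)=F(xy)\ $ is available the analytic content of~(7) becomes immediate.
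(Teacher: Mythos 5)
Your proposal is correct and follows essentially the same route as the paper: decompose $\Gamma(x,y)$ by pattern, identify each component as a product $\Delta_{m-1}^x\times\Delta_{\ell-m-1}^y$ of simplices, sum the four alternating families into $i_{-1}+2i_0+i_1$, and use translation invariance plus the beta integral for the wave. The only cosmetic difference is in item (7), where you locate the maximum by writing the restriction to $x+y=t$ as an increasing function of $xy$ rather than differentiating each coefficient in $s$ as the paper does; your remark about degenerate zero-time steps in item (2) is a legitimate point of care that the paper itself passes over silently.
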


\begin{proof}

Item 1 is clear, and item 2 simply counts the influences that arise, respectively, from the patterns $\ (1)\ $ and $\ (2). \ $
Let us show 3. Since $\ k=2, \ $ a pattern $\ (c_0,...,c_n)\ $ is determined by its initial value $\ c_0. \ $ For $\ (x,y) \in \mathbb{R}_{>0}^2\ $ we distinguish  four cases taking into account the initial value $\ c_0\ $ and the parity of $\ n$.

\begin{itemize}

\item Patterns $\ (1,2, ... , 1,2)\ $ and $\ (2,1, ... , 2,1) \ $ of length $\ 2n, \ $ for $\ n\geq 1, $  have a contribution of $$\mathrm{vol}(\Delta_{n-1}^x)\ \mathrm{vol}(\Delta_{n-1}^y) \ =  \ \frac{x^{n-1}y^{n-1}}{(n-1)!^2}$$ to the volume of the moduli space
     of directed paths.
\item The pattern $\ (1,2, ... , 1,2,1)\ $ of length $\ 2n+1, \ $ for $\ n \geq 1, \ $  have a contribution of
 $$\mathrm{vol}(\Delta_{n}^x)\ \mathrm{vol}(\Delta_{n-1}^y) \  =  \ \frac{x^{n}y^{n-1}}{n!(n-1)!}$$ to the volume of the moduli space of directed paths.
\item The pattern $\ (2,1, ... ,2,1,2)\ $ of length $\ 2n+1, \ $ for $\ n \geq 1, \ $  have a contribution of
 $$\mathrm{vol}(\Delta_{n-1}^x)\ \mathrm{vol}(\Delta_{n}^y) \  = \ \frac{x^{n-1}y^{n}}{(n-1)!n!}$$ to the volume of the moduli space of  directed paths.
\end{itemize}
 Putting together the three summands we obtain that
 $$\mathrm{vol}(\Gamma(x,y))\   = \ \sum_{n=1}^\infty\Big(2\frac{x^{n-1}y^{n-1}}{(n-1)!^2}\ + \
      \frac{x^{n}y^{n-1}}{n!(n-1)!}\ + \ \frac{x^{n-1}y^{n}}{(n-1)!n!}\Big),$$
an expression equivalent to our desired result after a change of variables. Clearly, $\ \mathrm{vol}(\Gamma(x,y))\ $ is symmetric in $\ x \ $ and $\ y, \ $ thus item 4 follows.\\

Item 5 follows from item 3 and Lemma \ref{sal}. Item 6 is a particular case of item 5.  Let us show item 7. Let $\ \mathrm{vol}_n(\Gamma(x,y))\ $ be the $n$-th coefficient in the series expansion of $\ \mathrm{vol}(\Gamma(x,y))\ $ from  item 3.
The points influenced by $\ (0,0) \ $ at time $\ t \ $  are  of the form $\ (s,t-s)\ $ with $\ 0<s<t. \ $  Thus:
$$\mathrm{vol}_n(\Gamma(s,t-s)) \ =  \  (st-s^2)^{n-1}
\Big(\frac{2}{(n-1)!^2}\ +  \ \frac{t}{(n-1)!n!}\Big).$$
Therefore
$$\frac{\partial}{\partial s}\mathrm{vol}_n(\Gamma(s,t-s)) \  =  \ (n-1)(st-s^2)^{n-2}(t-2s)
\Big(\frac{2}{(n-1)!^2}\ + \ \frac{t}{(n-1)!n!}\Big).$$
The sign of the expression above is determined by the sign of $\ (t-2s), \ $ as the other factors are positive.
Thus the volume of the moduli space of directed paths from $\ (0,0) \ $ exerted on time $\ t \ $ achieves a global maximum at the point $\  (\frac{t}{2}, \frac{t}{2}), \ $ and
we have that $$\mathrm{vol}(\Gamma(t,t))\  =  \ 2\sum_{n=0}^\infty\Big(\frac{t^{2n}}{n!^2}\ + \
\frac{t^{2n+1}}{(n+1)!n!}\Big)\  = $$
$$2\sum_{n=0}^\infty\Big({2n \choose n }\frac{t^{2n}}{(2n)!}\ + \
{2n+1 \choose n } \frac{t^{2n+1}}{(2n+1)!}\Big) \  =  \
2\sum_{n=0}^\infty {n \choose \lfloor n/2 \rfloor }\frac{t^{n}}{n!}. $$

\noindent Item 8. By translation invariance the wave of influence is independent of $ \ x,y. \ $
Thus we have that
$$u(x,y,t) \  =  \ u(0,0,t) \ =  \ \int_{0}^{t} \mathrm{vol}(\Gamma_{(-s,s-t),(0,0)}(t))ds \ =  \ \int_{0}^{t} \mathrm{vol}(\Gamma_{(0,0),(s,t-s)}(t))ds \ = $$
$$\int_{0}^{t} \Gamma(s,t-s)ds \ =  \  \sum_{n=0}^\infty\int_{0}^{t}\Big(2\frac{s^{n}(t-s)^n}{n!^2}\ + \
      \frac{s^{n+1}(t-s)^{n}}{(n+1)!n!}\ +\ \frac{s^{n}(t-s)^{n+1}}{n!(n+1)!}\Big)ds \  = $$
$$2\sum_{n=0}^\infty\int_{0}^{t}\frac{s^{n}(t-s)^n}{n!^2} ds\ + \
      2\int_{0}^{t}\frac{s^{n+1}(t-s)^{n}}{(n+1)!n!}ds\ \ = $$
$$2\sum_{n=0}^\infty \frac{t^{2n+1}}{(2n+1)!}  \ \ \ + \ \ \  2\sum_{n=0}^\infty\frac{t^{2n+2}}{(2n+2)!} \ = \ 2(\mathrm{sinh}(t) +  \mathrm{cosh}(t)-1) \ = \ 2(e^t-1). $$
\end{proof}

Next we consider the moduli spaces of directed paths on the torus $\ T^2 =  S^1\times S^1. \ $ We use coordinates $\ (x,y) \in \mathbb{R}^2\ $ representing the point $\ (e^{2\pi i x}, e^{2\pi i y}) \in T^2. \ $ Consider the vector fields on $\ T^2 \ $ given in local coordinates by
$$ \frac{\partial}{\partial x} \ \ \ \ \ \ \mbox{and} \ \ \ \ \ \  \ \frac{\partial}{\partial y}.$$
The moduli space of directed paths on the torus $\ T^2\ $ from $\ (1,1)\ $ to $\ (e^{2\pi i x},e^{2\pi i y})\ $  exerted in time $\ t>0 \ $ is denoted by
$\Gamma(e^{2\pi i x},e^{2\pi i y},t).$  Recall that $ \ D(e^{2\pi i x},e^{2\pi i y},t) \ $ is the set of one-direction paths.

\begin{thm}{\em Consider the directed manifold $\ (T^2,\frac{\partial}{\partial x},\frac{\partial}{\partial y} ). \ $

\begin{enumerate}

\item For $\ x,y \in (0,1) \ $ we have that $\ \ \mathrm{vol}(D(e^{2\pi i x},e^{2\pi i y},t)) \ = \ 0. \ $
\item For $\ x \in (0,1] \ $ we have that $$\ \ \mathrm{vol}(D(e^{2\pi i x},1,t)) \ =  \ \mathrm{vol}(D(1,e^{2\pi i x},t))  \  =  \ \sum_{m=0}^{\infty}\delta(t,x+m). \ $$
  \item  For $\ (x,y) \in (0,1)^2,\ $ the moduli space $\ \Gamma(e^{2\pi i x},e^{2\pi i y},t)\ $ of directed paths from $\ (1,1) \ $ to
  $\ (e^{2\pi i x},e^{2\pi i y})\ $ is empty unless $\ t=x+y+m\ $ for some $\ m \geq 0, \ $ and in the latter case we have that:
       $\ \mathrm{vol}(\Gamma(e^{2\pi i x},e^{2\pi i y}, x+y+m)) \ $ is given by
       $$\sum_{k+l=m}\sum_{n=0}^\infty\Big(2\frac{(x+k)^{n}(y+l)^n}{n!^2}\ + \
      (x+y+k+l)\frac{(x+k)^{n}(y+l)^{n}}{(n+1)!n!}\Big).$$
  \item $\mathrm{vol}(\Gamma(e^{2\pi i x},e^{2\pi i y}, x+y+m))\ $ is a  symmetric function in $\ x\ $ and $\ y.\ $

\end{enumerate}

}
\end{thm}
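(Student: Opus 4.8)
The plan is to reduce everything to the computations already performed for the plane $\mathbb{R}^2$ by lifting directed paths on $T^2$ to directed paths on the universal cover $\mathbb{R}^2$, and then organizing the count according to how many full windings in the $x$- and $y$-directions the lifted path performs. First I would observe that since the vector fields $\partial/\partial x$ and $\partial/\partial y$ on $T^2$ are exactly the images under the covering map $\pi:\mathbb{R}^2\to T^2$, $(a,b)\mapsto(e^{2\pi i a},e^{2\pi i b})$, of the constant vector fields on $\mathbb{R}^2$, the covering map is a morphism of directed manifolds in the sense of Section \ref{bd}, and every directed path downstairs starting at $(1,1)$ lifts uniquely to a directed path upstairs starting at $(0,0)$. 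A pair $(c,s)$ defines a directed path from $(1,1)$ to $(e^{2\pi i x},e^{2\pi i y})$ precisely when the lifted path $\varphi_{c}((0,0),s)$ lands at some point of the fiber $\pi^{-1}(e^{2\pi i x},e^{2\pi i y})=\{(x+k,\,y+l)\mid k,l\in\mathbb{Z}\}$. Because the vector fields are $\geq 0$ and the total displacement along the $i$-th direction is a sum of nonnegative $s_j$'s, only nonnegative integers $k,l\geq 0$ can occur, and moreover the total time is $t=(x+k)+(y+l)$; this immediately gives items 1, 2 (the one-direction case, where the path is purely horizontal or purely vertical, forces the missing coordinate to be an integer, hence the Kronecker-delta sum), and the constraint $t=x+y+m$ with $m=k+l$ in item 3.

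Next I would assemble the volume. Fixing the winding numbers $(k,l)$ with $k+l=m$, the moduli space of directed paths on $T^2$ whose lift ends at $(x+k,y+l)$ is, by the unique-lifting observation and Theorem \ref{t1}, canonically homeomorphic (indeed isometric, after the identifications of Section \ref{msi}) to the plane moduli space $\Gamma^{\mathbb{R}^2}_{(0,0),(x+k,y+l)}$, and these pieces for distinct $(k,l)$ are disjoint. Hence
$$\mathrm{vol}(\Gamma(e^{2\pi i x},e^{2\pi i y},x+y+m))\ =\ \sum_{\substack{k,l\geq 0\\ k+l=m}}\mathrm{vol}(\Gamma^{\mathbb{R}^2}(x+k,\,y+l)).$$
Substituting the plane formula from item 3 of the previous theorem, namely $\mathrm{vol}(\Gamma^{\mathbb{R}^2}(a,b))=\sum_{n\geq 0}\bigl(2\frac{a^nb^n}{n!^2}+(a+b)\frac{a^nb^n}{(n+1)!n!}\bigr)$ with $a=x+k$, $b=y+l$ and noting $a+b=x+y+k+l$, yields exactly the double sum claimed in item 3. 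Item 4 then follows at once: the map $(k,l)\mapsto(l,k)$ is a bijection of the index set $\{k+l=m\}$ and the plane volume $\mathrm{vol}(\Gamma^{\mathbb{R}^2}(a,b))$ is symmetric in $a,b$ by item 4 of the previous theorem, so the whole sum is invariant under swapping $x\leftrightarrow y$.

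The main obstacle I anticipate is purely bookkeeping rather than conceptual: making the lifting argument airtight for \emph{pieces} $\Gamma^{c}$ at the level of measures, i.e. checking that the homeomorphism $\Gamma^{T^2,c}(e^{2\pi ix},e^{2\pi iy},t)\simeq\coprod_{k+l=m}\Gamma^{\mathbb{R}^2,c}(x+k,y+l)$ respects the Riemannian metric inherited from $\Delta_n^t$ (so that volumes, not merely cardinalities of connected components, match). This is really a local statement — $\pi$ is a local isometry for the flat metrics and the defining equations of $\Gamma^c$ in the simplex coordinates $l_1\leq\cdots\leq l_n$ are the same upstairs and downstairs once the target is fixed in the fiber — so it reduces to Theorem \ref{ibd} applied locally, but one must be slightly careful that the constant-coefficient linear systems of Theorem \ref{t1} are genuinely identical. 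A secondary, minor point is justifying the interchange of the (finite) sum over $(k,l)$ with the infinite sum over $n$ defining each plane volume; since for fixed $t$ only $m\leq t$ contributes and each inner series converges absolutely (it is dominated by $e^{a}e^{b}$-type series), Fubini for sums applies without difficulty.
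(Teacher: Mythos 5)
Your proposal is correct and follows essentially the same route as the paper: the paper's (very terse) proof consists precisely of the identity $\mathrm{vol}(\Gamma(e^{2\pi i x},e^{2\pi i y},x+y+m))=\sum_{k+l=m}\mathrm{vol}(\Gamma(x+k,y+l,x+y+m))$ obtained by lifting to the universal cover and summing over the fiber, followed by substitution of the planar formula. Your write-up simply supplies the lifting, measure-compatibility, and convergence details that the paper leaves implicit, and your observation that $\frac{a^{n+1}b^{n}}{(n+1)!\,n!}+\frac{a^{n}b^{n+1}}{n!\,(n+1)!}=(a+b)\frac{a^{n}b^{n}}{(n+1)!\,n!}$ correctly reconciles the two displayed forms of the answer.
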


\begin{proof}
We can compute indirect influences on the torus as sums of indirect influences on the plane, indeed we have that
$$\mathrm{vol}(\Gamma(e^{2\pi i x},e^{2\pi i y}, x+y+m))\  = \  \sum_{k+l=m}\mathrm{vol}(\Gamma( x +k,y+l, x+y+m))\ \ = $$
$$\sum_{k+l=m}\sum_{n=0}^\infty\Big(2\frac{(x+k)^{n}(y+l)^n}{n!^2}\ + \
      \frac{(x+k)^{n+1}(y+l)^{n}}{(n+1)!n!}\ + \ \frac{(x+k)^{n}(y+l)^{n+1}}{n!(n+1)!}\Big). $$
\end{proof}

\subsection{Higher Dimensions}

Let us first introduce a few combinatorial notions. Given integers $\ n_1, \ldots, n_k \in \mathbb{N}_{>0}\ $ we let $\ \mathrm{Sh}_k(n_1,\dots, n_k)\ $ be the set of shuffles of $\ n_1 +\cdots + n_k\ $ cards divided into $k$ blocks of cardinalities $\ n_1, \ldots, n_k. \ $ Recall that a shuffle is a bijection $\ \alpha\ $ from the set $$[1, n_1 +\cdots + n_k] \ \simeq \ [1, n_1] \sqcup \cdots \sqcup [1, n_k]$$ to itself such that if $\ i \ < \ j \in [1, n_s], \ $ then $\ \alpha(i) \ < \ \alpha(j) \ \in\ [1, n_1 +\cdots + n_k]. \ $ When we shuffle a deck of cards the idea is to intertwine the cards in the various blocks, without distorting the order in each block. We say that a shuffle is perfect if no contiguous cards within a block remain contiguous after shuffling, i.e. a shuffle $\alpha$ is called perfect if for $\ i,\ i+1\ \in \ [1, n_s] \ $ we have that
$$\alpha(i) + 1 \ <  \  \alpha(i+1)  \  \in  \ [1, n_1 +\cdots + n_k].$$
Let $\ \mathrm{PSh}_k(n_1,\dots, n_k) \ \subseteq \ \mathrm{Sh}_k(n_1,\dots, n_k)\ $ be the set of perfect shuffles, and
$\ \mathrm{psh}_k \ $  be the corresponding exponential generating series given by
$$\mathrm{psh}_k(x_1, \ldots, x_k) \  =  \ \sum_{n_1, \ldots, n_k \in \mathbb{N}_{>0}}|\mathrm{PSh}_k(n_1,\dots, n_k)| \frac{x_1^{n_1}\cdots x_k^{n_k} }{n_1! \cdots n_k!} .$$

 A subset $\ A \ \subseteq \ [m]\ $
is called sparse if it does not contain  consecutive elements.
Let $\ S_k[m]\ $ be the set of all sparse subsets of $\ [m]\ $ of cardinality $\ k. \ $
Let $\ p(m,k)\ $ count the numerical partitions of $\ m \ $ in $\ k \ $ positive summands. \\

\begin{lem}{\em For $\ 1 \leq k < m \in \mathbb{N},$ we have that:
$$|S_k[m]|\  = \  p(m-k,k-1) \  +  \ 2p(m-k,k) \  +  \ p(m-k,k+1).$$
}
\end{lem}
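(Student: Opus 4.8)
The plan is to encode a sparse subset by the sizes of the gaps it leaves in $\ [m],\ $ and then to sort these gap data according to whether the subset contains the two endpoints $\ 1\ $ and $\ m;\ $ each resulting class will biject with a set of compositions of $\ m-k\ $ into a prescribed number of positive summands, which is exactly one of the quantities $\ p(m-k,k-1),\ p(m-k,k),\ p(m-k,k+1).\ $ Concretely, to $\ A=\{a_1<\cdots<a_k\}\in S_k[m]\ $ I assign the tuple $\ g(A)=(g_0,g_1,\dots,g_k)\ $ given by $\ g_0=a_1-1,\ $ $\ g_i=a_{i+1}-a_i\ $ for $\ 1\leq i\leq k-1,\ $ and $\ g_k=m-a_k.\ $ One checks directly that $\ g\ $ is a bijection from $\ S_k[m]\ $ onto the set of integer tuples $\ (g_0,\dots,g_k)\ $ with $\ g_0,g_k\geq 0,\ $ $\ g_i\geq 2\ $ for $\ 1\leq i\leq k-1,\ $ and $\ g_0+g_1+\cdots+g_k=m-1;\ $ here sparseness is precisely the constraint $\ g_i\geq 2\ $ on the interior indices, while $\ 1\in A\ $ (resp. $\ m\in A$) corresponds to $\ g_0=0\ $ (resp. $\ g_k=0$).

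I would then split $\ S_k[m]\ $ into four classes according to the vanishing of $\ g_0\ $ and of $\ g_k.\ $ If $\ g_0=g_k=0,\ $ then $\ (g_1,\dots,g_{k-1})\ $ is a composition of $\ m-1\ $ into $\ k-1\ $ summands each $\ \geq 2;\ $ subtracting $\ 1\ $ from each summand produces a composition of $\ m-k\ $ into $\ k-1\ $ positive summands, so this class has $\ p(m-k,k-1)\ $ elements. If exactly one of $\ g_0,g_k\ $ vanishes, say $\ g_k=0<g_0,\ $ then $\ (g_0,g_1,\dots,g_{k-1})\ $ is a composition of $\ m-1\ $ into $\ k\ $ summands whose first summand is $\ \geq 1\ $ and whose remaining $\ k-1\ $ summands are $\ \geq 2;\ $ subtracting $\ 1\ $ from those last $\ k-1\ $ summands gives a composition of $\ m-k\ $ into $\ k\ $ positive summands, so this subclass contributes $\ p(m-k,k),\ $ and the reflection $\ a_i\mapsto m+1-a_{k+1-i},\ $ which reverses the gap tuple, identifies the subclass $\ g_0=0<g_k\ $ with it, contributing a further $\ p(m-k,k).\ $ Finally, if $\ g_0>0\ $ and $\ g_k>0,\ $ then $\ (g_0,g_1,\dots,g_{k-1},g_k)\ $ is a composition of $\ m-1\ $ into $\ k+1\ $ summands with first and last $\ \geq 1\ $ and the $\ k-1\ $ interior summands $\ \geq 2;\ $ subtracting $\ 1\ $ from the interior summands gives a composition of $\ m-k\ $ into $\ k+1\ $ positive summands, i.e. $\ p(m-k,k+1)\ $ of them. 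Adding the four contributions yields the claimed identity.

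The argument is essentially bookkeeping; the only points needing care are verifying that each of the four re-indexing maps really is a bijection onto the stated set of compositions, and checking the degenerate case $\ k=1,\ $ for which there are no interior gaps and the class $\ g_0=g_k=0\ $ is empty when $\ m>1,\ $ consistently with $\ p(m-1,0)=0.\ $ A convenient cross-check uses generating functions: the gap decomposition gives $\ \sum_{m}|S_k[m]|\,x^m=\dfrac{x\,x^{2(k-1)}}{(1-x)^{k+1}}=\dfrac{x^{2k-1}}{(1-x)^{k+1}},\ $ while $\ \sum_m p(m-k,j)\,x^m=\dfrac{x^{k+j}}{(1-x)^{j}},\ $ so the right-hand side has generating function $\ \dfrac{x^{2k-1}}{(1-x)^{k+1}}\big[(1-x)^2+2x(1-x)+x^2\big]=\dfrac{x^{2k-1}}{(1-x)^{k+1}}\ $ because $\ (1-x)+x=1;\ $ equivalently, comparing coefficients of $\ x^m,\ $ the identity is the binomial relation $\ \binom{m-k+1}{k}=\binom{m-k-1}{k-2}+2\binom{m-k-1}{k-1}+\binom{m-k-1}{k},\ $ which is Pascal's rule applied twice.
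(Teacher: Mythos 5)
Your argument is correct and is essentially the paper's own proof: your gap tuple $(g_0,\dots,g_k)$ is just a reparametrization of the block sizes of the complement $A^c$, and the case split on whether $g_0$ or $g_k$ vanishes is exactly the paper's split on whether $1$ or $m$ belongs to $A$, yielding ordered compositions of $m-k$ into $k-1$, $k$, or $k+1$ positive parts. The generating-function and Pascal's-rule cross-checks are a nice addition but do not change the underlying argument.
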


\begin{proof}If $\ A \in S_k[m],\ $ then $\ |A^c|=m-k, \ $ and $\ A^c \ $ comes with a naturally ordered partition with exactly $\ k-1 \ $ blocks if $\ 1,m \in A, \ $ $k$ blocks if $\ 1 \ $ or $\ m \ $ (but not both) belong to $\ A, \ $ and $\ k+1 \ $ blocks if $\ 1,m \notin A.\ $ The cardinalities of the blocks of $\ A^c \ $ provides the various kinds of numerical partitions needed to complete our result.
\end{proof}

\begin{lem}{\em For $\ n_1,\dots, n_k \in \mathbb{N}_{>0},\ $ then $\ |\mathrm{PSh}_k(n_1,\dots, n_k)| \ $ counts number of ordered partitions of
$\ n_1\ + \ \dots \ + \ n_k\ $ with sparse blocks of cardinalities $\ n_1,\dots, n_k.$
}
\end{lem}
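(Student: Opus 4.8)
The plan is to set up an explicit bijection between $\mathrm{PSh}_k(n_1,\dots,n_k)$ and the set of ordered partitions $(B_1,\dots,B_k)$ of $[N]$ into $k$ labelled blocks, where $N=n_1+\cdots+n_k$, satisfying $|B_s|=n_s$ and with each $B_s$ sparse. First I would make the identification $[1,N]\simeq[1,n_1]\sqcup\cdots\sqcup[1,n_k]$ explicit by writing the domain as the disjoint union of the consecutive intervals $I_s=\{n_1+\cdots+n_{s-1}+1,\dots,n_1+\cdots+n_s\}$, so that $|I_s|=n_s$ and $I_1,\dots,I_k$ partition $[N]$. Under this identification a shuffle is exactly a bijection $\alpha\colon[N]\to[N]$ that is strictly increasing on each $I_s$, and it is perfect precisely when $\alpha(j)+1<\alpha(j+1)$ for every pair of consecutive elements $j,j+1$ lying in a common $I_s$.

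To a perfect shuffle $\alpha$ I would associate the tuple $\Phi(\alpha)=(\alpha(I_1),\dots,\alpha(I_k))$. Since $\alpha$ is a bijection of $[N]$ and the $I_s$ partition the domain, the images $\alpha(I_s)$ partition the codomain $[N]$ and $|\alpha(I_s)|=n_s$. Because $\alpha$ is increasing on $I_s$, listing the elements of $\alpha(I_s)$ in increasing order just reproduces the values $\alpha(j)$ as $j$ runs over $I_s$ in increasing order; hence the perfectness inequalities $\alpha(j)+1<\alpha(j+1)$ say exactly that successive elements of $\alpha(I_s)$ differ by at least $2$, i.e.\ that $\alpha(I_s)$ is sparse. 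So $\Phi$ lands in the desired set.

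For the inverse, given an ordered partition $(B_1,\dots,B_k)$ of $[N]$ with $|B_s|=n_s$ and each $B_s$ sparse, I would let $\Psi(B_1,\dots,B_k)$ be the map $\alpha$ whose restriction to each $I_s$ is the unique order-preserving bijection $I_s\to B_s$. Since the $B_s$ partition $[N]$, this $\alpha$ is a well-defined bijection of $[N]$; it is a shuffle by construction; and running the computation of the previous paragraph backwards, sparseness of each $B_s$ yields the perfectness inequalities, so $\Psi(B_1,\dots,B_k)\in\mathrm{PSh}_k(n_1,\dots,n_k)$. That $\Phi$ and $\Psi$ are mutually inverse is immediate, since between two finite linearly ordered sets of the same size there is a unique order-preserving bijection: thus $\alpha|_{I_s}$ is determined by its image $\alpha(I_s)$ and conversely. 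Counting both sides yields the claimed equality.

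I do not anticipate a genuine obstacle; the one point that needs care is the two-way translation between ``$\alpha$ is order-preserving on a block and perfect'' and ``the image block is sparse,'' which hinges only on the observation that an order-preserving map sends successive elements of $I_s$ to successive elements of $\alpha(I_s)$. It is also worth checking the degenerate cases built into the bijection (a block of size $1$ is automatically sparse, while a block of size $2$ forces a gap $\geq 2$, recovering $|\mathrm{PSh}_1(2)|=0$), but these require no separate argument.
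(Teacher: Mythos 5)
Your proof is correct and follows the same route as the paper's (one-line) argument: a perfect shuffle is determined by the images of the blocks, each of which must be sparse. You have simply spelled out the bijection and its inverse in full detail, including the key translation between the perfectness inequalities and sparseness of the image blocks.
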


\begin{proof}A perfect shuffle in $\ \mathrm{PSh}_k(n_1,\dots, n_k)\ $ is determined by its image on each of the blocks $\ [1,n_s], \ $ which must be a sparse subsets.
\end{proof}

Let us point out the relation between patterns and perfect shuffles. Consider the map
$$|\ |: C(n,k)\ \longrightarrow \ \mathbb{N}^k ,$$ sending a pattern $\ c\in C(n,k)\ $ to its content multi-set given by the  sequence $\ |c| \in \mathbb{N}^k\ $ such that $\ |c|_i \ = \ |c^{-1}(i)|. \ $ The support of a pattern $\ c \ $ is the set $\ s(c) \ \subseteq \ [k] \ $ with $\ i\in s(c)\ $ if and only if $\ |c|_i\ \neq \ 0.$

\begin{lem}{\em Fix a vector $\ (n_1,\dots, n_k) \in \mathbb{N}_{>0}^k. \ $ We have that:
$$\bigg|\{ c\in C(n,k) \ | \ |c|=(n_1,...,n_k) \}\bigg| \  =  \ \big|\mathrm{PSh}_k(n_1,\dots, n_k)\big|.$$
}
\end{lem}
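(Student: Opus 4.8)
The plan is to exhibit an explicit bijection between the set of patterns with prescribed content and the set of ordered partitions of $\{0,\dots,n\}$ into sparse blocks of the given cardinalities, and then to invoke the preceding lemma. Note first that if $|c|=(n_1,\dots,n_k)$ then necessarily $n_1+\cdots+n_k=l(c)=n+1$, so the cardinalities are compatible with a tuple of length $n+1$.

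First I would send a pattern $c=(c_0,\dots,c_n)\in C(n,k)$ with $|c|=(n_1,\dots,n_k)$ to the tuple of fibers $(c^{-1}(1),\dots,c^{-1}(k))$, where $c^{-1}(i)=\{\,j\in[0,n]\mid c_j=i\,\}$. These $k$ sets are pairwise disjoint with union $[0,n]$, and $|c^{-1}(i)|=|c|_i=n_i$ by definition of the content; moreover each $c^{-1}(i)$ is a sparse subset of $[0,n]$, for if $j$ and $j+1$ both lay in $c^{-1}(i)$ we would have $c_j=c_{j+1}$, contradicting the defining condition $c_j\neq c_{j+1}$ of a pattern. Hence $(c^{-1}(1),\dots,c^{-1}(k))$ is an ordered partition of $[0,n]$ into sparse blocks of cardinalities $n_1,\dots,n_k$.

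Conversely, given such an ordered partition $(A_1,\dots,A_k)$ I would build a tuple $c$ by declaring $c_j=i$ whenever $j\in A_i$; this is well defined because the $A_i$ partition $[0,n]$, it has content $(n_1,\dots,n_k)$ because $|A_i|=n_i$, and it satisfies $c_j\neq c_{j+1}$, since $c_j=c_{j+1}=i$ would force $j,j+1\in A_i$, violating sparseness of $A_i$. The two constructions are visibly mutually inverse, so
$$\big|\{\,c\in C(n,k)\mid |c|=(n_1,\dots,n_k)\,\}\big|$$
equals the number of ordered partitions of $[0,n]$ into sparse blocks of sizes $n_1,\dots,n_k$. Since the shift $j\mapsto j+1$ is an order isomorphism $[0,n]\to[n_1+\cdots+n_k]$ carrying sparse sets to sparse sets, this count coincides with the number of ordered partitions of $[n_1+\cdots+n_k]$ into sparse blocks of sizes $n_1,\dots,n_k$, which by the previous Lemma is exactly $\big|\mathrm{PSh}_k(n_1,\dots,n_k)\big|$.

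I do not expect a genuine obstacle here: the argument is a direct unwinding of definitions. The only point demanding a little care is matching the two indexing conventions ($[0,n]$ for positions of a pattern versus $[m]$ in the statement of the preceding lemma) and checking that "sparse subset" and "ordered partition with blocks of prescribed cardinalities" mean literally the same thing on both sides, which the order isomorphism above takes care of.
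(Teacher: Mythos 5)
Your proposal is correct and follows essentially the same route as the paper: the paper's (very terse) proof likewise identifies a pattern with prescribed content with a shuffle of the blocks and observes that the perfect-shuffle/sparseness condition is exactly the condition $c_j\neq c_{j+1}$. You simply make this precise by factoring the bijection through the preceding lemma on ordered partitions with sparse blocks, and by handling the index shift $[0,n]\to[n_1+\cdots+n_k]$ explicitly, which the paper leaves implicit.
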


\begin{proof}The vector $\ (n_1,\dots, n_k)\ $ gives us the content multi-set of $\ c, \ $ a shuffle on it gives us in addition the order of the vector $\ c. \ $ The perfect condition on shuffles is equivalent to the conditions $\ c(i)\ \neq \ c(i+1) \ $ on patterns.
\end{proof}

Consider the directed manifold $\ (\mathbb{R}^k, \ \frac{\partial}{\partial x_1}, \dots , \frac{\partial}{\partial x_k}). \ $ The moduli space of directed paths from $\ (0,\ldots,0)\ $ to $\ (x_1, \ldots, x_k)\ $ is denoted by $\ \Gamma(x_1,\ldots, x_k). \ $ Such paths  can only happen at time $\ t\ = \ x_1+ \cdots + x_k. \ $
\begin{thm}{\em Consider the directed manifold $\ (\mathbb{R}^k, \ \frac{\partial}{\partial x_1}, \dots , \frac{\partial}{\partial x_k}). \ $

\begin{enumerate}
\item There are no directed paths from $(0,\ldots,0) \ $ to any point $\ (x_1, \ldots, x_k) \notin \mathbb{R}_{\geq 0}^k.$
  \item  $\mathrm{vol}(D(0, \ldots, 0, \underset{i\uparrow}{x}, 0, \ldots, 0) \ = \  1, \  $ for $ \ x \in \mathbb{R}_{\geq 0}\ \ $ and $\ \ i \in [k].$
  \item  For $\ (x_1, \ldots, x_k) \in \mathbb{R}_{\geq 0}^k, \ $  with at least two positive entries, the moduli space $\ \Gamma(x_1, \ldots, x_k)\ $ of directed paths from $\ (0,\ldots,0)\ $ to $\ (x_1, \ldots, x_k)\ $ has volume $$\mathrm{vol}(\Gamma(x_1, \ldots, x_k))\  =  \  \underset{\underset{|A| \geq 2} {A \subseteq [k]}}{\sum}\ \frac{\partial^{|A|}}{\partial x_A}\mathrm{psh}_{|A|} (x_A).$$
  \item $\mathrm{vol}(\Gamma(x_1, \ldots, x_k))\ $ is a  symmetric function in the variables $\ x_1, \ldots, x_k.$

\end{enumerate}
}
\end{thm}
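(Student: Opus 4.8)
The plan is to reduce the computation for the $k$-dimensional directed manifold $\ (\mathbb{R}^k, \partial/\partial x_1, \dots, \partial/\partial x_k)\ $ to the combinatorics of patterns, exactly as in the dimension-one and dimension-two cases. The key observation is that for constant commuting vector fields $\ \partial/\partial x_j\ $, given a pattern $\ c \in D(n,k)\ $ whose content multiset is $\ |c| = (n_1, \dots, n_k)\ $, the space $\ \Gamma^c(x_1,\ldots,x_k)\ $ is determined by Theorem \ref{t1}: the defining linear system decouples coordinate by coordinate because the matrix $\ A_c\ $ has $\ (A_c)_{ij} = \delta_{i,c(j)}\ $, so the variables $\ s_j\ $ with $\ c(j) = i\ $ are constrained only by $\ \sum_{j : c(j) = i} s_j = x_i\ $. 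Hence $\ \Gamma^c(x_1,\ldots,x_k)\ $ splits as a product of simplices $\ \prod_{i \in s(c)} \Delta_{n_i - 1}^{x_i}\ $, and its volume is $\ \prod_{i \in s(c)} x_i^{n_i - 1}/(n_i - 1)!\ $, where $\ s(c) \subseteq [k]\ $ is the support of $\ c\ $. (When some $\ x_i = 0\ $, only patterns with $\ i \notin s(c)\ $ contribute; item 1 and item 2 follow immediately, item 2 counting the single length-one pattern $\ (i)\ $.)

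Next I would organize the sum over all patterns by support. Writing $\ A = s(c)\ $ and summing over $\ A \subseteq [k]\ $ with $\ |A| \geq 2\ $ (the cases $\ |A| \leq 1\ $ giving the degenerate situations already handled), we have
$$\mathrm{vol}(\Gamma(x_1,\ldots,x_k)) \ = \ \sum_{\substack{A \subseteq [k] \\ |A| \geq 2}} \ \sum_{\substack{c \in D(n,k) \\ s(c) = A}} \ \prod_{i \in A} \frac{x_i^{|c|_i - 1}}{(|c|_i - 1)!}.$$
Now I would apply the last Lemma of the excerpt: the patterns $\ c\ $ with fixed content vector $\ (n_i)_{i \in A}\ $ (all $\ n_i > 0\ $, so support exactly $\ A\ $) are in bijection with $\ \mathrm{PSh}_{|A|}\big((n_i)_{i\in A}\big)\ $, so the inner sum over patterns with support $\ A\ $ becomes
$$\sum_{(n_i)_{i \in A} \in \mathbb{N}_{>0}^{A}} |\mathrm{PSh}_{|A|}((n_i)_{i\in A})| \prod_{i \in A} \frac{x_i^{n_i - 1}}{(n_i - 1)!}.$$
Comparing with the definition of $\ \mathrm{psh}_k\ $ as the exponential generating series $\ \sum |\mathrm{PSh}_k(n_1,\dots,n_k)| \prod x_i^{n_i}/n_i!\ $, this sum is precisely $\ \frac{\partial^{|A|}}{\partial x_A}\mathrm{psh}_{|A|}(x_A)\ $, since each application of $\ \partial/\partial x_i\ $ lowers the exponent of $\ x_i\ $ by one and cancels one factor from the factorial, turning $\ x_i^{n_i}/n_i!\ $ into $\ x_i^{n_i-1}/(n_i-1)!\ $. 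Summing over $\ A\ $ gives item 3. Item 4, symmetry in the $\ x_i\ $, then follows because relabeling the coordinates permutes the vector fields, so Proposition \ref{lis} applies; alternatively it is visible directly from the formula, since $\ \mathrm{psh}_{|A|}\ $ is a symmetric function of its arguments and the outer sum ranges over all subsets.

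The main obstacle is purely bookkeeping: making the decoupling of the linear system rigorous and, crucially, checking that a pattern with content vector $\ (n_i)_{i\in A}\ $ having all entries positive has support exactly $\ A\ $ (so there is no double-counting across different subsets $\ A\ $), and that the $\ |A| = 0, 1\ $ terms are exactly the degenerate contributions excluded in the statement. One should also be slightly careful with the edge cases where the product of simplices degenerates (some $\ n_i = 1\ $ gives a point $\ \Delta_0^{x_i}\ $ of volume $\ 1\ $, consistent with $\ x_i^0/0! = 1\ $), but these are handled uniformly by the formula. No genuinely hard analysis is needed once the combinatorial identification via the perfect-shuffle lemma is in place.
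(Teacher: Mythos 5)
Your proposal is correct and follows essentially the same route as the paper: compute $\mathrm{vol}(\Gamma^c)=\prod_{i\in s(c)}x_i^{|c|_i-1}/(|c|_i-1)!$ from the decoupled linear system, group patterns by support $A$, count those with a fixed content vector via the perfect-shuffle bijection, and recognize the resulting series as $\frac{\partial^{|A|}}{\partial x_A}\mathrm{psh}_{|A|}(x_A)$, with symmetry read off from the formula. The only difference is presentational (the paper indexes by $n_i=|c|_i-1$ and matches monomials, whereas you sum directly over contents), and your explicit justification of the product-of-simplices decomposition is a detail the paper leaves implicit.
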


\begin{proof}Properties 1 and 2 are clear, let us prove 3. Recall that
$$\mathrm{vol}(\Gamma (x_1, \ldots, x_k)) \  =  \ \sum_{n=1}^{\infty}\ \sum_{c\in C(n,k)}\mathrm{vol}(\Gamma^c (x_1, \ldots, x_k)) ,$$ where the volume of the moduli space of directed paths with a fix pattern $\ c \in C(n,k) \ $ is given by
$$\mathrm{vol}(\Gamma^c (x_1, \ldots, x_k))  \ =  \
\prod_{j \in s(c)}\frac{x_j^{|c|_j - 1}}{(|c|_j -1)!}.$$ Thus a pattern $\ c \in C(n,k) \ $  with support $\ s(c)\ = \ A \  \subseteq \ [k], \ $ with $\ |A|\geq 2, \ $   contributes to the monomial
 $$\frac{x_1^{n_1}\cdots x_k^{n_k}}{n_1! \cdots n_k!},$$ if and only if $\ \ |c|_i\ = \ n_i+1\ \ $ for $\ i\in A,\ \ $  and $\ \  n_i=0 \ $ for $\ i \notin A.\ \ $
  Therefore the total contribution of the patterns with support $\ A \ $ to this monomial is given by
 $$|\mathrm{PSh}_{|A|}(n_A +1)|\prod_{j \in A}\frac{x_j^{n_j}}{n_j!},$$
 where $\ n_A\ $ is the vector obtained from the tuple $\ (n_1,...,n_k)\ $ by erasing the zero entries, and  $\ n_A +1 \ $ is the vector
 obtain from $\ n_A\ $ by adding $\ 1\ $ to each entry.\\

Summing  over the $\ n_j, \ $ and setting $\ x_A  =  (x_j)_{j \in A},\ $  we obtain that the total contribution
of the patterns with support $\ A \ $ to the volume of the moduli space of direct
ed paths is given by
$$\sum_{n_j \in \mathbb{N}; \ j\in A}| \mathrm{PSh}_{|A|}(n_A +1)|\prod_{j \in A}\frac{x_j^{n_j}}{n_j!}\  = \ \frac{\partial^{|A|}}{\partial x_A}\mathrm{psh}_{|A|} (x_A).$$
Adding over all possible supports $\ A \subseteq [k], \ $ with $\ |A| \geq 2, \ $ we obtain the desired result.\\

\noindent 4. For a permutation $\ \sigma \in S_n \ $ we have that
$\ \mathrm{vol}(\Gamma(x_{\sigma(1)}, \ldots, x_{\sigma(k)})\ $ is given by  $$ \ \underset{\underset{|A| \geq 2} {A \subseteq [k]}}{\sum}\frac{\partial^{|\sigma A|}}{\partial x_{\sigma A}}\mathrm{psh}_{|\sigma A|} (x_{\sigma A}) \ = \ \underset{\underset{|A| \geq 2} {A \subseteq [k]}}{\sum}\frac{\partial^{| A|}}{\partial x_{ A}}\mathrm{psh}_{| A|} (x_{ A}) \  =  \ \mathrm{vol}(\Gamma (x_1, \ldots, x_k)) .$$

\end{proof}

Next we consider direct
ed paths on the $k$-dimensional torus $\ T^k \ = \ S^1\times \cdots\times S^1. \ $ We use coordinates $\ (x_1, \ldots, x_k) \in \mathbb{R}^k\ $ representing the point $\ (e^{2\pi i x_1}, \ldots, e^{2\pi i x_k}) \in T^k. \ $ Consider the constant vector fields on $\ T^k \ $ given in local coordinates by
$$ \frac{\partial}{\partial x_1}, \ \ \cdots \ \  ,\frac{\partial}{\partial x_k}.$$
The moduli space of directed paths on $\ T^k\ $ from $\ (1,...,1)\ $ to $\ (e^{2\pi i x_1}, \ldots, e^{2\pi i x_k})\ $  exerted in time $\ t>0 \ $ is denoted by $\ \Gamma(e^{2\pi i x_1},\ldots, e^{2\pi i x_k},t).$ Recall that the set of one-direction paths  is denoted by $\ D(e^{2\pi i x_1},\ldots, e^{2\pi i x_k},t).$

\begin{thm}{\em Consider the directed manifold $\ (T^k,  \frac{\partial}{\partial x_1}, \ \cdots \  ,\frac{\partial}{\partial x_k}).\ $ \\

\noindent 1. For $\ x_1, \ldots, x_k \ \in \ (0,1], \ $ with at least two entries in $\ (0,1), \ $  we have that
$$ \mathrm{vol}(D(e^{2\pi i x_1},\ldots, e^{2\pi i x_k},t)) \ =  \ 0.$$

\noindent 2. For $\ x \ \in \ (0,1]\ $ we have that:
$$\mathrm{vol}(D(1, \dots, \underset{i\uparrow}{e^{2\pi i x}} , \ldots, 1,t)) \ =  \  \sum_{m=0}^{\infty}\delta(t,x_i + m). \ $$

\noindent 3. For $\ x_1, \ldots, x_k \ \in \ (0,1],\ $ with at least two entries in $\ (0,1), \ $ the moduli space $\ \Gamma(e^{2\pi i x_1},\ldots, e^{2\pi i x_k},t)\ $ of directed paths from $\ (1, \ldots,1)\ $ to  $\ (e^{2\pi i x_1},\ldots, e^{2\pi i x_k})\ $ is empty unless $\ t \ = \ x_1+ \cdots + x_k+m\ $ for some $\ m \geq 0, \ $ and in the latter case we have that:
       $$\mathrm{vol}(\Gamma(e^{2\pi i x_1},\ldots, e^{2\pi i x_k}, x_1+ \cdots + x_k+m)) \ \ = \ \
       \sum_{m_1+\ldots + m_k=m}\ \underset{\underset{|A| \geq 2} {A \subseteq [d]}}{\sum}\ \frac{\partial^{|A|}}{\partial x_A}\mathrm{psh}_{|A|} (x_A \ + \ m_A).$$
\noindent 4. $\mathrm{vol}(\Gamma(e^{2\pi i x_1},\ldots, e^{2\pi i x_k},  x_1+ \cdots + x_k+m))\ \ $ is a  symmetric function on $\ x_1, \ldots, x_k.$

}
\end{thm}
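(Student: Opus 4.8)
The plan is to deduce all four assertions from the already-established results on $\ \mathbb{R}^k\ $ by passing to the universal covering $\ \pi:\mathbb{R}^k \longrightarrow T^k, \ $ $\ \pi(x_1,\ldots,x_k)=(e^{2\pi i x_1},\ldots,e^{2\pi i x_k}). \ $ Since the flow of $\ \partial/\partial x_j\ $ on $\ T^k\ $ is simply the $\ \pi$-image of the translation flow of $\ \partial/\partial x_j\ $ on $\ \mathbb{R}^k, \ $ we have $\ \varphi^{T^k}_c=\pi\circ\varphi^{\mathbb{R}^k}_c\ $ for every pattern $\ c\ $ (this is the quotient construction of Section \ref{iipqm}, now for the free and properly discontinuous action of $\ \mathbb{Z}^k\ $ rather than a compact group). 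Together with unique path lifting for covering spaces, a pair $\ (c,s)\ $ is a directed path on $\ T^k\ $ from $\ (1,\ldots,1)\ $ to $\ (e^{2\pi i x_1},\ldots,e^{2\pi i x_k})\ $ in time $\ t\ $ if and only if $\ \varphi^{\mathbb{R}^k}_c((0,\ldots,0),s)=(x_1+m_1,\ldots,x_k+m_k)\ $ for a uniquely determined $\ m=(m_1,\ldots,m_k)\in\mathbb{Z}^k. \ $ Because directed paths on $\ \mathbb{R}^k\ $ issuing from the origin stay in $\ \mathbb{R}^k_{\geq 0}\ $ (Section \ref{iicvf}), only translates with $\ m\in\mathbb{Z}^k_{\geq 0}\ $ occur, and since $\ \pi\ $ is a local isometry the induced bijection
$$\Gamma^{T^k}(e^{2\pi i x_1},\ldots,e^{2\pi i x_k},t)\ \ \simeq \ \ \coprod_{m\in\mathbb{Z}^k_{\geq 0}} \Gamma^{\mathbb{R}^k}_{(0,\ldots,0),\,(x_1+m_1,\ldots,x_k+m_k)}(t)$$
is a volume-preserving homeomorphism carrying each pattern piece $\ \Gamma^c\ $ isometrically onto the corresponding one upstairs.

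Items 1 and 2 then follow immediately. A one-direction path uses a single vector field $\ \partial/\partial x_i, \ $ so only its $\ i$-th coordinate varies; if two of the target coordinates satisfy $\ e^{2\pi i x_j}\neq 1\ $ (equivalently $\ x_j\in(0,1)\ $) no such path can reach the target, hence $\ D(e^{2\pi i x_1},\ldots,e^{2\pi i x_k},t)=\emptyset, \ $ which is item 1. For item 2 the $\ i$-th coordinate of such a path is $\ s\mapsto e^{2\pi i s}, \ $ which equals $\ e^{2\pi i x}\ $ precisely when $\ s\in x+\mathbb{Z}_{\geq 0}; \ $ reading the count of these one-direction paths as a density in the variable $\ t\ $ gives $\ \sum_{m\geq 0}\delta(t,x+m).$

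For item 3, each summand $\ \Gamma^{\mathbb{R}^k}_{0,\,x+m}(t)\ $ is empty unless $\ t=(x_1+m_1)+\cdots+(x_k+m_k)=x_1+\cdots+x_k+|m|, \ $ which forces $\ t=x_1+\cdots+x_k+m\ $ with $\ m=|m|\geq 0. \ $ For such $\ t\ $ the set $\ \{m\in\mathbb{Z}^k_{\geq 0}:|m|=m\}\ $ is finite, so no new convergence question arises beyond the one already settled for $\ \mathbb{R}^k; \ $ moreover at least two of the $\ x_j\ $ lie in $\ (0,1), \ $ so at least two entries of $\ x+m\ $ are positive and the $\ \mathbb{R}^k$-theorem above applies to each summand. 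Substituting its formula into
$$\mathrm{vol}(\Gamma^{T^k}(e^{2\pi i x_1},\ldots,e^{2\pi i x_k},t))\ = \ \sum_{m_1+\cdots+m_k=m}\mathrm{vol}(\Gamma^{\mathbb{R}^k}(x_1+m_1,\ldots,x_k+m_k))$$
yields exactly the stated double sum. Item 4 is then immediate: a permutation $\ \sigma\in S_k\ $ acts on the right-hand side by simultaneously permuting the indices $\ (m_1,\ldots,m_k)\ $ — the constraint $\ \sum m_j=m\ $ is $\ S_k$-invariant — and relabelling the subsets $\ A, \ $ and invariance of the inner sum over $\ A\ $ is precisely item 4 of the $\ \mathbb{R}^k\ $ theorem.

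The one genuinely non-formal point is the first paragraph: checking that the lift/descent correspondence is a bijection of the \emph{full} moduli spaces that respects the decomposition into finite-dimensional pattern pieces and the Riemannian measure on each. This rests on unique path lifting together with the fact that $\ \pi\ $ is a local isometry intertwining the vector fields, so that the equation $\ \varphi_c^{T^k}(p,s)=q\ $ cutting out $\ \Gamma^c\ $ downstairs pulls back to the equations cutting out the pieces upstairs, and the metric inherited from $\ \Delta_n^t\ $ agrees on both. Everything after that is bookkeeping of covering translates plus an appeal to the planar theorem.
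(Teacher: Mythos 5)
Your argument is correct and is exactly the route the paper intends: the theorem is stated there without proof, but the proof of the $T^2$ analogue consists precisely of the one line ``we can compute indirect influences on the torus as sums of indirect influences on the plane,'' i.e.\ the covering-space decomposition over $\mathbb{Z}^k_{\geq 0}$-translates that you spell out in detail and then feed into the $\mathbb{R}^k$ volume formula. The only point worth flagging is the boundary case $x_j=1$, where the lifted path may also end at coordinate $x_j+m_j=0$ (never moving in direction $j$), a possibility that both your restriction to $m\in\mathbb{Z}^k_{\geq 0}$ and the paper's indexing $m_1+\cdots+m_k=m$ quietly exclude.
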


\section{Quantum Indirect Influences}\label{qii}

In this closing section we briefly describe how to extend the theory of indirect influences to the quantum settings.
We first consider indirect influences on Poisson manifolds \cite{arnold} from two different viewpoints.\\

Let $\ (M,\{\ , \ \} )\ $ be a Poisson manifold and $\ f_1,...,f_k:M \longrightarrow \mathbb{R}\ $ be $\ k \ $ smooth functions on $\ M. \ $
We obtain the directed manifold
$$(M, \ \{f_1, \ \}, \ \dots \ ,\{f_k, \ \}), $$
where $\ \{f_j, \ \}, \  $  the Hamiltonian vector field on $\ M \ $ generated by $\ f_j, \ $ is given in local coordinates by
$$\{f_j, \ \} \ =  \ \sum_{kl}\{x_k, x_l \}\frac{\partial f_j}{\partial x_k}\frac{\partial}{\partial x_l}.$$

Next let $\ C^{\infty}(M) \ $ be the infinite dimensional vector space of smooth functions on $\ M .\ $ We obtain the infinite dimensional directed manifold
$$(C^{\infty}(M),  \ \{f_1, \ \}, \ \dots \ ,\{f_k, \ \} )$$  where now we regard $\  \{f_j, \ \} \ $ as the vector field on  $\ C^{\infty}(M) \ $ assigning
to $f \in C^{\infty}(M) \ $ the vector $$  \{f_j, f \}\ \in \ T_fC^{\infty}(M) \ = \ C^{\infty}(M). $$
Given functions $\ f, g \ \in \ C^{\infty}(M) \ $  and a pattern $\ c \in D(n,k)\ $ the moduli space of directed paths
from $\ f \ $ to $\ g \ $ exerted in time $\ t>0 \ $ is given by
$$\Gamma_{f,g}^{c}(t) \  =  \ \{\ s \in \Delta_n^t  \ \ | \ \ \varphi_c(f,s)\ = \ g \ \} .$$
The flow generated by $\ \{f_j, \ \}\ $ on $\ M ,\ $ and the flow generated by $\ \{f_j, \ \}\ $ on $\ C^{\infty}(M)\ $ (allow us to use the same notation for vector fields in different spaces) are related by the identity
$$\varphi_j(f,s)(x) \ =  \ f(\varphi_j(x,s)) .$$
Since the vector fields  $\ \{f_j, \ \}\ $ are linear operators on $\ C^{\infty}(M),\ $ the flows generated by them -- assuming suitable
convergency properties -- can be written as
$$\varphi_j(f,s) \  =  \ e^{\{f_j, \ \}s}f.$$
Expanding the exponentials functions the iterated flow $\ \varphi_c(f,s) \ $ can be written as
$$\varphi_c(f,s) \  =  \ \sum_{k_0,...,k_n \in \mathbb{N}}\{f_{c(n)},\dots,f_{c(0)}, f\}_{k_0,\dots, k_n}\frac{s_0^{k_0}\cdots s_n^{k_n}}{k_0!\cdots k_n!},$$
where the symbol $\ \{g_{n},\dots,g_{1}, f\}_{k_1,\dots, k_n}  \ $ is defined recursively as follows:
$$\{g_1,f\}_0 \  =  \ f, \ \ \ \ \ \ \ \ \ \ \  \{g, f\}_{k+1} \  =  \ \{g, \{g, f\}_k \},$$
$$\{g_{n},\dots,g_{1}, f\}_{k_1,\dots,k_{n-1}, k_n} \  = \   \{g_n, \ \{g_{n-1}\dots,g_{1}, f\}_{k_1,\dots,k_{n-1}}\}_{k_n}.$$

\

From this viewpoint it is clear how to extend the theory of indirect influences to the quantum context \cite{glimjaffe,SimonFunctional}.
 Let $\ \mathcal{H} \ $ be a Hilbert space and $\ A_1,...,A_k\ $ be bounded Hermitian operators on $\ \mathcal{H}. \ $ \\

In the Heisenberg picture we consider the (possibly infinite dimensional) directed manifold
$$(\mathcal{B(H)},\  \frac{i}{\hbar}[A_1, \ ], \ \dots \ , \frac{i}{\hbar}[A_k, \ ]) $$ where $\ \mathcal{B(H)} \ $ is the algebra of bounded
 operators on $\ \mathcal{H}, \ $ $\ [\ , \ ] \ $ is the commutator of bounded operators, and $\ \frac{i}{\hbar}[A_1, \ ] \ $ is
 regarded as the vector field on $\ \mathcal{B(H)} \ $ assigning  to $\ B \in \mathcal{B(H)}\ $ the vector $$\ \frac{i}{\hbar}[A_1, B ] \ \in \ T_B\mathcal{B(H)}\ = \ \mathcal{B(H)}. $$
Given operators $\ B, C \ \in \ \mathcal{B(H)} \ $  and a pattern $\ c \in D(n,k),\ $ the moduli space of directed paths
from $\ B \ $ to $\ C \ $ exerted in time $\ t>0 \ $ is given by
$$\Gamma_{B,C}^{c}(t) \  =  \ \{\ s \in \Delta_n^t  \ \ | \ \ \varphi_c(B,s)\ = \ C \ \} ,$$
where the iterated flow $\ \varphi_c(B,s) \ $ is given by
$$\varphi_c(B,s) \ \ = \ \ e^{\frac{i}{\hbar}A_{c(n)}s_n}\cdots e^{\frac{i}{\hbar}A_{c(1)}s_1}e^{\frac{i}{\hbar}A_{c(0)}s_0}B
e^{-\frac{i}{\hbar}A_{c(0)}s_0}e^{-\frac{i}{\hbar}A_{c(1)s_1}}e^{-\frac{i}{\hbar}A_{c(n)}s_n} \ \ = $$
$$\sum_{k_0,...,k_n \in \mathbb{N}}(\frac{i}{\hbar})^{k_0+ \cdots + k_n}\big[A_{c(n)},\dots,A_{c(0)}, B\big]_{k_0,\dots, k_n}\frac{s_0^{k_0}\cdots s_n^{k_n}}{k_0!\cdots k_n!}$$
where the symbols $$ \big[A_{c(n)},\dots,A_{c(0)}, B\big]_{k_0,\dots, k_n}$$ are defined as in the Poisson case replacing brackets $\ \{ \ , \ \}\ $ by commutators $\ [\ , \ ].$
Clearly, we can apply this constructions in the context of deformation quantization as well \cite{kont}.\\

In the Schr$\ddot{\mbox{o}}$dinger picture we consider the (possibly infinite dimensional) directed manifold
$$(\mathcal{H},\  -\frac{i}{\hbar}A_1, \ \dots \ , -\frac{i}{\hbar}A_k), $$
where $\ -\frac{i}{\hbar}A_j\ $ is regarded as the vector field assigning to $\ v \in \mathcal{H} \ $ the vector
$$-\frac{i}{\hbar}A_j(v) \ \in \ T_v\mathcal{H }\ = \ \mathcal{H}. $$
Given $\ v, w \ \in \ \mathcal{H} \ $  and a pattern $\ c \in D(n,k),\ $ the moduli space of directed paths
from $\ v \ $ to $\ w \ $ exerted in time $\ t>0 \ $ is given by
$$\Gamma_{v,w}^{c}(t) \  =  \ \{\ s \in \Delta_n^t  \ \ | \ \ \varphi_c(v,s)\ = \ w \ \} .$$
The iterated flow $\ \varphi_c(v,s) \ $ is given by
$$\varphi_c(v,s) \  =  \ \sum_{k_0,...,k_n \in \mathbb{N}}(-\frac{i}{\hbar})^{k_0+ \cdots + k_n}\big(A_{c(n)}^{k_n},\dots,A_{c(0)}^{k_0}v\big) \frac{s_0^{k_0}\cdots s_n^{k_n}}{k_0!\cdots k_n!}.$$

\subsection*{Acknowledgement}
Our thanks to Tom Koornwinder whose comments and suggestions help us to make substantial improvements on an early version of this work.

\noindent lnrdcano@gmail.com \\
\noindent Departamento de Matem\'aticas, Universidad Sergio Arboleda, Bogot\'a, Colombia\\

\noindent ragadiaz@gmail.com\\
\noindent Departamento de Matem\'aticas, Pontificia Universidad Javeriana, Bogot\'a, Colombia

\end{document}